\documentclass[11pt,oneside]{article}

\usepackage[T1]{fontenc}
\usepackage[margin=1in]{geometry}
\usepackage{mathpazo}
\usepackage[svgnames]{xcolor}

\newcommand{\red}[1]{\textcolor{red}{#1}}

\newif\ifshowtodos
\showtodostrue

\newif\ifshownotes
\shownotestrue

\newif\ifshowworking
\showworkingtrue

\usepackage{booktabs}
\usepackage{caption}
\usepackage{graphicx}

\let\oldincludegraphics\includegraphics%
\renewcommand{\includegraphics}[2][]{\IfFileExists{#2}{\oldincludegraphics[#1]{#2}}{\red{[FILE NOT FOUND]}}}

\usepackage{amsmath}
\usepackage{amssymb}
\usepackage{amsthm}

\DeclareMathOperator*{\argmax}{\arg\max}

\newcommand{\etahat}{\hat{\vphantom{\pi}\eta}}
\newcommand{\FB}{^\text{FB}}
\newcommand{\IC}{^\text{MIC}}

\newcommand{\parfrac}[2]{\frac{\partial #1}{\partial #2}}
\newcommand{\R}{\mathbb{R}}
\newcommand{\SB}{^\text{SB}}
\newcommand{\shat}{\hat{\vphantom{\pi}s}}
\newcommand{\Tcal}{\mathcal{T}}
\newcommand{\ubar}{\overline{u}}
\newcommand{\ybar}{\overline{y}}

\let\oldleft\left
\let\oldright\right
\renewcommand{\left}{\mathopen{}\mathclose\bgroup\oldleft}
\renewcommand{\right}{\aftergroup\egroup\oldright}

\newtheorem{lemma}{Lemma}
\newtheorem{proposition}{Proposition}
\newtheorem{theorem}{Theorem}
\theoremstyle{definition}

\newif\ifbodyproofs
\bodyproofstrue

\let\oldparagraph=\paragraph
\renewcommand{\paragraph}[1]{\oldparagraph{#1.}}

\usepackage{needspace}
\AtBeginEnvironment{lemma}{\Needspace*{6\baselineskip}}
\AtBeginEnvironment{proposition}{\Needspace*{4\baselineskip}}
\AtBeginEnvironment{theorem}{\Needspace*{4\baselineskip}}

\usepackage{natbib}
\newcommand\citeapos[1]{\citeauthor{#1}'s (\citeyear{#1})}

\usepackage[colorlinks, linkcolor=DarkRed, citecolor=DarkBlue]{hyperref}
\usepackage[capitalize, nameinlink]{cleveref}

\crefname{appsec}{Appendix Section}{Appendix Sections}
\crefname{claim}{Claim}{Claims}
\crefname{figure}{Figure}{Figures}

\bodyproofsfalse

\title{
Course design in the age of AI
}
\author{%
Benjamin Davies\thanks{
Department of Economics, Stanford University; bldavies@stanford.edu.
I thank seminar participants at Stanford and UC Irvine for helpful comments.
}
}
\date{Draft version: \today}

\begin{document}

\maketitle

\begin{abstract}
    \noindent
    I develop a model of learning-by-doing and course design, and use it to study the impacts of artificial intelligence (AI).
    A myopic student faces a sequence of tasks that he can work on or delegate to AI.
    Work requires costly effort but builds skill; delegation requires no effort but builds no skill.
    A teacher designs the task sequence (``course'') to maximize the student's skill development, given his choices to work or delegate.
    Without AI, the teacher makes earlier tasks more effort-intensive and later tasks more skill-intensive.
    With AI, the teacher must redesign early tasks to induce effort, leading to less skill development.
    If AI complements effort, then improvements in AI quality make high-skill students learn faster but low-skill students learn slower.
    
    \vskip\baselineskip
    \noindent{\itshape JEL classification}: C61, D24, I21, J24\par
    \noindent{\itshape Keywords}: AI, course design, delegation, education, learning, human capital, skill
\end{abstract}

\clearpage
\section{Introduction}

Using artificial intelligence (AI) can boost productivity \citep{Brynjolfsson-etal-2025-QJE,Dillon-etal-2026-AERInsights,Noy-Zhang-2023-Science}, but can also inhibit learning and cognition \citep{Bastani-etal-2025-PNAS,Kosmyna-etal-2025-,Liu-etal-2026-,Shaw-Nave-2026-,Shen-Tamkin-2026-,Stromberg-etal-2026-}.%
\footnote{
Not all evidence points to negative effects of AI on learning.
For example, \cite{Huntington-Klein-2026-} estimates null effects of AI access on high school students' test scores, while \cite{Chung-etal-2026-} find that personalized AI tutoring raises students' scores on exams completed without AI assistance.
}
In particular, using AI can ``short-circuit the productive struggle essential for learning'' \citep[p.~2]{Poulidis-etal-2025-}.

At the same time, AI use in schools is widespread \citep{Campos-Singleton-2026-,Doss-etal-2025-}.
Students use AI to augment their learning and automate their coursework, and view AI as beneficial \citep{Contractor-Reyes-2026-}.
However, AI allows students to delegate coursework that was intended as an opportunity to develop skill.
\cite{Chirikov-2026-} offers empirical evidence of such delegation, while \cite{Bastani-etal-2025-PNAS} and \cite{Stromberg-etal-2026-} offer evidence that students who delegate develop less skill.%
\footnote{
AI (mis)use also threatens to invalidate educational assessments \citep{Chirikov-etal-2026-Science}:
if students delegate tasks to AI, then performance on those tasks cannot be used to infer students' ability.
}

These findings identify a problem, but not its solution.
One response is to ban students from using AI.
However, bans may not be enforceable, and they forgo the benefits of using AI to augment learning.
Banning AI would take coursework as given and change students' access to AI.
An alternative is to do the opposite: take AI access as given and change students' coursework.
This raises a question that prior work leaves open:
given that students can delegate to AI, how should teachers redesign courses to ensure students develop skill?

To answer this question,
I develop a model of learning-by-doing and course design.
I use the model to analyze the impacts of AI on optimal course design and skill development.
My model and results offer guidance on redesigning courses for the age of AI.

My main contribution is conceptual:
I provide a framework for thinking about educational responses to AI.
In particular, I frame courses as designable objects and delegation as a constraint on course design.
This framing distinguishes my paper from other theoretical papers on how AI impacts learning \citep{Adilov-Cline-2026-StudMicro,Fudenberg-Levine-2026-,Garicano-Rayo-2026-,Huang-Vishnoi-2026-,Peterson-2025-}.
In addition, my theoretical results align with empirical evidence and offer new, empirically testable predictions.

\paragraph{Model overview}

A myopic student (he) faces a sequence of tasks that he can work on or delegate to AI.
Working requires costly effort but builds skill.
Delegation does not require effort but does not build skill.
A teacher (she) designs a task sequence (i.e., a ``course'') that maximizes the student's overall skill development given his best-responses (i.e., his consequent choices to work or delegate).

When the student works on a task, he combines effort and skill to produce an output.
I interpret ``effort'' as the cognitive resources spent applying a production method and ``skill'' as the ability to identify suitable methods.
Effort is a costly productive input but skill is costless.
Moreover, effort and skill are complements because a student with more skill can allocate effort more efficiently.
The student builds more skill when he exerts more effort, so the teacher designs her course to induce as much effort as possible.

I characterize tasks by their ``effort intensity'': the extent to which production relies on effort vis-\`a-vis skill.
Effort-intensive tasks reward applying given methods (e.g., ``use the power rule to differentiate these polynomial functions''), while skill-intensive tasks reward identifying appropriate methods (e.g., ``build a tractable economic model'').
The effort intensity of a task determines how much the student produces by exerting effort, which determines whether he prefers to work or delegate.

The central conflict in my model is between the student's and teacher's incentives.
The student values output, which he can produce by working or delegating.
The student does not value skill development because he is myopic.
In contrast, the teacher values skill development only.
If the student works, then he produces output and gains skill.
This is good for him and good for the teacher.
In contrast, if the student delegates, then he produces output but does not gain skill.
This is (myopically) good for him but bad for the teacher.
So she wants to prevent the student from delegating, using the only design instruments at her disposal: tasks' effort intensities.
However, her choices of effort intensities are constrained by the student's agency: given a task, he is free to choose whether to work or delegate.

\paragraph{Main results}

First, I consider the ``first-best'' case in which the student always prefers to work because he always produces more than AI.
Then the teacher optimally designs tasks that become gradually less effort-intensive and more skill-intensive (\cref{thm:fb-solution}).
This is because the teacher maximizes the student's skill development by maximizing his effort.
On earlier tasks, the student has less skill and so the teacher induces effort by making production rely on effort.
On later tasks, the student has more skill, which makes effort more productive because effort and skill are complementary inputs.
The teacher capitalizes on this complementarity by making tasks less effort-intensive and more skill-intensive.

Second, I consider the ``second-best'' case in which the student may prefer to delegate because AI may produce more than him.
Then the first-best course design is not incentive-compatible and the teacher must distort it to induce effort.
She does so by making earlier tasks more skill-intensive relative to the first-best.
Her second-best design comprises tasks with increasing-then-decreasing effort intensities (\cref{thm:sb-solution}).
This ensures the student develops some skill.
However, he develops less than under the first-best design.
When the teacher designs a second-best course, the student gains more skill overall when he starts with more skill and when AI is less productive (\cref{thm:sb-overall-gain}).
Intuitively, skillful students can produce more on their own, are less tempted to delegate, and so can be induced to work with less distortion.
In contrast, making AI more productive tightens the incentive constraints on the teacher's course design, and so she must distort it further from the first-best to induce effort.

Finally, I extend my model to allow for the possibility that AI complements effort.%
\footnote{
For example, the student could use AI to perform effort-intensive subtasks he has used skill to identify, rather than delegate entire tasks and let the AI identify subtasks on its own.
}
I capture this possibility by introducing an ``AI quality'' parameter that increases the marginal product of effort \emph{and} the payoff from full delegation.
Improvements in AI quality make the student gain skill faster when he has high skill but slower when he has low skill (\cref{thm:complement-sb-effort}).
Intuitively, when the student has high skill, he uses AI to enhance his effort and so quality improvements make him more willing to work.
In contrast, when the student has low skill, he uses AI to avoid effort and so quality improvements make him more tempted to delegate.
The different uses of AI lead to different learning outcomes, consistent with empirical evidence \citep{Contractor-Reyes-2026-a,McNichols-etal-2026-Proc.LAK2616thInt.Learn.Anal.Knowl.Conf.,Pallant-etal-2026-StudHighEd,Shen-Tamkin-2026-,Zamarripa-2026-}.

\paragraph{Related literature}

This paper is closest to recent theoretical work on AI and learning.
\cite{Fudenberg-Levine-2026-} show that early access to AI can weaken the incentive to work on tasks that build and maintain expertise.
\citeauthor{Fudenberg-Levine-2026-}'s agent chooses between exogenous tasks; in contrast, I treat tasks as endogenous and study how they should be designed.
\cite{Garicano-Rayo-2026-} study AI's impact on apprenticeships, where ``juniors pay for training by doing menial work'' that AI performs well.
In \citeauthor{Garicano-Rayo-2026-}'s model, the planner is a master who controls the pace of knowledge transfer, and the agent's outside option is to quit and use AI independently.
In contrast, in my model, the planner is a teacher who designs the task sequence and the agent's outside option is to delegate within a task.
\cite{Huang-Vishnoi-2026-} study the conditions under which repeated delegation makes skill collapse to a low equilibrium; I allow a designer to prevent such collapses.
\cite{Peterson-2025-} considers an educational planner who misreads the labor-market return to easily taught skills.
\citeauthor{Peterson-2025-}'s friction is informational, whereas mine is incentive-based: the teacher in my model must prevent delegation.

My focus on incentives echoes the theoretical literature on educational standards.
\cite{Becker-Rosen-1992-EER}, \cite{Costrell-1994-AER}, and \cite{Betts-1998-AER} view education as a principal-agent problem: teachers set standards or grading policies to motivate effort from students.
\cite{Dubey-Geanakoplos-2010-GEB} show that coarse grades can motivate more effort than exact scores when students care about their rank.
\cite{Adilov-Cline-2026-StudMicro} study how the possibility of AI-enabled cheating impacts optimal grading policies, showing that it precludes teachers from using high grade boundaries to motivate effort.
\citeauthor{Adilov-Cline-2026-StudMicro} also show that effort falls as AI quality rises.
In contrast, I show effort can rise or fall, depending on whether the student uses AI as a complement or substitute for effort (see Theorem~\ref{thm:complement-sb-effort}).
My teacher also motivates effort using a different instrument: she designs tasks, rather than the rewards attached to them.%
\footnote{
That a principal can motivate an agent by designing tasks, rather than providing transfers, is a central idea in the literature on multitask incentive problems \citep{Holmstrom-Milgrom-1991-JLEO}.
I apply the idea to an educational setting.
}

A separate literature treats how and what students learn as an optimization problem.
\cite{Lazear-2001-QJE} studies optimal class sizes when students can disrupt their classmates' learning.
\cite{Ellison-Pathak-2025-} model course design as the allocation of teaching time across skills.
My model shares \citeauthor{Ellison-Pathak-2025-}'s premise that courses can be chosen optimally, but I focus on a different margin (effort intensity, rather than skill coverage), and introduce endogenous student effort and skill dynamics.
Such dynamics appear in \citeapos{Shaikh-2025-} model of course design.
\citeauthor{Shaikh-2025-} shows that if course content is sufficiently cumulative, then teachers should put more grade weight on earlier tasks so that students are motivated to build foundational skills.
My model complements \citeauthor{Shaikh-2025-}'s by endogenizing the task sequence and allowing students to delegate to AI.

Finally, this paper draws on the literatures on human capital formation and learning-by-doing.
\cite{Schultz-1960-JPE} frames education as investment in human capital;
\cite{Becker-1962-JPE}, \cite{Mincer-1962-JPE}, and \cite{Ben-Porath-1967-JPE} present formal theories of such investment.
\cite{Heckman-2000-ResEcon} and \cite{Cunha-Heckman-2007-AER} emphasize early investment and dynamic complementarities (``skill begets skill'').
Such complementarities arise in my model: exerting effort builds skill, which raises the future productivity of effort and leads to more skill development.
Similar complementarities arise in the economic literature on learning-by-doing, surveyed by \cite{Thompson-2010-HandbookoftheEconomicsofInnovation}.
Its central idea is that knowledge and skill accumulate via experience, which can take the form of investment \citep{Arrow-1962-REStud}, production \citep{Rosen-1972-QJE}, or technology use \citep{Jovanovic-Nyarko-1996-ECTA}.
Likewise, in my model, skill accumulates via effortful production.
However, I depart from the learning-by-doing literature by introducing ``doing-without-learning'' as an outside option.

\paragraph{Roadmap}

The paper is organized as follows.
\cref{sec:model} describes the model.
\cref{sec:br} characterizes the student's best-responses.
\cref{sec:solution} characterizes the teacher's optimally designed courses.
\cref{sec:complement} considers the possibility that AI complements effort.
\cref{sec:assumptions} discusses my modeling assumptions.
\cref{sec:conclusion} concludes.
\cref{app:proofs} contains proofs of my mathematical claims.

\section{Model}
\label{sec:model}

There is a myopic student (he) and a forward-looking teacher (she).
The student faces a sequence of tasks~$t\in\Tcal\equiv\{0,\ldots,T-1\}$ with~$T\ge2$.
He can work on a task or delegate it to an AI assistant.
Work requires costly effort but builds skill.
Delegation requires no effort but builds no skill.
The teacher designs tasks that maximize the student's skill development given his best-responses.

\paragraph{Working}

Working on a task involves combining effort and skill to produce an output.
I interpret ``effort'' as the cognitive resources spent applying a production method, and ``skill'' as the ability to identify suitable methods and transfer them across tasks.%
\footnote{
\label{fn:problem-domain}%
For example, a student could transfer methods across tasks by recognizing those tasks as instances of a common class.
This would align with \citeauthor{Anzai-Simon-1979-PsychRev}'s (\citeyear[p.~124]{Anzai-Simon-1979-PsychRev}) discussion of problem-solving strategies:
``Since the most efficient strategies for a particular task may not be the most obvious ones to someone encountering the task for the first time ...
Initially, the solver might hit on one of the `obvious' strategies, and then gradually progress to more efficient ones with increasing familiarity with the problem domain.''
}
Effort is a costly productive input but skill is costless.%
\footnote{
Intuitively, the student either knows how to approach a problem efficiently or does not, depending on his familiarity with the problem domain (see~\cref{fn:problem-domain}).
}
Moreover, effort and skill are complementary inputs because a student with higher skill can allocate effort more efficiently.

I characterize tasks by their ``effort intensity'': the extent to which production relies on effort vis-\`a-vis skill.
Effort-intensive tasks reward applying given methods, while skill-intensive tasks reward identifying appropriate methods.

Let~$s_t>0$ denote the student's skill level when he faces task~$t\in\Tcal$ and let~$\eta_t\in(0,1]$ denote the task's effort intensity.
If the student works on task~$t$ and exerts effort~$x_t\ge0$, then he produces output
\begin{equation}
    \label{eq:output}
    y(x_t,s_t,\eta_t)\equiv x_t^{\eta_t}s_t^{1-\eta_t},
\end{equation}
pays cost~$c(x_t)\equiv x_t^2/2$,
and receives payoff
\begin{equation}
    \label{eq:payoff}
    u(x_t,s_t,\eta_t)\equiv\pi\,y(x_t,s_t,\eta_t)-c(x_t),
\end{equation}
where~$\pi>0$ is a task-invariant ``motivation'' parameter that I explain below.

The Cobb-Douglas production function~\eqref{eq:output} captures the complementarity between effort and skill, and has a single argument---the effort intensity~$\eta_t$---that controls how much production relies on effort vis-\`a-vis skill.%
\footnote{
If~$\eta_t=0$, then output~$s_t$ is independent of effort~$x_t$ and the student optimally chooses~$x_t=0$.
But then he builds zero skill, which is never optimal for the teacher when she solves her problem~\eqref{eq:problem}.
Thus, it is without loss to bound~$\eta_t$ away from zero.
}
The cost function~$c(x_t)$ captures the cognitive cost of exerting effort.
I discuss my choice of production and cost functions in \cref{sec:cobb-douglas,sec:quadratic-costs}.

The payoff~\eqref{eq:payoff} captures a tension between wanting to produce and not wanting to exert effort.
The student places value~$\pi\,y(x_t,s_t,\eta_t)$ on the output he produces by working.
The parameter~$\pi$ indexes his motivation to produce relative to the cost of exerting effort.%
\footnote{
Specifically, the parameter~$\pi$ controls the per-unit trade-off between output and effort: the higher is~$\pi$, the more motivated is the student to produce a unit of output relative to the cost of exerting a unit of effort.
}
This motivation may be intrinsic (e.g., he derives satisfaction from producing) or extrinsic (e.g., his output determines his grade).
I assume~$\pi$ is constant across tasks and discuss this assumption in my conclusion.

\paragraph{Delegation}

If the student delegates task~$t$, then he exerts zero effort, pays zero cost, produces output~$\ybar\ge0$, and receives payoff~$\ubar\equiv\pi\ybar$.
This payoff is the value the student places on output generated by an AI assistant (e.g., via prompting an AI chatbot with ``here is my homework, give me the answers'').%
\footnote{
One can also interpret~$\ybar$ as an expected output.
For example, the student may know only the AI assistant's typical performance over a range of tasks, rather than its exact performance on a specific task.
Alternatively, the AI-generated output may contain random errors (e.g., due to ``hallucination'').
Interpreting~$\ybar$ as an expectation does not change my analysis or results.
}

The AI-generated output~$\ybar$ does not depend on the student's skill level~$s_t$ or the task's effort intensity~$\eta_t$.
I discuss this assumption in \cref{sec:constant-ybar}.

\paragraph{Best-responses}

On each task~$t\in\Tcal$, the student's best-response effort
\[ x^*(s_t,\eta_t)\in\argmax_{x\ge0}\,u(x,s_t,\eta_t) \]
yields payoff
\[ u^*(s_t,\eta_t)\equiv\max_{x\ge0}\,u(x,s_t,\eta_t). \]
He works on task~$t$ if~$u^*(s_t,\eta_t)\ge\pi\ybar$ and delegates task~$t$ otherwise.

\paragraph{Skill}

The student has initial skill~$s_0>0$.
On each task~$t\in\Tcal$, he gains skill~$\Delta(s_t,\eta_t)\equiv s_{t+1}-s_t$ proportional to the effort he exerts:
\begin{equation}
    \label{eq:accumulation}
    \Delta(s_t,\eta_t)\equiv \begin{cases}
        r\,x^*(s_t,\eta_t) & \text{if student works} \\
        0 & \text{if student delegates}
    \end{cases}
\end{equation}
with~$r>0$ a fixed skill accumulation rate.
I discuss my specification of~\eqref{eq:accumulation} in \cref{sec:accumulation-spec}.

\paragraph{Teacher's problem}

The teacher chooses effort intensities~$(\eta_t)_{t\in\Tcal}$ that maximize the student's overall skill gain
\[ s_T-s_0=\sum_{t\in\Tcal}\Delta(s_t,\eta_t) \]
given his best-responses.%
\footnote{
Thus, the teacher maximizes the course's ``value-added.''
This is consistent with the empirical literature that uses value-added to evaluate teachers \cite[see, e.g.,][]{Bacher-Hicks-Koedel-2023-HandbookoftheEconomicsofEducation}.
}
Formally, she solves
\begin{equation}
    \label{eq:problem}
    \max_{(\eta_t)_{t\in\Tcal}}\,(s_T-s_0)\ \ 
    \text{subject to}\ 
    \eqref{eq:accumulation}
    \ \text{and}\ 
    0<\eta_t\le1\ 
    \text{for each}\ t\in\Tcal.
    \tag{P}
\end{equation}
I characterize the solution to~\eqref{eq:problem} in \cref{sec:solution}.%
\footnote{
A solution to~\eqref{eq:problem} exists for all parameter values because choosing~$\eta_t=1$ for each~$t\in\Tcal$ is always feasible.
\cref{thm:sb-solution} implies the solution is unique when~\eqref{eq:sb-restriction-1} and~\eqref{eq:sb-restriction-2} hold.
If they do not hold, then~\eqref{eq:problem} can have many solutions because, for example, the student could have final skill~$s_T=s_0$ regardless of the chosen sequence~$(\eta_t)_{t\in\Tcal}$.
}

\paragraph{Summary}

\begin{table}
    \small
    \centering
    \caption{Model summary}
    \label{tab:model}
    \begin{tabular}{cll}
        \toprule
        Object & Description & Type \\
        \midrule
        $T$ & Number of tasks & Primitive \\
        $s_0$ & Initial skill level & Primitive \\
        $\pi$ & Student's per-unit valuation of output & Primitive \\
        $r$ & Skill accumulation rate & Primitive \\
        $\ybar$ & Output generated by AI & Primitive \\
        $s_t$ & Skill level when student faces task~$t$ & State variable \\
        $\eta_t$ & Effort intensity of task~$t$ & Control variable \\
        \bottomrule
    \end{tabular}
\end{table}
\cref{tab:model} summarizes the exogenous and endogenous objects in my model.
The number of tasks~$T$, initial skill level~$s_0>0$, motivation parameter~$\pi>0$, skill accumulation rate~$r>0$, and AI-generated output~$\ybar\ge0$ are exogenous primitives.
(The delegation payoff~$\ubar\equiv\pi\ybar$ is determined by~$\pi$ and~$\ybar$.)
The student's skill level~$s_t>0$ when he faces task~$t$ is a state variable; the task's effort intensity~$\eta_t\in(0,1]$ is a control variable, chosen by the teacher.
The sequences~$(s_t)_{t\ge1}$ and~$(\eta_t)_{t\ge0}$ are endogenous: they depend on the student's choices, the teacher's choices, and the primitives.
Consequently, the student's best-response efforts~$x^*(s_t,\eta_t)$ and payoffs~$u^*(s_t,\eta_t)$ also depend on the primitives.
I suppress this dependence from my discussion to economize on notation.%
\footnote{
The endogenous objects also depend on the AI quality parameter~$q$, which I introduce in \cref{sec:complement}.
I suppress the objects' dependence on~$q$ until \cref{sec:complement} but make the dependence explicit in my proofs.
}

\section{Best-responses}
\label{sec:br}

Suppose the student has skill~$s>0$ and faces a task with effort intensity~$\eta\in(0,1]$.
If he works on the task, then he exerts best-response effort~$x^*(s,\eta)$ and yields payoff~$u^*(s,\eta)$.
I characterize this effort and payoff in \cref{lem:br-effort,lem:br-payoff}.

\begin{lemma}[Best-response efforts]
    \label{lem:br-effort}
    Given a skill level~$s>0$ and effort intensity~$\eta\in(0,1]$, the student exerts best-response effort
    \begin{equation}
        \label{eq:br-effort}
        x^*(s,\eta)=s\left(\frac{\pi\eta}{s}\right)^{1/(2-\eta)}.
    \end{equation}
    This effort is
    (i)~positive,
    (ii)~increasing in the motivation parameter~$\pi$,
    (iii)~increasing in~$s$ when~$\eta<1$ and constant in~$s$ when~$\eta=1$, and
    (iv)~increasing in~$\eta$ if
    \begin{equation}
        \label{eq:br-effort-intensity-condition}
        \eta\exp\left(\frac{2}{\eta}-1\right)>\frac{s}{\pi}
    \end{equation}
    and non-increasing in~$\eta$ otherwise.
\end{lemma}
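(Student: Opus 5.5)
The plan is to solve the student's one-dimensional problem directly and then verify (i)--(iv) by differentiating the closed form; most of the work is in (iv).

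\emph{Deriving the effort formula.} Fix $s>0$ and $\eta\in(0,1]$. The payoff $u(x,s,\eta)=\pi x^\eta s^{1-\eta}-x^2/2$ is strictly concave in $x$ on $(0,\infty)$: the first term is concave because $\eta\le1$, and the second is strictly concave. When $\eta<1$ the marginal payoff $\pi\eta x^{\eta-1}s^{1-\eta}-x$ tends to $+\infty$ as $x\downarrow0$. When $\eta=1$ it equals $\pi>0$ at $x=0$. Either way the maximizer is interior and is pinned down by the first-order condition
\[ \pi\eta x^{\eta-1}s^{1-\eta}=x, \qquad\text{i.e.}\qquad x^{2-\eta}=\pi\eta s^{1-\eta}. \]
Solving gives $x^*=(\pi\eta)^{1/(2-\eta)}s^{(1-\eta)/(2-\eta)}$. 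Since $(1-\eta)/(2-\eta)=1-1/(2-\eta)$, this is exactly \eqref{eq:br-effort}. Claims (i)--(iii) then follow immediately from this form:
\begin{itemize}
\item (i) holds because $x^*$ is a product of positive powers of positive numbers.
\item (ii) holds because $x^*$ is increasing in $\pi$, with exponent $1/(2-\eta)>0$.
\item (iii) holds because the exponent on $s$ is $(1-\eta)/(2-\eta)$, which is positive for $\eta<1$ and zero for $\eta=1$.
\end{itemize}

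\emph{Monotonicity in $\eta$.} For (iv) I will work with the logarithm:
\[ \ln x^*=\frac{\ln(\pi\eta)+(1-\eta)\ln s}{2-\eta}. \]
Differentiating in $\eta$ via the quotient rule gives
\[ \frac{\partial\ln x^*}{\partial\eta}=\frac{(2-\eta)\bigl(1/\eta-\ln s\bigr)+\ln(\pi\eta)+(1-\eta)\ln s}{(2-\eta)^2}. \]
The $\ln s$ terms combine to $-\ln s$, so the numerator simplifies to
\[ \frac{2}{\eta}-1+\ln\frac{\pi\eta}{s}. \]
This is positive iff $\ln\bigl(\eta\exp(2/\eta-1)\bigr)>\ln(s/\pi)$, which is condition \eqref{eq:br-effort-intensity-condition}.

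\emph{The main obstacle.} The delicate step is the precise meaning of "increasing if the condition holds and non-increasing otherwise." The sign computation above is pointwise in $\eta$. I will therefore state it as the local sign of the partial derivative at the given $\eta$. I will also note that the left-hand side of \eqref{eq:br-effort-intensity-condition} has log-derivative $1/\eta-2/\eta^2<0$ on $(0,1]$, so it is strictly decreasing in $\eta$. Hence the set of $\eta$ where the condition holds is an initial interval $(0,\hat\eta)$, and $x^*$ is increasing there and non-increasing on the rest of $(0,1]$. This makes the local sign statement and the monotonicity statement agree, and the remaining work is just careful bookkeeping.
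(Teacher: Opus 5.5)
Your proposal is correct and follows essentially the same route as the paper. The paper also uses concavity and the first-order condition to get the closed form, then differentiates $\log x^*$ in $\pi$, $s$, and $\eta$ to read off the signs. Your added observation, that $\eta\exp(2/\eta-1)$ is strictly decreasing so the condition holds on an initial interval, is exactly the remark the paper makes at the start of its first-best analysis (Section~4.1.1).
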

\ifbodyproofs\subsection{Proof of \texorpdfstring{\cref{lem:br-effort}}{Lemma~\ref{lem:br-effort}}}

\cref{lem:br-effort} is the special case of \cref{lem:complement-br-effort} with~$q=1$.

\begin{lemma}[Best-response efforts with complementary AI]
    \label{lem:complement-br-effort}
    Given a skill level~$s>0$ and effort intensity~$\eta\in(0,1]$, the student exerts best-response effort~$x^*(s,\eta;q)$ defined as in~\eqref{eq:complement-br-effort}.
    This effort is
    (i)~positive,
    (ii)~increasing in the motivation parameter~$\pi$,
    (iii)~increasing in~$s$ when~$\eta<1$ and constant in~$s$ when~$\eta=1$,
    (iv)~increasing in~$\eta$ if
    \begin{equation}
        \label{eq:complement-br-effort-intensity-condition}
        \eta\exp\left(\frac{2}{\eta}-1\right)>\frac{s}{\pi q^2}
    \end{equation}
    and non-increasing in~$\eta$ otherwise, and
    (v)~increasing in the AI quality parameter~$q$.
\end{lemma}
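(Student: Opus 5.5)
The plan is to solve the student's first-order condition in closed form and then read off each comparative static from the logarithm of the resulting expression. I will work with the specification of \cref{sec:complement}, in which effort enters production as~$qx$. The output is then~$(qx)^{\eta}s^{1-\eta}$, and the payoff is~$\pi(qx)^{\eta}s^{1-\eta}-x^2/2$. This specification is my inference rather than something I have seen stated: it is the one that makes condition~\eqref{eq:complement-br-effort-intensity-condition} come out exactly. Setting~$q=1$ recovers~\eqref{eq:payoff}, so \cref{lem:br-effort} follows as the special case.

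First, I derive~\eqref{eq:complement-br-effort}. For~$\eta<1$ the payoff is strictly concave in~$x\ge0$. Its marginal value~$\pi\eta q^\eta x^{\eta-1}s^{1-\eta}$ tends to~$+\infty$ as~$x\to0^+$, while the marginal cost is~$x$. The unique maximizer is therefore interior and solves
\[ x^{2-\eta}=\pi\eta q^{\eta}s^{1-\eta}, \qquad\text{that is,}\qquad x^*(s,\eta;q)=s\left(\frac{\pi\eta q^{\eta}}{s}\right)^{1/(2-\eta)}. \]
For~$\eta=1$ the payoff is~$\pi qx-x^2/2$, which is strictly concave with maximizer~$x^*=\pi q$; this agrees with the formula above. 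Part~(i) is then immediate, since every factor is positive.

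Next, I take logarithms:
\[ \log x^*=\log s+\frac{\log(\pi\eta)+\eta\log q-\log s}{2-\eta}. \]
Parts~(ii), (iii) and~(v) follow by differentiating this expression.
\begin{itemize}
\item[(ii)] The derivative with respect to~$\log\pi$ is~$1/(2-\eta)>0$.
\item[(v)] The derivative with respect to~$\log q$ is~$\eta/(2-\eta)>0$.
\item[(iii)] The derivative with respect to~$\log s$ is~$1-1/(2-\eta)=(1-\eta)/(2-\eta)$. This is positive when~$\eta<1$ and zero when~$\eta=1$.
\end{itemize}

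Part~(iv) is the only step needing care. Differentiating~$\log x^*$ in~$\eta$ by the quotient rule and multiplying through by~$(2-\eta)^2>0$ gives a quantity with the same sign as the derivative:
\[ \frac{2-\eta}{\eta}+(2-\eta)\log q+\log(\pi\eta)+\eta\log q-\log s=\frac{2}{\eta}-1+\log\left(\frac{\pi\eta q^2}{s}\right). \]
The~$\log q$ terms combine to~$2\log q$, which is exactly where the~$q^2$ in~\eqref{eq:complement-br-effort-intensity-condition} comes from. This quantity is positive if and only if~$\eta\exp(2/\eta-1)>s/(\pi q^2)$. So~$x^*$ is strictly increasing in~$\eta$ wherever that condition holds, and non-increasing otherwise.

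The main obstacle is precisely this sign computation, together with the boundary case~$\eta=1$. There the formula remains smooth, so a one-sided derivative suffices. I will also note that~$\eta\exp(2/\eta-1)$ is decreasing on~$(0,1]$: its log-derivative is~$1/\eta-2/\eta^2<0$. Hence the condition carves~$(0,1]$ into an initial interval on which~$x^*$ increases and a complementary interval on which it is non-increasing, which makes the ``otherwise'' clause well defined as a monotonicity statement.
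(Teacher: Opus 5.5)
Your proposal is correct and follows the paper's own proof: both solve the first-order condition in closed form, take logarithms, and differentiate in $\pi$, $s$, $\eta$ and $q$, with the $\eta$-derivative's sign reducing to $\frac{2-\eta}{\eta}-\log\left(\frac{s}{\pi\eta q^2}\right)$. Your extra remarks on interiority and on the left-hand side of the condition being decreasing in $\eta$ are harmless refinements, and the specification $\pi(qx)^\eta s^{1-\eta}-x^2/2$ you inferred is exactly the one stated in \cref{sec:complement}.
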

\begin{proof}
    The student's payoff~$u(x,s,\eta;q)$ is concave in~$x$, so his best-response effort~$x^*\equiv x^*(s,\eta;q)$ satisfies the first-order condition
    \begin{equation}
        \label{eq:br-effort-foc}
        0
        = \parfrac{u(x^*,s,\eta;q)}{x} \\
        = \pi\eta q^\eta(x^*)^{\eta-1} s^{1-\eta}-x^*,
    \end{equation}
    which can be rearranged for
    \[ x^*=s\left(\frac{\pi\eta q^\eta}{s}\right)^{1/(2-\eta)}. \]
    Clearly~$x^*$ is positive since~$s$, $\eta$, $\pi$, and~$q$ are positive.

    Now~$x\mapsto\log(x)$ is an increasing transformation on~$(0,\infty)$, so~$x^*$ is increasing in a parameter if and only if
    \[ \log(x^*)=\frac{1}{2-\eta}\log(\pi\eta)+\frac{\eta}{2-\eta}\log(q)+\frac{1-\eta}{2-\eta}\log(s) \]
    is increasing in that parameter.
    Differentiating with respect to~$\pi$ gives
    \[ \parfrac{\log(x^*)}{\pi}=\frac{1}{\pi(2-\eta)}>0. \]
    Differentiating with respect to~$s$ gives
    \[ \parfrac{\log(x^*)}{s}=\frac{1-\eta}{s(2-\eta)}\ge0 \]
    with equality if and only if~$\eta=1$.
    Differentiating with respect to~$\eta$ gives
    \begin{align*}
        \parfrac{\log(x^*)}{\eta}
        &= \frac{1}{(2-\eta)^2}\log(\pi\eta)+\frac{1}{\eta(2-\eta)}+\frac{2}{(2-\eta)^2}\log(q)-\frac{1}{(2-\eta)^2}\log(s) \\
        &= \frac{1}{(2-\eta)^2}\left(\frac{2-\eta}{\eta}-\log\left(\frac{s}{\pi\eta q^2}\right)\right),
    \end{align*}
    which is positive if and only if~\eqref{eq:complement-br-effort-intensity-condition} holds.
    Finally, differentiating with respect to~$q$ gives
    \[ \parfrac{\log(x^*)}{q}=\frac{\eta}{q(2-\eta)}>0.\qedhere \]
\end{proof}
\fi

\begin{lemma}[Best-response payoffs]
    \label{lem:br-payoff}
    Given a skill level~$s>0$ and effort intensity~$\eta\in(0,1]$, the student's best-response effort~$x^*(s,\eta)$ yields payoff
    \begin{equation}
        \label{eq:br-payoff}
        u^*(s,\eta)=\frac{1}{\eta}\left(1-\frac{\eta}{2}\right)\left(x^*(s,\eta)\right)^2.
    \end{equation}
    This payoff is
    (i)~positive,
    (ii)~increasing in the motivation parameter~$\pi$,
    (iii)~increasing in~$s$ when~$\eta<1$ and constant in~$s$ when~$\eta=1$, and
    (iv)~decreasing in~$\eta$ when~$\eta<s/\pi$ and non-decreasing in~$\eta$ otherwise.
\end{lemma}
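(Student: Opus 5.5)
The plan is to get a closed form for $u^*(s,\eta)$ from the first-order condition in the proof of \cref{lem:br-effort}, and then read off properties (i)--(iv) from that form together with \cref{lem:br-effort}. At $q=1$ the first-order condition~\eqref{eq:br-effort-foc} says $\pi\eta(x^*)^{\eta-1}s^{1-\eta}=x^*$. Multiplying both sides by $x^*/\eta$ gives
\[
\pi\,y(x^*,s,\eta)=\pi(x^*)^{\eta}s^{1-\eta}=\frac{(x^*)^2}{\eta}.
\]
Substituting this into~\eqref{eq:payoff} yields $u^*=(x^*)^2/\eta-(x^*)^2/2=\frac{1}{\eta}\left(1-\frac{\eta}{2}\right)(x^*)^2$, which is~\eqref{eq:br-payoff}.

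Properties (i)--(iii) then follow with almost no extra work. The coefficient $\frac{1}{\eta}\left(1-\frac{\eta}{2}\right)$ is positive on $(0,1]$ and does not depend on $\pi$ or $s$. So $u^*$ inherits positivity from \cref{lem:br-effort}(i), monotonicity in $\pi$ from \cref{lem:br-effort}(ii), and the stated behaviour in $s$ from \cref{lem:br-effort}(iii).

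Property (iv) is the only step with real content, because both the coefficient and $x^*$ vary with $\eta$. I will work with logarithms:
\[
\log u^*=\log(2-\eta)-\log(2\eta)+2\log x^*.
\]
Differentiating the first two terms gives $-\frac{1}{2-\eta}-\frac{1}{\eta}=-\frac{2}{\eta(2-\eta)}$. For the last term I reuse the derivative computed in the proof of \cref{lem:br-effort} at $q=1$:
\[
\parfrac{\log x^*}{\eta}=\frac{1}{(2-\eta)^2}\left(\frac{2-\eta}{\eta}-\log\frac{s}{\pi\eta}\right).
\]
Doubling this, the $\frac{2}{\eta(2-\eta)}$ term exactly cancels the contribution of the coefficient. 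What remains is
\[
\parfrac{\log u^*}{\eta}=-\frac{2}{(2-\eta)^2}\log\frac{s}{\pi\eta},
\]
which is negative if and only if $s/(\pi\eta)>1$, that is, if and only if $\eta<s/\pi$. It is non-negative otherwise, which gives the non-decreasing part of (iv).

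As a cross-check I will use the envelope theorem, which gives $\parfrac{u^*}{\eta}=\pi y\log(x^*/s)$. From~\eqref{eq:br-effort}, $x^*<s$ exactly when $\pi\eta<s$, so the sign agrees. The only delicate point is keeping the algebra of the cancellation straight. I do not expect any conceptual obstacle.
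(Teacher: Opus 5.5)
Your proof is correct and follows essentially the paper's own route. Substituting the first-order condition $\pi\eta(x^*)^{\eta}s^{1-\eta}=(x^*)^2$ gives the closed form, parts (i)--(iii) are inherited from \cref{lem:br-effort}, and part (iv) combines the coefficient's derivative with $\parfrac{\log x^*}{\eta}$ to get $\parfrac{u^*}{\eta}=-\frac{2u^*}{(2-\eta)^2}\log\frac{s}{\pi\eta}$, exactly as the paper does. Working with $\log u^*$ instead of $u^*$, and your envelope-theorem cross-check $\parfrac{u^*}{\eta}=\pi y\log(x^*/s)$, are only cosmetic variations on that computation.
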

\ifbodyproofs\subsection{Proof of \texorpdfstring{\cref{lem:br-payoff}}{Lemma~\ref{lem:br-payoff}}}
\label{appsec:br-payoff}

\cref{lem:br-payoff} is the special case of \cref{lem:complement-br-payoff} with~$q=1$.

\begin{lemma}[Best-response payoffs with complementary AI]
    \label{lem:complement-br-payoff}
    Given a skill level~$s>0$ and effort intensity~$\eta\in(0,1]$, the student's best-response effort~$x^*(s,\eta;q)$ yields payoff~$u^*(s,\eta;q)$ defined as in~\eqref{eq:complement-br-payoff}.
    This payoff is
    (i)~positive,
    (ii)~increasing in the motivation parameter~$\pi$,
    (iii)~increasing in~$s$ when~$\eta<1$ and constant in~$s$ when~$\eta=1$,
    (iv)~decreasing in~$\eta$ when~$\eta<s/\pi q^2$ and non-decreasing in~$\eta$ otherwise, and
    (v)~increasing in the AI quality parameter~$q$.
\end{lemma}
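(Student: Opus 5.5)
The plan is to mirror the proof of \cref{lem:complement-br-effort}. I take the payoff with complementary AI to be $u(x,s,\eta;q)=\pi q^\eta x^\eta s^{1-\eta}-x^2/2$. This is the form implied by the first-order condition~\eqref{eq:br-effort-foc}, and I am assuming it matches the definition in \cref{sec:complement}. First I derive the closed form~\eqref{eq:complement-br-payoff}. Multiplying the first-order condition~\eqref{eq:br-effort-foc} by $x^*$ gives
\[ \pi q^\eta (x^*)^\eta s^{1-\eta}=\frac{(x^*)^2}{\eta}, \]
so that
\[ u^*(s,\eta;q)=\frac{(x^*)^2}{\eta}-\frac{(x^*)^2}{2}=\frac{1}{\eta}\left(1-\frac{\eta}{2}\right)\left(x^*(s,\eta;q)\right)^2. \]
This is the analogue of~\eqref{eq:br-payoff}.

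Parts (i), (ii), (iii) and (v) then follow directly from \cref{lem:complement-br-effort}. The prefactor $\frac{1}{\eta}(1-\eta/2)$ is positive for $\eta\in(0,1]$ and does not depend on $\pi$, $s$ or $q$. Since $x^*>0$, the payoff $u^*$ is a positive multiple of $(x^*)^2$, which is strictly increasing in $x^*$. So $u^*$ inherits the positivity and the monotonicity of $x^*$ in $\pi$, $s$ and $q$. This includes the constancy in $s$ when $\eta=1$.

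Part (iv) needs a different argument. The prefactor $\frac{1}{\eta}(1-\eta/2)$ is itself decreasing in $\eta$, so the monotonicity of $x^*$ in $\eta$ from part (iv) of \cref{lem:complement-br-effort} does not transfer cleanly; this is the step I expect to be the main obstacle. I will avoid the product instead and apply the envelope theorem. This is legitimate because $x^*$ is the unique interior maximizer of a strictly concave, smooth objective and depends smoothly on $\eta$. It gives
\[ \parfrac{u^*}{\eta}=\pi q^\eta (x^*)^\eta s^{1-\eta}\,\log\left(\frac{q x^*}{s}\right). \]
The first factor is positive, so the sign of the derivative is the sign of $\log(qx^*/s)$.

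Substituting the closed form for $x^*$ gives
\[ \frac{q x^*}{s}=q\left(\frac{\pi\eta q^\eta}{s}\right)^{1/(2-\eta)}=\left(\frac{\pi\eta q^2}{s}\right)^{1/(2-\eta)}. \]
This exceeds one if and only if $\eta>s/(\pi q^2)$. Hence $u^*$ is decreasing in $\eta$ on the region $\eta<s/(\pi q^2)$ and non-decreasing in $\eta$ on the region $\eta\ge s/(\pi q^2)$. The derivative vanishes only at the single point $\eta=s/(\pi q^2)$, so the monotonicity claims hold on the corresponding intervals.
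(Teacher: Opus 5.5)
Your proposal is correct. The closed form and parts (i)--(iii) and (v) match the paper's proof: the paper also multiplies the first-order condition by $x^*$, and it also inherits those parts from the best-response effort lemma because the prefactor $\frac{1}{\eta}(1-\frac{\eta}{2})$ is free of $\pi$, $s$, and $q$.

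The route differs only in part (iv). The paper does not use the envelope theorem. It differentiates the product $\frac{1}{\eta}(1-\frac{\eta}{2})(x^*)^2$ directly, substituting $\partial\log(x^*)/\partial\eta$ from the proof of the effort lemma. The $-(x^*)^2/\eta^2$ term from the prefactor then cancels, leaving $\partial u^*/\partial\eta=-\frac{2u^*}{(2-\eta)^2}\log(s/(\pi\eta q^2))$.

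Your envelope argument reaches the same derivative without computing $\partial x^*/\partial\eta$. The two agree because $\pi(qx^*)^\eta s^{1-\eta}=(x^*)^2/\eta$ and $\log(qx^*/s)=-\frac{1}{2-\eta}\log(s/(\pi\eta q^2))$. It also makes the sign condition transparent: shifting weight toward effort raises the payoff exactly when AI-enhanced effort $qx^*$ exceeds skill $s$.

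The paper's form writes the derivative as a multiple of $u^*$ itself. That is the form reused in the proof of \cref{thm:complement-sb-effort}, where $u^*$ is replaced by the delegation payoff $\pi q\ybar$ at $\eta=\eta\IC(s;q)$. Your expression converts to that form via the two identities above.
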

\begin{proof}
    By \cref{lem:complement-br-effort}, the best-response effort~$x^*\equiv x^*(s,\eta;q)$ yields payoff
    \begin{align*}
        u^*(s,\eta;q)
        &\equiv u(x^*,s,\eta;q) \\
        &= \pi(qx^*)^{\eta}s^{1-\eta}-\frac{(x^*)^2}{2} \\
        &\overset{\star}{=} \frac{1}{\eta}\left(1-\frac{\eta}{2}\right)(x^*)^2
    \end{align*}
    where~$\star$ uses the substitution~$\pi\eta(qx^*)^{\eta}s^{1-\eta}=(x^*)^2$ implied by the first-order condition~\eqref{eq:br-effort-foc}.
    Parts~(i)--(iii) and~(v) follow immediately from \cref{lem:complement-br-effort}.
    For part~(iv), we have
    \begin{align*}
        \parfrac{x^*}{\eta}
        &= x^*\parfrac{\log(x^*)}{\eta} \\
        &= \frac{x^*}{(2-\eta)^2}\left(\frac{2-\eta}{\eta}-\log\left(\frac{s}{\pi\eta q^2}\right)\right)
    \end{align*}
    from the proof of \cref{lem:complement-br-effort} and therefore
    \begin{align*}
        \parfrac{u^*(s,\eta;q)}{\eta}
        &= -\frac{1}{\eta^2}(x^*)^2+\frac{2}{\eta}\left(1-\frac{\eta}{2}\right)x^*\parfrac{x^*}{\eta} \\
        &= -\frac{(x^*)^2}{\eta(2-\eta)}\log\left(\frac{s}{\pi\eta q^2}\right) \\
        &= -\frac{2u^*(s,\eta;q)}{(2-\eta)^2}\log\left(\frac{s}{\pi\eta q^2}\right),
    \end{align*}
    which is positive if and only if~$\eta>s/\pi q^2$.
\end{proof}

Together, \cref{lem:complement-br-effort,lem:complement-br-payoff} imply
\begin{align}
    u^*(s,\eta;q)
    &= \frac{1}{\eta}\left(1-\frac{\eta}{2}\right)\left(s\left(\frac{\pi\eta q^\eta}{s}\right)^{1/(2-\eta)}\right)^2 \notag \\
    &= \left(1-\frac{\eta}{2}\right)\pi^{2/(2-\eta)}\eta^{\eta/(2-\eta)}q^{2\eta/(2-\eta)}s^{2(1-\eta)/(2-\eta)} \label{eq:complement-br-payoff-expanded}
\end{align}
for all skill levels~$s>0$ and effort intensities~$\eta\in(0,1]$.
\fi

I prove \cref{lem:br-effort,lem:br-payoff}, and all subsequent results, in \cref{app:proofs}.

The best-response effort~$x^*(s,\eta)$ and payoff~$u^*(s,\eta)$ are increasing in the motivation parameter~$\pi$.
Intuitively, the student is more willing to exert effort when he is more motivated to produce.

If a task is fully effort-intensive (i.e., if~$\eta=1$), then the student's skill when he faces the task is irrelevant, and he simply trades off the marginal product and cost of effort.
Indeed, if~$\eta=1$, then his best-response effort~$x^*(s,1)=\pi$ and payoff~$u^*(s,1)=\pi^2/2$ do not depend on his skill level~$s$.
However, if a task is \emph{not} fully effort-intensive (i.e., if~$\eta<1$), then having more skill makes the marginal product of effort higher and makes the student more willing to exert effort.
Indeed, if~$\eta<1$, then~$x^*(s,\eta)$ and~$u^*(s,\eta)$ are both increasing in~$s$.

\begin{figure}
    \centering
    \includegraphics[width=0.55\linewidth]{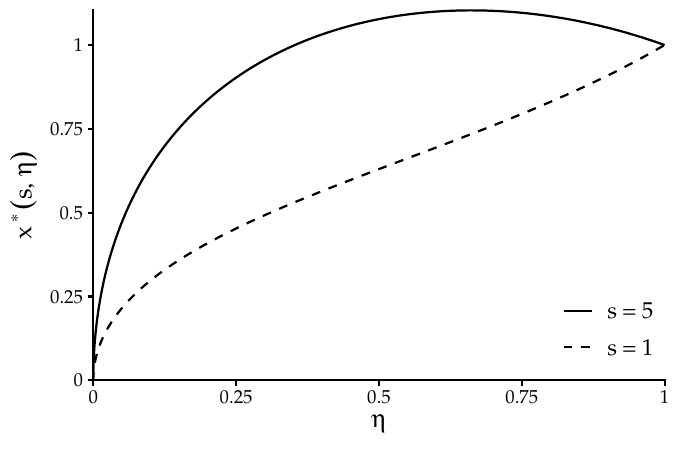}
    \caption{
    Best-response efforts~$x^*(s,\eta)$ when~$\pi=1$ and~$s\in\{1,5\}$.
    }
    \label{fig:br-effort}
\end{figure}
\cref{fig:br-effort} shows how~$x^*(s,\eta)$ depends on~$\eta$ when~$\pi=1$ and~$s\in\{1,5\}$.
If~$s=1$, then the condition~\eqref{eq:br-effort-intensity-condition} holds for all~$\eta\in(0,1]$ and so \cref{lem:br-effort} implies~$x^*(1,\eta)$ is increasing in~$\eta$.
In contrast, if~$s=5$, then~\eqref{eq:br-effort-intensity-condition} holds precisely when~$\eta$ is small and so~$x^*(5,\eta)$ has an inverted-U-shaped relationship with~$\eta$.
In general, for all~$s>0$, there is a unique effort intensity that maximizes~$x^*(s,\eta)$.
I characterize this maximizer in \cref{sec:fb}.

\section{Optimal courses}
\label{sec:solution}

This section characterizes the effort intensities~$(\eta_t)_{t\in\Tcal}$ that solve the teacher's problem~\eqref{eq:problem}.
She wants to maximize the sum of skill increments~$\Delta(s_t,\eta_t)\equiv s_{t+1}-s_t$ defined by~\eqref{eq:accumulation}.
Each increment depends on whether the student works or delegates, and, if he works, how much effort he exerts as a best-response to his skill level and the task's effort intensity.

The student works on task~$t\in\Tcal$ if and only if his best-response payoff~$u^*(s_t,\eta_t)$ is at least the delegation payoff~$\ubar\equiv\pi\ybar$:
\begin{equation}
    \label{eq:ic-constraint}
    u^*(s_t,\eta_t)\ge\ubar. \tag{IC}
\end{equation}
By \cref{lem:br-effort,lem:br-payoff}, the best-response effort~$x^*(s_t,\eta_t)$ and payoff~$u^*(s_t,\eta_t)$ are non-decreasing in the student's skill level~$s_t$ when he faces task~$t$.
It follows that the skill increment
\[ \Delta(s_t,\eta_t)=\begin{cases}
        r\,x^*(s_t,\eta_t) & \text{if~\eqref{eq:ic-constraint} holds} \\
        0 & \text{otherwise}
\end{cases} \]
is also non-decreasing in~$s_t$.
So raising the skill gained on task~$t$ cannot lower the skill gained on subsequent tasks.
Thus, for each task~$t\in\Tcal$, the teacher chooses an effort intensity~$\eta_t\in(0,1]$ that maximizes the student's best-response effort~$x^*(s_t,\eta_t)$ subject to the incentive constraint~\eqref{eq:ic-constraint}:
\begin{equation}
    \label{eq:task-solution}
    \eta_t\in\argmax_{\eta\in(0,1]}\,x^*(s_t,\eta)\ \ \text{subject to}\ \ u^*(s_t,\eta)\ge\ubar.
\end{equation}
I analyze the relationship between~$\eta_t$ and~$s_t$ below.

\subsection{First-best courses}
\label{sec:fb}

Consider the ``first best'' case in which AI is unproductive: it generates output~$\ybar=0$ and delegation payoff~$\ubar=0$.
In this case, the student prefers to work on every task regardless of its effort intensity.%
\footnote{
If~$\ybar=0$, then~$u^*(s,\eta)\ge u(0,s,\eta)=0=\ubar$ for all~$s>0$ and~$\eta\in(0,1]$, and so the student prefers to work on every task regardless of his skill level or the task's effort intensity.
}
As a result, the teacher faces no incentive constraints when she chooses tasks' effort intensities; she can simply choose them to maximize the student's best-response efforts.

\subsubsection{Optimal effort intensities}
\label{sec:fb-solution}

Consider the condition~\eqref{eq:br-effort-intensity-condition} that determines whether the best-response effort~$x^*(s,\eta)$ is increasing in the effort intensity~$\eta$.
The left-hand side of~\eqref{eq:br-effort-intensity-condition} is continuous and decreasing in~$\eta$ over~$(0,1]$, and attains its minimum of~$e\equiv\exp(1)$ when~$\eta=1$.
So if~$s<\pi e$, then~\eqref{eq:br-effort-intensity-condition} holds for all~$\eta\in(0,1]$ and the teacher maximizes~$x^*(s,\eta)$ by choosing~$\eta=1$.
In contrast, if~$s>\pi e$, then the intermediate value theorem implies there is a unique~$\eta\FB(s)\in(0,1)$ satisfying
\begin{equation}
    \label{eq:fb-intensity-foc}
    \eta\FB(s)\exp\left(\frac{2}{\eta\FB(s)}-1\right)=\frac{s}{\pi},
\end{equation}
and the teacher maximizes~$x^*(s,\eta)$ by choosing~$\eta=\eta\FB(s)$.
Moreover, since the left-hand side of~\eqref{eq:br-effort-intensity-condition} is decreasing in~$\eta$ but the right-hand side is increasing in the skill level~$s$, the ``first-best effort intensity''~$\eta\FB(s)$ must decline as~$s$ rises.
I illustrate this decline in \cref{fig:fb-intensity}.
\begin{figure}
    \centering
    \includegraphics[width=0.55\linewidth]{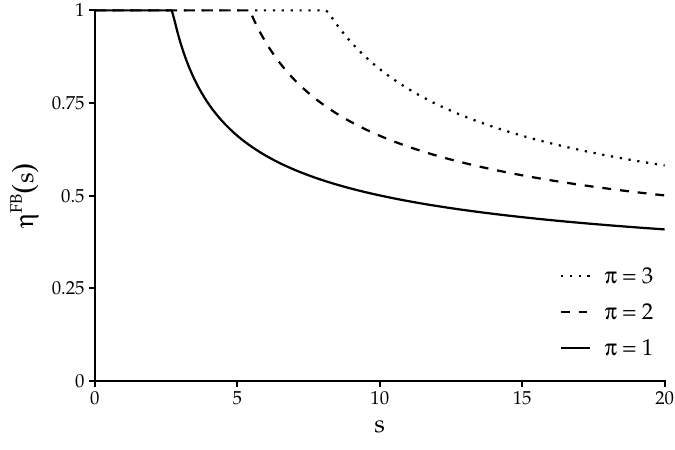}
    \caption{
    First-best effort intensities~$\eta\FB(s)$ when~$\pi\in\{1,2,3\}$.
    }
    \label{fig:fb-intensity}
\end{figure}

\begin{lemma}[First-best effort intensities]
    \label{lem:fb-intensity}
    For all skill levels~$s>0$, there is a unique~$\eta\FB(s)\in(0,1]$ such that
    (i)~if~$s\le\pi e$, then~$\eta\FB(s)=1$, and
    (ii)~if~$s>\pi e$, then~$\eta\FB(s)\in(0,1)$ satisfies~\eqref{eq:fb-intensity-foc} and is decreasing in~$s$.
    Moreover, the best-response effort~$x^*(s,\eta)$ attains its maximum over~$\eta\in(0,1]$ when~$\eta=\eta\FB(s)$.
\end{lemma}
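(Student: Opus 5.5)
The plan is to study the function $g(\eta)\equiv\eta\exp(2/\eta-1)$ on $(0,1]$ and then use the sign characterization of $\partial x^*/\partial\eta$ from \cref{lem:br-effort}(iv). Equivalently, the proof of \cref{lem:complement-br-effort} with $q=1$ shows that $\partial\log(x^*)/\partial\eta$ has the same sign as $\log g(\eta)-\log(s/\pi)$.

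\textbf{Step 1: $g$ is strictly decreasing on $(0,1]$.} I first show
\[ \frac{d\log g(\eta)}{d\eta}=\frac{1}{\eta}-\frac{2}{\eta^2}=\frac{\eta-2}{\eta^2}<0 . \]
So $g$ is continuous and strictly decreasing. It satisfies $g(1)=e$, and $g(\eta)\to\infty$ as $\eta\downarrow0$, because the exponential dominates the linear factor.

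\textbf{Step 2: the case $s\le\pi e$.} Here $g(\eta)>g(1)=e\ge s/\pi$ for every $\eta<1$. Hence $x^*(s,\cdot)$ is strictly increasing on $(0,1)$ and, by continuity, on $(0,1]$. Its unique maximizer is therefore $\eta=1$, and I set $\eta\FB(s)=1$.

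\textbf{Step 3: the case $s>\pi e$.} Now $g(1)=e<s/\pi$, while $g\to\infty$ at $0$. The intermediate value theorem together with strict monotonicity gives a unique root $\eta\FB(s)\in(0,1)$ of \eqref{eq:fb-intensity-foc}. The sign pattern then follows directly. For $\eta<\eta\FB(s)$ we have $g(\eta)>s/\pi$, so $x^*$ is increasing. For $\eta>\eta\FB(s)$ we have $g(\eta)<s/\pi$, so $x^*$ is decreasing. Thus $x^*(s,\cdot)$ is single-peaked with its unique maximum at $\eta\FB(s)$.

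\textbf{Step 4: monotonicity in $s$.} Write $\eta\FB(s)=g^{-1}(s/\pi)$. Since $g$ is strictly decreasing, so is its inverse on $(e,\infty)$, and $s\mapsto s/\pi$ is increasing. Hence $\eta\FB(s)$ is decreasing in $s$. The implicit function theorem would also give this, but the inverse argument avoids differentiability issues.

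\textbf{Step 5: uniqueness.} Uniqueness of $\eta\FB(s)$ follows from uniqueness of the maximizer in each case.

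I expect the only delicate point to be the boundary case $s=\pi e$. There the derivative of $x^*$ vanishes at $\eta=1$, so I must rely on strict positivity on the open interval $(0,1)$ rather than at the endpoint. Strict decrease of $g$ gives exactly this.
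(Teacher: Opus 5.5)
Your proof is correct and takes essentially the same route as the paper. The paper proves the lemma as the $q=1$ case of its complementary-AI version: it combines the sign condition in Lemma~\ref{lem:br-effort}(iv) with the fact that $\eta\exp(2/\eta-1)$ falls continuously from $\infty$ to $e$ on $(0,1]$, applies the intermediate value theorem, and reads off from \eqref{eq:fb-intensity-foc} that $\eta\FB(s)$ decreases in $s$; your computation $d\log g/d\eta=(\eta-2)/\eta^2<0$ and your handling of the boundary case $s=\pi e$ supply details the paper asserts without proof.
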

\ifbodyproofs\subsection{Proof of \texorpdfstring{\cref{lem:fb-intensity}}{Lemma~\ref{lem:fb-intensity}}}
\label{appsec:fb-intensity}

\cref{lem:fb-intensity} is the special case of \cref{lem:complement-fb-intensity} with~$q=1$.

\begin{lemma}[First-best effort intensities with complementary AI]
    \label{lem:complement-fb-intensity}
    For all skill levels~$s>0$, there is a unique~$\eta\FB(s;q)\in(0,1]$ such that:
    \begin{enumerate}

        \item[(i)]
        if~$s\le\pi eq^2$, then~$\eta\FB(s;q)=1$;

        \item[(ii)]
        if~$s>\pi eq^2$, then~$\eta\FB(s;q)\in(0,1)$ satisfies~\eqref{eq:complement-fb-intensity-foc}, is decreasing in~$s$, and is increasing in the motivation parameter~$\pi$ and AI quality parameter~$q$;

        \item[(iii)]
        the best-response effort~\eqref{eq:complement-br-effort} attains its maximum over~$\eta\in(0,1]$ when~$\eta=\eta\FB(s;q)$.

    \end{enumerate}
    Moreover, $\eta\FB(s;q)$ is continuous in both~$s$ and~$q$, and converges to zero as~$s\to\infty$.
\end{lemma}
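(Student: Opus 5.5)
The plan is to work with the function $g(\eta)\equiv\eta\exp(2/\eta-1)$ on $(0,1]$ and with the sign of $\partial\log x^*(s,\eta;q)/\partial\eta$ computed in the proof of \cref{lem:complement-br-effort}. That derivative has the sign of $(2-\eta)/\eta-\log\bigl(s/(\pi\eta q^2)\bigr)$, which is the sign of $g(\eta)-s/(\pi q^2)$. Presumably \eqref{eq:complement-fb-intensity-foc} is the equation $g(\eta)=s/(\pi q^2)$, the analogue of \eqref{eq:fb-intensity-foc}.

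First I will check that $g$ is continuous and strictly decreasing on $(0,1]$. Indeed $\log g(\eta)=\log\eta+2/\eta-1$ has derivative $1/\eta-2/\eta^2=(\eta-2)/\eta^2<0$. Moreover $g(\eta)\to\infty$ as $\eta\downarrow0$ and $g(1)=e$.

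Part (i): if $s\le\pi e q^2$, then $g(\eta)>e\ge s/(\pi q^2)$ for all $\eta<1$. So $x^*$ is strictly increasing on $(0,1)$, and by continuity the unique maximizer is $\eta=1$.

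Part (ii): if $s>\pi e q^2$, then $s/(\pi q^2)\in(e,\infty)$. Strict monotonicity and the intermediate value theorem give a unique root $\eta\FB(s;q)\in(0,1)$ of $g(\eta)=s/(\pi q^2)$. The derivative of $x^*$ is positive below this root and negative above it, so the root is the unique maximizer over $(0,1]$. This gives (iii) together with uniqueness.

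For the comparative statics I will write $\eta\FB(s;q)=g^{-1}\bigl(s/(\pi q^2)\bigr)$ on this region. Because $g^{-1}$ is continuous and strictly decreasing, $\eta\FB$ is decreasing in $s$ and increasing in $\pi$ and $q$.

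Continuity is the one step needing care, because of the boundary between the regimes. I will define $h(z)=1$ for $z\le e$ and $h(z)=g^{-1}(z)$ for $z>e$, so that $\eta\FB(s;q)=h\bigl(s/(\pi q^2)\bigr)$. On $(e,\infty)$, $g^{-1}$ is continuous as the inverse of a continuous strictly monotone function on an interval. At the junction, $g^{-1}(z)\to g^{-1}(e)=1$ as $z\downarrow e$, so $h$ is continuous on $(0,\infty)$. Since $(s,q)\mapsto s/(\pi q^2)$ is continuous, $\eta\FB$ is jointly continuous.

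Finally, for the limit I will show that $g^{-1}(z)\to0$ as $z\to\infty$. Given $\varepsilon\in(0,1)$, take $z>g(\varepsilon)$. Because $g$ is decreasing, $g^{-1}(z)<\varepsilon$. Hence $\eta\FB(s;q)\to0$ as $s\to\infty$.
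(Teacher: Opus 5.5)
Your proposal is correct and follows essentially the same route as the paper: you sign $\partial\log x^*/\partial\eta$ through the strictly decreasing function $\eta\exp(2/\eta-1)$, get the interior root from the intermediate value theorem, and read off the comparative statics from monotonicity. The only small difference is in the continuity step, which you obtain from continuity of the inverse of a strictly monotone continuous function composed with $s/(\pi q^2)$ (giving joint continuity directly); the paper instead applies the implicit function theorem on $s>\pi e q^2$, which also yields differentiability, and then handles the boundary separately.
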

\begin{proof}
    By \cref{lem:complement-br-effort}, the best-response effort~$x^*(s,\eta;q)$ is increasing in~$\eta$ when~\eqref{eq:complement-br-effort-intensity-condition} holds and non-increasing otherwise.
    The left-hand side of~\eqref{eq:complement-br-effort-intensity-condition} falls continuously from~$\infty$ to~$e$ as~$\eta$ rises from zero to one.
    So there are two cases to consider:
    \begin{enumerate}

        \item[(i)]
        Suppose~$s\le\pi eq^2$.
        Then~\eqref{eq:complement-br-effort-intensity-condition} holds for all~$\eta\in(0,1)$ and so, by \cref{lem:br-effort}, the best-response effort~$x^*(s,\eta;q)$ attains its maximum over~$\eta\in(0,1]$ precisely when~$\eta=1$.

        \item[(ii)]
        Now suppose~$s>\pi eq^2$.
        Then, by the intermediate value theorem, there is a unique effort intensity~$\eta\FB(s;q)\in(0,1)$ satisfying~\eqref{eq:complement-fb-intensity-foc} such that~$x^*(s,\eta;q)$ attains its maximum over~$\eta\in(0,1]$ when~$\eta=\eta\FB(s;q)$.
        The left-hand side of~\eqref{eq:complement-fb-intensity-foc} is decreasing in~$\eta\FB(s;q)$; the right-hand side is increasing in~$s$, and decreasing in~$\pi$ and~$q$.
        It follows that~$\eta\FB(s;q)$ is decreasing in~$s$, and increasing in~$\pi$ and~$q$.

    \end{enumerate}
    Parts~(i)--(iii) of the result follow.
    It remains to show~$\eta\FB(s;q)$ is continuous in~$s$ and~$q$, and converges to zero as~$s\to\infty$.

    The left-hand side of~\eqref{eq:complement-fb-intensity-foc} is continuously differentiable in~$\eta\FB(s;q)$ and has a non-zero derivative with respect to~$\eta\FB(s;q)$.
    The implicit function theorem then implies~$\eta\FB(s;q)$ is (differentiable and thus) continuous in both~$s$ and~$q$ when~$s>\pi eq^2$.
    We also have~$\eta\FB(s;q)\to1$ as~$s\to\pi eq^2$ from above, from which it follows that~$\eta\FB(s;q)$ is globally continuous in~$s$ and~$q$.

    Finally, the right-hand side of~\eqref{eq:complement-fb-intensity-foc} grows without bound as~$s\to\infty$, while the left-hand side diverges to~$\infty$ only as~$\eta\FB(s;q)\to0$.
    Thus~$\eta\FB(s;q)\to0$ as~$s\to\infty$.
\end{proof}
\fi

\cref{thm:fb-solution} uses the function~$\eta\FB:(0,\infty)\to(0,1]$ to characterize first-best courses.

\begin{theorem}[First-best courses]
    \label{thm:fb-solution}
    Suppose AI generates output~$\ybar=0$ and~$(\eta_t)_{t\in\Tcal}$ solves the teacher's problem~\eqref{eq:problem}.
    Then~$\eta_t=\eta\FB(s_t)$ for each~$t\in\Tcal$.
    Moreover, there exists~$t\FB\in\Tcal\cup\{T\}$ such that~$\eta_t=1$ when~$t<t\FB$ and~$\eta_t$ is decreasing in~$t$ over~$\{t\FB,\ldots,T-1\}$.
\end{theorem}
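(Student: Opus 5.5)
I will prove the first claim by backward-looking optimality, using the monotonicity argument already sketched before \eqref{eq:task-solution}, and then get the second claim from the monotonicity of skill together with \cref{lem:fb-intensity}.

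\textbf{Step 1: any solution is myopic.} With $\ybar=0$, \eqref{eq:ic-constraint} never binds, so $\Delta(s,\eta)=r\,x^*(s,\eta)$ for every $s$ and $\eta$. I will show that a solution must set $\eta_t=\eta\FB(s_t)$ at every $t$, arguing by contradiction. Suppose $(\eta_t)$ solves \eqref{eq:problem} and let $\tau$ be the first task with $\eta_\tau\neq\eta\FB(s_\tau)$.

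By \cref{lem:fb-intensity}, the maximizer of $x^*(s_\tau,\cdot)$ over $(0,1]$ is unique. In case (i), $x^*$ is strictly increasing in $\eta$ because \eqref{eq:br-effort-intensity-condition} holds on $(0,1)$. In case (ii), $x^*$ is strictly increasing below $\eta\FB(s_\tau)$ and, I expect, strictly decreasing above it, since $\log x^*$ has derivative of strict sign away from the root of \eqref{eq:fb-intensity-foc}. Hence switching to $\eta_\tau'=\eta\FB(s_\tau)$ strictly raises $s_{\tau+1}$.

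For later tasks I keep the original intensities $\eta_t$. The map $s\mapsto s+r\,x^*(s,\eta)$ is strictly increasing by \cref{lem:br-effort}(iii). So by induction every later skill level is weakly higher, and $s_T$ is strictly higher. This contradicts optimality. The only care needed here is the strict uniqueness of the argmax; everything else is routine.

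\textbf{Step 2: the shape of the course.} Along the optimal path, $s_{t+1}=s_t+r\,x^*(s_t,\eta_t)>s_t$ by \cref{lem:br-effort}(i), so skill is strictly increasing in $t$. Define $t\FB$ as the first $t\in\Tcal$ with $s_t>\pi e$, or $t\FB=T$ if there is no such $t$. Then:
\begin{itemize}
\item For $t<t\FB$ we have $s_t\le\pi e$, so $\eta_t=1$ by \cref{lem:fb-intensity}(i).
\item For $t\ge t\FB$, monotonicity of skill gives $s_t>\pi e$ for all such $t$. Then $\eta_t=\eta\FB(s_t)$ is decreasing in $s_t$ by \cref{lem:fb-intensity}(ii), and so it is decreasing in $t$.
\end{itemize}

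\textbf{Main obstacle.} The main point needing care is establishing strict, not merely weak, improvement in Step 1, so that every solution (not just some solution) coincides with $\eta\FB(s_t)$. I would handle it via the strict sign of $\partial\log x^*/\partial\eta$ computed in the proof of \cref{lem:br-effort}.
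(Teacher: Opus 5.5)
Your proposal is correct and follows the paper's proof. Uniqueness of the maximizer of $x^*(s_t,\cdot)$ (\cref{lem:fb-intensity}), combined with skill increments being non-decreasing in $s_t$, forces $\eta_t=\eta\FB(s_t)$; strict growth of $s_t$ then gives the constant-then-decreasing shape. Your explicit one-task deviation argument, which uses strict monotonicity of $s\mapsto s+r\,x^*(s,\eta)$, makes rigorous the step the paper handles through the discussion preceding \eqref{eq:task-solution}, and your definition of $t\FB$ as the first $t\in\Tcal$ with $s_t>\pi e$ is slightly cleaner than the paper's case split on $s_T$.
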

\ifbodyproofs\subsection{Proof of \texorpdfstring{\cref{thm:fb-solution}}{Theorem~\ref{thm:fb-solution}}}
\label{appsec:fb-solution}

In the general setting described in \cref{sec:complement}, the student gains skill
\begin{equation}
    \label{eq:complement-accumulation}
    s_{t+1}-s_t=\begin{cases}
        r\,x^*(s_t,\eta_t;q) & \text{if~\eqref{eq:complement-ic-constraint} holds} \\
        0 & \text{if otherwise}
    \end{cases}
\end{equation}
on each task~$t\in\Tcal$.
The teacher chooses~$(\eta_t)_{t\in\Tcal}$ to solve
\begin{equation}
    \label{eq:complement-problem}
    \max_{(\eta_t)_{t\in\Tcal}}(s_T-s_0)\ \ 
    \text{subject to}\ 
    \eqref{eq:complement-accumulation}
    \ \text{and}\ 
    0<\eta_t\le1\ 
    \text{for each}\ t\in\Tcal.
    \tag{\ref{eq:problem}$;q$}
\end{equation}
Following the logic at the beginning of \cref{sec:solution}, for each task~$t\in\Tcal$ we have
\begin{equation}
    \label{eq:complement-task-solution}
    \eta_t\in\argmax_{\eta\in(0,1]}\,x^*(s_t,\eta;q)\ \ \text{subject to}\ \ u^*(s_t,\eta;q)\ge\pi q\ybar.
\end{equation}
Letting~$q=1$ recovers~\eqref{eq:task-solution}.
So \cref{thm:fb-solution} is the special case of \cref{thm:complement-fb-solution} with~$q=1$.

\begin{theorem}[First-best courses with complementary AI]
    \label{thm:complement-fb-solution}
    Suppose~$\ybar=0$ and~$(\eta_t)_{t\in\Tcal}$ solves~\eqref{eq:complement-problem}.
    Then~$\eta_t=\eta\FB(s_t;q)$ for each task~$t\in\Tcal$.
    Moreover, there exists~$t\FB\in\Tcal\cup\{T\}$ such that~$\eta_t=1$ when~$t<t\FB$ and~$\eta_t$ is decreasing in~$t$ over~$\{t\FB,\ldots,T-1\}$.
\end{theorem}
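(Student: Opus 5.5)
The plan has two parts. First, show that with $\ybar=0$ every optimal course sets $\eta_t=\eta\FB(s_t;q)$. Second, show that skill strictly rises along the optimal path, and then read off the shape of $(\eta_t)$ from the monotonicity of $\eta\FB(\cdot;q)$ in \cref{lem:complement-fb-intensity}.

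\textbf{Step 1: optimal courses are pointwise first-best.} With $\ybar=0$ the constraint in~\eqref{eq:complement-task-solution} reads $u^*(s_t,\eta;q)\ge0$, which always holds because $u^*>0$ by \cref{lem:complement-br-payoff}(i). So the student always works, and $s_{t+1}=s_t+r\,x^*(s_t,\eta_t;q)$. Let $g(s)\equiv\max_{\eta}x^*(s,\eta;q)=x^*(s,\eta\FB(s;q);q)$, which is attained uniquely by \cref{lem:complement-fb-intensity}(iii). The uniqueness comes from the sign analysis in the proof of that lemma: effort strictly increases below $\eta\FB$ and is non-increasing above it. I will check that it is in fact strictly decreasing above, since $\partial\log x^*/\partial\eta<0$ strictly when~\eqref{eq:complement-br-effort-intensity-condition} fails with strict inequality. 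The only possible equality point is isolated.

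\textbf{Step 1 (continued): the exchange argument.} Suppose an optimal course has $\eta_\tau\neq\eta\FB(s_\tau;q)$ at some $\tau$. Replace $\eta_\tau$ by $\eta\FB(s_\tau;q)$, and for $t>\tau$ use $\eta\FB$ at the new skill levels. Then $s_{\tau+1}$ strictly increases. For later periods I compare the two paths inductively. If $s'_t\ge s_t$, then
\[
s'_{t+1}=s'_t+r g(s'_t)\ge s_t+r g(s_t)\ge s_t+r x^*(s_t,\eta_t;q)=s_{t+1}.
\]
This uses that $g$ is non-decreasing: for fixed $\eta$, $x^*$ is non-decreasing in $s$ by \cref{lem:complement-br-effort}(iii), and a maximum over $\eta$ of non-decreasing functions is non-decreasing. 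Because $s'_{\tau+1}>s_{\tau+1}$ and the map $s\mapsto s+rg(s)$ is strictly increasing, the final skill $s'_T$ strictly exceeds $s_T$. This contradicts optimality, so $\eta_t=\eta\FB(s_t;q)$ for every $t$.

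\textbf{Step 2: monotonicity in $t$.} Skill is strictly increasing along the path, since $x^*>0$ and $r>0$ give $s_{t+1}>s_t$. Let $t\FB$ be the first $t\in\Tcal$ with $s_t>\pi e q^2$, and set $t\FB=T$ if there is none. Because $s_t$ is strictly increasing, $s_t\le\pi eq^2$ for all $t<t\FB$ and $s_t>\pi eq^2$ for all $t\ge t\FB$.
\begin{itemize}
\item For $t<t\FB$, \cref{lem:complement-fb-intensity}(i) gives $\eta_t=1$.
\item On $\{t\FB,\ldots,T-1\}$, \cref{lem:complement-fb-intensity}(ii) says $\eta\FB(\cdot;q)$ is strictly decreasing on $(\pi eq^2,\infty)$. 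Composed with the strictly increasing sequence $s_t$, this makes $\eta_t$ strictly decreasing in $t$.
\end{itemize}

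\textbf{Main obstacle.} The delicate point is the strictness claim in Step 1, namely that any deviation from $\eta\FB$ strictly lowers effort. It must hold for the statement to apply to every solution, not merely to say that some solution uses $\eta\FB$. I will establish it directly from the log-derivative formula in the proof of \cref{lem:complement-br-effort}. That derivative vanishes only at the single point $\eta=\eta\FB$ when $s>\pi eq^2$, and nowhere in $(0,1)$ when $s\le\pi eq^2$. So $x^*(s,\cdot;q)$ is strictly unimodal and its maximizer is unique.
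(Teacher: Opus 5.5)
Your proposal is correct and follows essentially the same route as the paper. The paper's steps are the same as yours: the incentive constraint is slack when $\ybar=0$, $\eta\FB(s_t;q)$ is the unique effort-maximizing intensity, skill rises strictly along the path, and the threshold $t\FB$ is read off from the shape of $\eta\FB(\cdot;q)$. The one difference is that you justify the task-by-task reduction with an explicit exchange argument, using that $s\mapsto s+r\max_{\eta}x^*(s,\eta;q)$ is strictly increasing. The paper instead takes that reduction from the informal monotonicity reasoning at the start of \cref{sec:solution}, so your argument is a rigorous version of the same idea.
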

\begin{proof}
    \cref{lem:complement-fb-intensity} implies
    \[ \argmax_{\eta\in(0,1]}\,x^*(s_t,\eta;q)=\left\{\eta\FB(s_t;q)\right\} \]
    for each~$t\in\Tcal$.
    Moreover, since~$\ybar=0$, \cref{lem:complement-br-payoff} implies~$u^*(s_t,\eta;q)>\pi q\ybar$ for all~$\eta\in(0,1]$.
    It follows from~\eqref{eq:complement-task-solution} that~$\eta_t=\eta\FB(s_t;q)$ for each~$t\in\Tcal$.

    Since the student works on each task, we have~$s_{t+1}>s_t$ for each~$t\in\Tcal$ by \cref{lem:complement-br-effort}.
    So if~$s_T\le\pi eq^2$, then \cref{lem:complement-fb-intensity} implies~$\eta_t=1$ for each~$t\in\Tcal$ and letting~$t\FB=T$ yields the result.
    In contrast, if~$s_T>\pi eq^2$, then there is a least~$t\FB\in\Tcal$ such that~$s_t>\pi eq^2$ for each~$t\ge t\FB$, and \cref{lem:complement-fb-intensity} then implies~$\eta_t=1$ when~$t<t\FB$ and~$\eta_{t+1}<\eta_t$ when~$t\ge t\FB$.
\end{proof}
\fi

If the student has skill~$s_t$ when he faces task~$t$, then the first-best course sets~$\eta_t=\eta\FB(s_t)$.
The student works on the task, exerts best-response effort~$x\FB(s_t)\equiv x^*(s_t,\eta\FB(s_t))$, and increases his skill to~$s_{t+1}=s_t+rx\FB(s_t)>s_t$.
The next task has effort intensity~$\eta\FB(s_{t+1})$, which is at most as large as~$\eta\FB(s_t)$ because~$\eta\FB$ is non-increasing on its domain and~$s_{t+1}>s_t$.

Thus, in the first-best course, earlier tasks are more effort-intensive and later tasks are more skill-intensive.
The intuition is as follows.
On all tasks, the teacher wants to induce as much effort as possible because this maximizes the student's skill development.
On earlier tasks, the student has less skill and so the teacher induces effort by making it contribute more to output.
On later tasks, the student has more skill, which makes effort more productive because effort and skill are complementary inputs.
The teacher capitalizes on this complementarity by making tasks less effort-intensive and more skill-intensive.%
\footnote{
In \cref{sec:cobb-douglas}, I show that the constant-then-decreasing shape of~$\eta\FB(s)$ relies on effort and skill being complementary inputs.
}

\subsubsection{Best-responses}
\label{sec:fb-br}

For all skill levels~$s>0$, let
\[ x\FB(s)\equiv x^*(s,\eta\FB(s))
    \ \ \text{and}\ \ 
    u\FB(s)\equiv u^*(s,\eta\FB(s)) \]
denote the student's best-response effort and payoff when the teacher designs a first-best course.
Both are larger when~$s$ is larger:

\begin{proposition}[First-best outcomes]
    \label{prop:fb-br-s}
    The functions~$x\FB:(0,\infty)\to(0,\infty)$ and~$u\FB:(0,\infty)\to(0,\infty)$ are continuous, constant on~$(0,\pi e]$, and increasing on~$(\pi e,\infty)$.
\end{proposition}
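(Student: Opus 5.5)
The plan is to combine the continuity of $\eta\FB$ from \cref{lem:complement-fb-intensity} with an envelope-type argument on the two regions $(0,\pi e]$ and $(\pi e,\infty)$.

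\emph{Constancy on $(0,\pi e]$.} On this region \cref{lem:fb-intensity}(i) gives $\eta\FB(s)=1$. By \cref{lem:br-effort,lem:br-payoff}, $x^*(s,1)=\pi$ and $u^*(s,1)=\pi^2/2$ do not depend on $s$. Hence $x\FB\equiv\pi$ and $u\FB\equiv\pi^2/2$ there.

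\emph{Continuity.} Both $x^*(s,\eta)$ in \eqref{eq:br-effort} and $u^*(s,\eta)$ in \eqref{eq:br-payoff} are jointly continuous in $(s,\eta)$ on $(0,\infty)\times(0,1]$. By the final claim of \cref{lem:complement-fb-intensity} with $q=1$, $\eta\FB$ is continuous in $s$, so the compositions $x\FB$ and $u\FB$ are continuous. Both are positive by part (i) of the two lemmas.

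\emph{Strict increase on $(\pi e,\infty)$.} I would use a revealed-preference argument rather than differentiate through the implicit equation \eqref{eq:fb-intensity-foc}. Take $\pi e<s<s'$ and set $\eta=\eta\FB(s)<1$. Then
\[ x\FB(s')=\max_{\eta'\in(0,1]}x^*(s',\eta')\ge x^*(s',\eta)>x^*(s,\eta)=x\FB(s), \]
where the strict inequality is \cref{lem:br-effort}(iii), which applies because $\eta<1$. The same chain does not work directly for $u\FB$, because $\eta\FB$ maximizes effort rather than payoff. Instead I would use $u\FB(s)=\frac{1}{\eta\FB(s)}\bigl(1-\frac{\eta\FB(s)}{2}\bigr)x\FB(s)^2$. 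The factor $\frac{1}{\eta}-\frac12$ is decreasing in $\eta$, and $\eta\FB$ is decreasing in $s$ on $(\pi e,\infty)$ by \cref{lem:fb-intensity}(ii). So this factor is increasing in $s$, while $x\FB(s)^2$ is strictly increasing. Their product is therefore strictly increasing.

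\emph{Main obstacle.} The main step needing care is the payoff part, since $\eta\FB$ does not maximize $u^*$. The factorization above resolves this, with both factors positive. I would also note that monotonicity at the boundary point $\pi e$ follows from continuity.
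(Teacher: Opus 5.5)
Your proof is correct and follows the paper's structure: constancy from $\eta\FB=1$, continuity by composing $x^*$ and $u^*$ with the continuous map $\eta\FB$, and monotonicity of $u\FB$ via the same factorization of $u\FB(s)$ into $\frac{1}{\eta\FB(s)}\bigl(1-\frac{\eta\FB(s)}{2}\bigr)$ times $x\FB(s)^2$. The one difference is the step showing $x\FB$ is increasing. The paper uses the envelope theorem to get $\partial x\FB(s)/\partial s=\frac{1-\eta\FB(s)}{2-\eta\FB(s)}\cdot\frac{x\FB(s)}{s}>0$. Your comparison $x\FB(s')\ge x^*(s',\eta\FB(s))>x\FB(s)$ needs no differentiability of $\eta\FB$ (it is the same device the paper uses for \cref{prop:sb-effort}), but it does not give that explicit derivative formula.
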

\ifbodyproofs\subsection{Proof of \texorpdfstring{\cref{prop:fb-br-s}}{Proposition~\ref{prop:complement-fb-br-s}}}

\cref{prop:fb-br-s} is the special case of \cref{prop:complement-fb-br-s} with~$q=1$.

\begin{proposition}[First-best outcomes with complementary AI]
    \label{prop:complement-fb-br-s}
    The first-best effort~$x\FB(s;q)$ and payoff~$u\FB(s;q)$ defined in \cref{sec:complement-fb} are
    continuous in~$s$,
    constant in~$s$ when~$s\le\pi eq^2$, and
    increasing in~$s$ when~$s>\pi eq^2$.
\end{proposition}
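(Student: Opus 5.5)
The plan is to prove the statement for $x\FB(s;q)\equiv x^*(s,\eta\FB(s;q);q)$ first, and then obtain $u\FB$ from the identity in \cref{lem:complement-br-payoff}: $u\FB(s;q)=\frac{1}{\eta}\left(1-\frac{\eta}{2}\right)(x\FB(s;q))^2$ with $\eta=\eta\FB(s;q)$.

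\textbf{Continuity.} The explicit formula $x^*(s,\eta;q)=s\left(\pi\eta q^\eta/s\right)^{1/(2-\eta)}$ is jointly continuous in $(s,\eta)$ on $(0,\infty)\times(0,1]$. By \cref{lem:complement-fb-intensity}, $\eta\FB(s;q)$ is continuous in $s$. Composing the two gives continuity of $x\FB$. The same argument, applied to \eqref{eq:complement-br-payoff-expanded}, gives continuity of $u\FB$.

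\textbf{Constancy on $s\le\pi eq^2$.} Here $\eta\FB(s;q)=1$. Substituting $\eta=1$ gives $x\FB=\pi q$ and $u\FB=\pi^2q^2/2$, and neither depends on $s$. This matches part (iii) of \cref{lem:complement-br-effort} and of \cref{lem:complement-br-payoff} at $\eta=1$.

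\textbf{Strict increase on $s>\pi eq^2$.} I will use a revealed-preference (envelope-type) argument rather than differentiating through $\eta\FB$. Take $\pi eq^2<s<s'$ and let $\eta=\eta\FB(s;q)$, which lies in $(0,1)$.
\begin{itemize}
\item By part (iii) of \cref{lem:complement-br-effort}, $x^*$ is strictly increasing in $s$ when $\eta<1$, so $x^*(s',\eta;q)>x^*(s,\eta;q)=x\FB(s;q)$.
\item By part (iii) of \cref{lem:complement-fb-intensity}, $\eta\FB(s';q)$ maximizes $x^*(s',\cdot;q)$, so $x\FB(s';q)\ge x^*(s',\eta;q)$.
\item Combining the two gives $x\FB(s';q)>x\FB(s;q)$.
\end{itemize}

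For $u\FB$ the same trick does not apply directly, because $\eta\FB$ maximizes effort rather than the payoff. I will instead substitute the first-order condition \eqref{eq:complement-fb-intensity-foc} into the payoff. Using $\log x^*=\frac{1}{2-\eta}\log(\pi\eta q^\eta)+\frac{1-\eta}{2-\eta}\log s$ and $\log(s/(\pi\eta q^2))=\frac{2}{\eta}-1$ at $\eta=\eta\FB$, I expect to get
\[ \log x\FB=\log s-\frac{1}{\eta}+\log q-\frac{1}{2}. \]
This yields $x\FB=q\,s\,e^{-1/2-1/\eta}$ with $\eta=\eta\FB(s;q)$, and hence
\[ u\FB=\frac{2-\eta}{2\eta}\,q^2s^2e^{-1-2/\eta}. \]

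As $s$ rises, $\eta\FB$ falls by \cref{lem:complement-fb-intensity}, so I need to check how $u\FB$ responds.
\begin{itemize}
\item The factor $s^2$ is increasing in $s$.
\item The factor $\frac{2-\eta}{\eta}$ is decreasing in $\eta$, so it rises as $\eta$ falls.
\item The factor $e^{-2/\eta}$ is increasing in $\eta$, so it falls as $\eta$ falls.
\end{itemize}

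So the signs conflict, and this step is the main obstacle. The cleanest fix is to rewrite $u\FB$ as
\[ u\FB=\frac{2-\eta}{2\eta}\,(x\FB)^2 \]
and combine the two effects:
\begin{itemize}
\item $x\FB$ is strictly increasing in $s$, as shown above.
\item $\eta\FB$ is decreasing in $s$, so the factor $\frac{2-\eta}{2\eta}$ is increasing in $s$.
\end{itemize}
Both factors are positive and move upward with $s$, and $(x\FB)^2$ strictly increases, so $u\FB$ is increasing on $s>\pi eq^2$.
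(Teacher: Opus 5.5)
Your proof is correct and follows the paper's argument: constancy from $\eta\FB(s;q)=1$, continuity by composing $x^*$ with the continuous $\eta\FB(\cdot;q)$, and monotonicity of $u\FB$ by writing it as the product of the positive increasing factors $(2-\eta\FB)/(2\eta\FB)$ and $(x\FB)^2$. The one variation is that you show $x\FB$ is increasing via $x\FB(s';q)\ge x^*(s',\eta\FB(s;q);q)>x\FB(s;q)$ rather than the envelope theorem, which has the mild advantage of not needing differentiability of $\eta\FB$.

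Your abandoned detour formula is wrong, though harmless since the final argument never uses it. The first-order condition actually gives $x\FB(s;q)=(s/q)\exp(-1/\eta\FB(s;q))$, and using it once more gives $u\FB(s;q)=\pi s\,(2-\eta\FB(s;q))/(2e)$, which is visibly increasing in $s$. So the sign conflict disappears once the first-order condition is used to eliminate the exponential factor.
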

\begin{proof}
    Suppose~$s\le\pi eq^2$.
    Then~$\eta\FB(s;q)=1$ by \cref{lem:complement-fb-intensity}, and so~$x\FB(s;q)=x^*(s,1;q)=\pi q$ and~$u\FB(s;q)=u^*(s,1;q)=\pi^2q^2/2$ are constant in~$s$.

    Now suppose~$s>\pi eq^2$.
    Then~$\eta\FB(s;q)<1$ is the interior solution to
    \[ \max_\eta x^*(s,\eta;q)\ \ \text{subject to}\ \ 0<\eta\le1. \]
    So the envelope theorem implies
    \begin{align*}
        \parfrac{x\FB(s;q)}{s}
        &= \parfrac{x^*(s,\eta;q)}{s}\bigg\vert_{\eta=\eta\FB(s;q)} \\
        &= \frac{1-\eta}{2-\eta}\left(\frac{\pi\eta q^\eta}{s}\right)^{1/(2-\eta)}\bigg\vert_{\eta=\eta\FB(s;q)} \\
        &= \frac{1-\eta\FB(s;q)}{2-\eta\FB(s;q)}\cdot\frac{x\FB(s;q)}{s} \\
        &> 0.
    \end{align*}
    Thus~$x\FB(s;q)$ is increasing in~$s$.
    Then, by \cref{lem:complement-br-payoff,lem:complement-fb-intensity}, the best-response payoff
    \[ u\FB(s;q)=\frac{1}{\eta\FB(s;q)}\cdot\left(1-\frac{\eta\FB(s;q)}{2}\right)\cdot(x\FB(s;q))^2 \]
    equals the product of three factors that are increasing in~$s$.
    Thus~$u\FB(s;q)$ is also increasing in~$s$.

    Finally, since~$x\FB(s;q)$ and~$u\FB(s;q)$ are continuous functions of~$s$ and~$\eta\FB(s;q)$, and~$\eta\FB(s;q)$ is continuous in~$s$, both~$x\FB(s;q)$ and~$u\FB(s;q)$ are continuous in~$s$.
\end{proof}
\fi

I use \cref{prop:fb-br-s} to derive suitable parameter restrictions in my analysis of ``second-best'' courses below.

\subsection{Second-best courses}
\label{sec:sb}

Now I allow the AI-generated output~$\ybar$ and delegation payoff~$\ubar\equiv\pi\ybar$ to exceed zero.
This reintroduces the incentive constraint~\eqref{eq:ic-constraint} on the effort intensities~$(\eta_t)_{t\in\Tcal}$.

\subsubsection{Parameter restrictions}

Suppose the teacher chooses~$\eta_t=\eta\FB(s_t)$ for each task~$t\in\Tcal$.
Then, by \cref{prop:fb-br-s}, the student's first-best payoff~$u\FB(s_t)$ on task~$t$ is non-decreasing in his skill level~$s_t$ when he faces the task.
But~$s_t$ is non-decreasing in~$t$ and so~$u\FB(s_t)$ is also non-decreasing in~$t$.
Thus, if the constraint~\eqref{eq:ic-constraint} binds the choice of~$\eta_t$ for \emph{any} task~$t\in\Tcal$, then it binds the choice of~$\eta_t$ for the initial task~$t=0$.
For this reason, I assume
\begin{equation}
    \label{eq:sb-restriction-1}
    u\FB(s_0)<\ubar; \tag{SB}
\end{equation}
that is, the student prefers to delegate the initial task in the first-best course.
This is the minimal assumption needed to ensure the first- and second-best courses differ: if~\eqref{eq:sb-restriction-1} does not hold, then the first-best course induces work on every task and so the constraint~\eqref{eq:ic-constraint} has no bite.

Now consider the student's best-response payoff~$u^*(s_0,\eta_0)$ from working on the initial task.
If~$u^*(s_0,\eta)<\ubar$ for all effort intensities~$\eta\in(0,1]$, then it is not possible for the teacher to induce work on the initial task, and so~$s_1=s_0$ regardless of the teacher's chosen~$\eta_0$.
But then~$u^*(s_1,\eta)=u^*(s_0,\eta)<\ubar$ for all~$\eta\in(0,1]$, implying the teacher cannot induce work on task~$t=1$.
Iterating this argument over~$t\in\Tcal$ yields~$s_T=s_{T-1}=\cdots=s_0$ regardless of~$(\eta_t)_{t\in\Tcal}$.
Thus, if~$u^*(s_0,\eta)<\ubar$ for all~$\eta\in(0,1]$, then the teacher's problem~\eqref{eq:problem} is degenerate because she cannot induce work on \emph{any} tasks.
To avoid this degeneracy, I assume
\begin{equation}
    \label{eq:sb-restriction-2}
    \sup_{\eta\in(0,1]}u^*(s_0,\eta)>\ubar; \tag{ND}
\end{equation}
that is, it is possible for the teacher to induce work on the initial task (and, thus, subsequent tasks).

Assumptions~\eqref{eq:sb-restriction-1} and~\eqref{eq:sb-restriction-2} impose bounds on the AI-generated output~$\ybar\equiv\ubar/\pi$:

\begin{lemma}[Parameter restrictions]
    \label{lem:sb-restriction}
    ~
    \begin{enumerate}

        \item[(i)]
        If~\eqref{eq:sb-restriction-1} holds, then~$\pi/2<\ybar$.

        \item[(ii)]
        If~\eqref{eq:sb-restriction-1} and~\eqref{eq:sb-restriction-2} hold, then~$\ybar<s_0$.

    \end{enumerate}
\end{lemma}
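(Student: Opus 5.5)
The plan is to bound the student's best-response payoffs from below for part~(i) and from above for part~(ii). Both bounds come from the shape results in \cref{lem:br-payoff} and \cref{prop:fb-br-s}, together with the explicit payoff formula~\eqref{eq:br-payoff} evaluated at the endpoints of the effort-intensity range.

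For part~(i), \cref{prop:fb-br-s} says $u\FB$ is constant on $(0,\pi e]$ and increasing on $(\pi e,\infty)$. Its value on $(0,\pi e]$ is $u^*(s,1)=\pi^2/2$, since there $\eta\FB(s)=1$ by \cref{lem:fb-intensity}, and $x^*(s,1)=\pi$ gives $u^*(s,1)=\pi^2/2$ by~\eqref{eq:br-payoff}. Hence $u\FB(s_0)\ge\pi^2/2$ for every $s_0>0$. Combining this with~\eqref{eq:sb-restriction-1} gives $\pi^2/2\le u\FB(s_0)<\ubar=\pi\ybar$, and dividing by $\pi>0$ yields $\pi/2<\ybar$.

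For part~(ii), I will compute $\sup_{\eta\in(0,1]}u^*(s_0,\eta)$ explicitly. By \cref{lem:br-payoff}(iv), $\eta\mapsto u^*(s_0,\eta)$ is decreasing on $(0,\min\{s_0/\pi,1\})$ and non-decreasing afterwards. So its supremum over $(0,1]$ is the larger of its limit as $\eta\to0^+$ and its value $u^*(s_0,1)=\pi^2/2$ at $\eta=1$. To evaluate the limit I will write $u^*$ in closed form by substituting~\eqref{eq:br-effort} into~\eqref{eq:br-payoff}:
\[ u^*(s,\eta)=\left(1-\frac{\eta}{2}\right)\pi^{2/(2-\eta)}\eta^{\eta/(2-\eta)}s^{2(1-\eta)/(2-\eta)}. \]
This formula is the $q=1$ case of~\eqref{eq:complement-br-payoff-expanded}. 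Because $\eta^{\eta/(2-\eta)}\to1$ (as $\eta\log\eta\to0$), we get $u^*(s_0,\eta)\to\pi s_0$ as $\eta\to0^+$. Therefore
\[ \sup_{\eta\in(0,1]}u^*(s_0,\eta)=\max\{\pi s_0,\pi^2/2\}. \]
Part~(i) already gives $\pi^2/2<\pi\ybar$, so the maximum cannot exceed $\ubar$ through its second term. Hence~\eqref{eq:sb-restriction-2} forces $\pi s_0>\pi\ybar$, that is, $\ybar<s_0$.

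The only step needing care is the supremum argument. Monotonicity on the two pieces plus the endpoint values suffices, but I will check the degenerate case $s_0/\pi\ge1$, where the function is decreasing on all of $(0,1)$. In that case the supremum is simply the $\eta\to0$ limit $\pi s_0$, and the conclusion is unchanged.
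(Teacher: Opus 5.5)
Your proof is correct and follows the paper's argument. For~(i) you bound $u\FB(s_0)\ge\pi^2/2$ and combine this with~\eqref{eq:sb-restriction-1}; for~(ii) you use \cref{lem:br-payoff}(iv) to get $\sup_{\eta\in(0,1]}u^*(s_0,\eta)=\max\{\pi s_0,\pi^2/2\}$ and rule out the second term using~(i). The one difference is cosmetic: you get the lower bound in~(i) from the continuity and monotonicity of $u\FB$ in $s$ (\cref{prop:fb-br-s}), whereas the paper splits on $s_0\le\pi e$ versus $s_0>\pi e$ and in the latter case uses that $u^*(s_0,\cdot)$ is decreasing on $(0,1]$ because $s_0/\pi>e>1$.
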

\ifbodyproofs\subsection{Proof of \texorpdfstring{\cref{lem:sb-restriction}}{Lemma~\ref{lem:sb-restriction}}}
\label{appsec:sb-restriction}

\cref{lem:sb-restriction} is the special case of \cref{lem:complement-sb-restriction} with~$q=1$.

\begin{lemma}[Parameter restrictions with complementary AI]
    \label{lem:complement-sb-restriction}
    ~
    \begin{enumerate}

        \item[(i)]
        If~\eqref{eq:complement-sb-restriction-1} holds, then~$\pi q/2<\ybar$.

        \item[(ii)]
        If~\eqref{eq:complement-sb-restriction-1} and~\eqref{eq:complement-sb-restriction-2} hold, then~$q\ybar<s_0$.

    \end{enumerate}
\end{lemma}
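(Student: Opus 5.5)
The plan is to reduce both parts to comparisons between the delegation payoff $\pi q\ybar$ and two reference values of the best-response payoff: the fully effort-intensive value $u^*(s,1;q)=\pi^2q^2/2$, and the limit of $u^*(s_0,\eta;q)$ as $\eta\to0$. Throughout, \eqref{eq:complement-sb-restriction-1} is read as $u\FB(s_0;q)<\pi q\ybar$ and \eqref{eq:complement-sb-restriction-2} as $\sup_{\eta\in(0,1]}u^*(s_0,\eta;q)>\pi q\ybar$.

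For part~(i), I will use \cref{prop:complement-fb-br-s}. It says $u\FB(\cdot\,;q)$ is continuous, constant on $(0,\pi eq^2]$, and increasing above that threshold. On that interval $\eta\FB=1$ by \cref{lem:complement-fb-intensity}, so the constant value is $u^*(s,1;q)=\pi^2q^2/2$, as computed in the proof of \cref{prop:complement-fb-br-s}. Hence $u\FB(s_0;q)\ge\pi^2q^2/2$ for every $s_0>0$. Combining this with \eqref{eq:complement-sb-restriction-1} gives $\pi^2q^2/2<\pi q\ybar$, and dividing by $\pi q>0$ yields $\pi q/2<\ybar$.

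For part~(ii), I will first identify where the supremum in \eqref{eq:complement-sb-restriction-2} is attained using part~(iv) of \cref{lem:complement-br-payoff}. As a function of $\eta$, $u^*(s_0,\eta;q)$ is decreasing for $\eta<s_0/(\pi q^2)$ and non-decreasing afterwards. So on $(0,1]$ it is either decreasing throughout or decreasing-then-non-decreasing, and its supremum is the larger of two values: the endpoint value $u^*(s_0,1;q)=\pi^2q^2/2$ and the limit $L\equiv\lim_{\eta\to0^+}u^*(s_0,\eta;q)$. The endpoint value satisfies $\pi^2q^2/2\le u\FB(s_0;q)<\pi q\ybar$ by the argument in part~(i), so it cannot deliver \eqref{eq:complement-sb-restriction-2}. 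Therefore \eqref{eq:complement-sb-restriction-2} forces $L\ge\sup_\eta u^*(s_0,\eta;q)>\pi q\ybar$.

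The remaining step, and the only genuine computation, is evaluating $L$ from the expanded formula~\eqref{eq:complement-br-payoff-expanded}. As $\eta\to0$ the factors behave as follows: $1-\eta/2\to1$, $\pi^{2/(2-\eta)}\to\pi$, $q^{2\eta/(2-\eta)}\to1$, and $s_0^{2(1-\eta)/(2-\eta)}\to s_0$. The factor $\eta^{\eta/(2-\eta)}=\exp\bigl(\tfrac{\eta\log\eta}{2-\eta}\bigr)\to1$ because $\eta\log\eta\to0$. Hence $L=\pi s_0$, and $\pi s_0>\pi q\ybar$ gives $q\ybar<s_0$. I expect the main care point to be justifying that the supremum equals the endpoint-or-limit maximum. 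This follows because the monotonicity in \cref{lem:complement-br-payoff}(iv) makes $u^*$ quasi-convex in $\eta$, so no interior value can exceed both ends.
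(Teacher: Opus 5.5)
Your proof is correct and follows the paper's argument essentially step for step. Part~(ii) is identical: the supremum equals $\max\{\pi s_0,\pi^2q^2/2\}$, and the endpoint value is ruled out by the bound from part~(i); you also evaluate the $\eta\to0$ limit explicitly where the paper only states it. In part~(i), you get $u\FB(s_0;q)\ge\pi^2q^2/2$ from the continuity and monotonicity of $u\FB$ in \cref{prop:complement-fb-br-s}, whereas the paper compares $u^*(s_0,\eta\FB(s_0;q);q)$ with $u^*(s_0,1;q)$ directly via \cref{lem:complement-br-payoff}(iv), using $\eta\FB(s_0;q)<1<s_0/\pi q^2$ when $s_0>\pi eq^2$; both routes are valid and neither is circular.
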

\begin{proof}
    If~$s_0\le\pi eq^2$, then~$u\FB(s_0;q)=u^*(s_0,1;q)=\pi^2q^2/2$.
    If~$s_0>\pi eq^2$, then~$\eta\FB(s_0;q)<1<e<s_0/\pi q^2$ and so~$u\FB(s_0;q)=u^*(s_0,\eta\FB(s_0;q);q)>u^*(s_0,1;q)=\pi^2q^2/2$ by \cref{lem:complement-br-payoff}.
    Thus~$u\FB(s_0;q)\ge\pi^2q^2/2$ independently of~$s_0$.
    So if~\eqref{eq:complement-sb-restriction-1} holds, then~$\pi^2q^2/2<q\ubar=\pi q\ybar$ and thus~$\pi q/2<\ybar$.

    Suppose~\eqref{eq:complement-sb-restriction-1} and~\eqref{eq:complement-sb-restriction-2} hold.
    Then, by \cref{lem:complement-br-payoff}, the best-response payoff~$u^*(s_0,\eta;q)$ is decreasing in~$\eta$ when~$\eta<s_0/\pi q^2$ and increasing in~$\eta$ when~$\eta>s_0/\pi q^2$.
    It follows that
    \begin{align}
        \sup_{\eta\in(0,1]}u^*(s_0,\eta;q)
        &= \max\left\{\lim_{\eta\to0}u^*(s_0,\eta;q),\,u^*(s_0,1;q)\right\} \notag \\
        &= \max\left\{\pi s_0,\,\frac{\pi^2q^2}{2}\right\} \label{eq:sb-restriction-supremum}.
    \end{align}
    But~$\pi^2q^2/2<q\ubar$ from above.
    So if~$\pi s_0\le\pi^2q^2/2$, then~\eqref{eq:sb-restriction-supremum} contradicts~\eqref{eq:complement-sb-restriction-2}.
    Thus~$\pi s_0>\pi^2q^2/2$ and therefore
    \[ q\ybar=\frac{q\ubar}{\pi}<\frac{1}{\pi}\sup_{\eta\in(0,1]}u^*(s_0,\eta;q)=s_0.\qedhere \]
\end{proof}
\fi

If assumptions~\eqref{eq:sb-restriction-1} and~\eqref{eq:sb-restriction-2} hold, then the AI-generated output~$\ybar$ must exceed~$\pi/2$ and be smaller than~$s_0$.
The intuition is as follows.
If the initial task is fully effort-intensive ($\eta_0=1$), then the student exerts effort~$x^*(s_0,1)=\pi$ at cost~$\pi^2/2$, and his ``cost-adjusted'' output is~$u^*(s_0,1)/\pi=\pi/2$.
In contrast, if the task is fully skill-intensive ($\eta_0\to0$), then he produces best-response output
\[ \lim_{\eta\to0}\,x^*(s_0,\eta)^\eta s_0^{1-\eta}=s_0. \]
So the lower bound~$\pi/2<\ybar$ says the AI out-produces the student (net of costs) on a fully effort-intensive task, whereas the upper bound~$\ybar<s_0$ says the student out-produces the AI on a fully skill-intensive task.

Since the student's skill level is non-decreasing across tasks, assumptions~\eqref{eq:sb-restriction-1} and~\eqref{eq:sb-restriction-2} imply~$s_t>\ybar$ for each task~$t\in\Tcal$.
Accordingly, I focus on skill levels~$s>\ybar$ for the remainder of this section.%
\footnote{
If the student has skill less than~$\ybar$, then it is impossible for the teacher to ``design away'' the delegation by changing her course.
Instead, she would have to remove the threat by banning the student from using AI.
For example, she could mandate that tasks are completed in-person without access to technology.
However, this would preclude the student from using AI beneficially as a complement---a possibility I consider in \cref{sec:complement}.
}

\subsubsection{Incentive compatibility}
\label{sec:ic-intensity}

Suppose~\eqref{eq:sb-restriction-1} holds and fix a skill level~$s>\ybar$.
By \cref{lem:br-payoff}, the best-response payoff~$u^*(s,\eta)$ attains its minimum over~$\eta\in(0,1]$ at
\[ \eta_\text{min}(s)\equiv\min\left\{\frac{s}{\pi},1\right\}. \]
The payoff falls continuously from
\[ \lim_{\eta\to0}u^*(s,\eta)=\pi s \]
to
\[ u^*(s,\eta_\text{min}(s))=\frac{\pi^2}{2}\begin{cases}
        \left(2-\frac{s}{\pi}\right)\frac{s}{\pi} & \text{if}\ \frac{s}{\pi}\le1 \\
        1 & \text{if}\ \frac{s}{\pi}>1
    \end{cases} \]
as the effort intensity~$\eta$ rises from zero to~$\eta_\text{min}(s)$, then (if~$\eta_\text{min}(s)<1$) rises continuously to
\[ u^*(s,1)=\frac{\pi^2}{2} \]
as~$\eta$ rises to one.
But~$\pi^2/2<\ubar<\pi s$ by \cref{lem:sb-restriction} and the definition of~$\ubar\equiv\pi\ybar$.
So, by the intermediate value theorem, there is a unique effort intensity~$\eta\IC(s)\in\left(0,\eta_\text{min}(s)\right)$ that makes the student indifferent between working and delegating:
\begin{equation}
    \label{eq:ic-intensity-definition}
    u^*(s,\eta\IC(s))=\ubar.
\end{equation}
Moreover, since the best-response payoff~$u^*(s,\eta)$ is decreasing in~$\eta$ over~$(0,\eta_{\text{min}}(s))$ and bounded above by~$\pi^2/2<\ubar$ over~$[\eta_\text{min}(s),1]$, we have~$u^*(s,\eta)\ge\ubar$ if and only if~$\eta\le\eta\IC(s)$.

\begin{lemma}[Incentive-compatible effort intensities]
    \label{lem:ic-intensity}
    Suppose~\eqref{eq:sb-restriction-1} holds.
    For all skill levels~$s>\ybar$, there is a unique effort intensity
    \[ \eta\IC(s)\in\left(0,\min\left\{\frac{s}{\pi},1\right\}\right) \]
    satisfying~\eqref{eq:ic-intensity-definition}.
    The function~$\eta\IC:(\ybar,\infty)\to(0,1)$ is continuous and increasing, with
    \[ \lim_{s\to\ybar}\,\eta\IC(s)=0
    \ \ \text{and}\ \ 
    \lim_{s\to\infty}\,\eta\IC(s)=1. \]
    Moreover, the incentive constraint~\eqref{eq:ic-constraint} holds if and only if~$\eta_t\le\eta\IC(s_t)$.
\end{lemma}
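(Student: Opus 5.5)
Existence, uniqueness, and the ``if and only if'' part follow the discussion preceding the statement, so I would formalize that first. Fix $s>\ybar$. By \cref{lem:br-payoff}(iv), $\eta\mapsto u^*(s,\eta)$ is continuous and strictly decreasing on $(0,\eta_\text{min}(s))$ and non-decreasing on $[\eta_\text{min}(s),1]$, where $\eta_\text{min}(s)=\min\{s/\pi,1\}$. From the closed form \eqref{eq:br-payoff} together with \eqref{eq:br-effort}, $u^*(s,\eta)\to\pi s$ as $\eta\to0$. On $[\eta_\text{min}(s),1]$ the payoff is bounded by $u^*(s,1)=\pi^2/2$. By \cref{lem:sb-restriction}(i) and $s>\ybar$, we have $\pi^2/2<\ubar<\pi s$. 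The intermediate value theorem on $(0,\eta_\text{min}(s))$ then yields a unique root $\eta\IC(s)$, and the monotonicity pattern gives $u^*(s,\eta)\ge\ubar$ exactly when $\eta\le\eta\IC(s)$. Applying this with $s=s_t$ gives the equivalence with \eqref{eq:ic-constraint}.

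For monotonicity and continuity I would use the expanded form \eqref{eq:complement-br-payoff-expanded} with $q=1$. This writes $\log u^*(s,\eta)$ as a $C^1$ function $F(s,\eta)$ on $(0,\infty)\times(0,1)$ with
\[ \partial F/\partial s=\frac{2(1-\eta)}{(2-\eta)s}>0 \]
for $\eta<1$, and, from the proof of \cref{lem:complement-br-payoff},
\[ \partial F/\partial\eta=-\frac{2}{(2-\eta)^2}\log\frac{s}{\pi\eta}, \]
which is strictly negative at $\eta=\eta\IC(s)<s/\pi$. The implicit function theorem applied to $F(s,\eta)=\log\ubar$ gives that $\eta\IC$ is $C^1$, with
\[ \frac{d\eta\IC}{ds}=-\frac{\partial F/\partial s}{\partial F/\partial\eta}>0. \]
Alternatively, without calculus: if $s'>s$, then $u^*(s',\eta\IC(s))>u^*(s,\eta\IC(s))=\ubar$ by \cref{lem:br-payoff}(iii), so $\eta\IC(s)<\eta\IC(s')$ by the characterization just proved.

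The limits are the main obstacle. For $s\to\ybar^+$, I would argue by contradiction. Because $\eta\IC$ is increasing, the limit $L$ exists; suppose $L>0$. Then $\eta\IC(s)\ge L$ for all $s>\ybar$, so for $\eta\in(0,L)$ we get $u^*(s,\eta)>\ubar$ for all $s>\ybar$. Letting $s\downarrow\ybar$ and using continuity of $u^*$ in $s$ gives $u^*(\ybar,\eta)\ge\ubar$ for all $\eta<L$ (we may assume $\ybar>0$, which holds by \cref{lem:sb-restriction}(i)). But $u^*(\ybar,\cdot)$ is strictly decreasing near $0$ with limit $\pi\ybar=\ubar$, so $u^*(\ybar,\eta)<\ubar$ for small $\eta$, a contradiction.

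For $s\to\infty$, I would fix $\eta<1$ and note from \eqref{eq:complement-br-payoff-expanded} that $u^*(s,\eta)$ grows like $s^{2(1-\eta)/(2-\eta)}\to\infty$. Hence for large $s$ we have $u^*(s,\eta)>\ubar$ and $\eta<s/\pi$, which forces $\eta\le\eta\IC(s)$. Since $\eta<1$ was arbitrary and $\eta\IC<1$, the limit is $1$.
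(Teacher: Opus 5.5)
Your proposal is correct and follows essentially the same route as the paper: the intermediate value theorem on the strictly decreasing branch of $u^*(s,\cdot)$ for existence, uniqueness and the characterization of the incentive constraint; the implicit function theorem for continuity and monotonicity; a contradiction argument for the lower limit; and divergence of $u^*(s,\eta)$ for fixed $\eta<1$ for the upper limit. The only difference is cosmetic. For the lower limit you fix $\eta<L$ and need only continuity of $u^*$ in $s$, whereas the paper passes to the limit along the curve $(s,\eta\IC(s))$ and uses joint continuity of $u^*$.
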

\ifbodyproofs\subsection{Proof of \texorpdfstring{\cref{lem:ic-intensity}}{Lemma~\ref{lem:ic-intensity}}}
\label{appsec:ic-intensity}

\cref{lem:ic-intensity} is the special case of \cref{lem:complement-ic-intensity} with~$q=1$.

\begin{lemma}[Incentive-compatible effort intensities with complementary AI]
    \label{lem:complement-ic-intensity}
    Suppose~\eqref{eq:complement-sb-restriction-1} holds.
    For all skill levels~$s>q\ybar$, there is a unique effort intensity
    \[ \eta\IC(s;q)\in\left(0,\min\left\{\frac{s}{\pi q^2},1\right\}\right) \]
    satisfying~\eqref{eq:complement-ic-intensity-definition}.
    The mapping~$s\mapsto\eta\IC(s;q)$ is continuous and increasing on its domain, and
    \[ \lim_{s\to q\ybar}\eta\IC(s;q)=0
    \ \ \text{and}\ \ 
    \lim_{s\to\infty}\eta\IC(s;q)=1. \]
    Moreover, the incentive constraint~\eqref{eq:complement-ic-constraint} holds if and only if~$\eta_t\le\eta\IC(s_t;q)$.
\end{lemma}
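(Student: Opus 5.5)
The plan is to treat $s$ as fixed, locate $\eta\IC(s;q)$ by the intermediate value theorem using the shape of $\eta\mapsto u^*(s,\eta;q)$ from \cref{lem:complement-br-payoff}, and then vary $s$ using monotonicity of $u^*$ in $s$. Throughout, the delegation payoff is $\pi q\ybar$, and \eqref{eq:complement-ic-intensity-definition} reads $u^*(s,\eta;q)=\pi q\ybar$.

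\emph{Step 1: existence, uniqueness and the IC characterization for fixed $s$.} Fix $s>q\ybar$ and set $\eta_{\min}\equiv\min\{s/\pi q^2,1\}$.
\begin{enumerate}
\item[(a)] The closed form \eqref{eq:complement-br-payoff-expanded} shows that $u^*$ is continuous in $(s,\eta)$ on $(0,\infty)\times(0,1]$.
\item[(b)] The same formula gives $\lim_{\eta\to0}u^*(s,\eta;q)=\pi s>\pi q\ybar$.
\item[(c)] At the other end, $u^*(s,1;q)=\pi^2q^2/2<\pi q\ybar$ by \cref{lem:complement-sb-restriction}(i), which uses \eqref{eq:complement-sb-restriction-1}.
\item[(d)] By \cref{lem:complement-br-payoff}(iv), $u^*$ is strictly decreasing in $\eta$ on $(0,\eta_{\min})$ and non-decreasing on $[\eta_{\min},1]$. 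Hence $u^*(s,\eta;q)\le\pi^2q^2/2<\pi q\ybar$ on $[\eta_{\min},1]$.
\end{enumerate}
The intermediate value theorem on $(0,\eta_{\min})$ then yields a unique root $\eta\IC(s;q)\in(0,\eta_{\min})$. On $(0,\eta_{\min})$ we have $u^*\ge\pi q\ybar$ exactly when $\eta\le\eta\IC(s;q)$, and on $[\eta_{\min},1]$ the constraint fails. This gives the ``if and only if'' statement for \eqref{eq:complement-ic-constraint}.

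\emph{Step 2: monotonicity and continuity in $s$.} Take $q\ybar<s<s'$. By \cref{lem:complement-br-payoff}(iii), since $\eta\IC(s;q)<1$,
\[ u^*(s',\eta\IC(s;q);q)>u^*(s,\eta\IC(s;q);q)=\pi q\ybar. \]
The characterization from Step 1, applied at $s'$, then gives $\eta\IC(s;q)<\eta\IC(s';q)$. So $s\mapsto\eta\IC(s;q)$ is strictly increasing.

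For continuity I would apply the implicit function theorem to $F(s,\eta)\equiv u^*(s,\eta;q)-\pi q\ybar$.
\begin{enumerate}
\item[(a)] The derivative formula in the proof of \cref{lem:complement-br-payoff} gives $\partial F/\partial\eta=-\frac{2u^*}{(2-\eta)^2}\log\bigl(s/\pi\eta q^2\bigr)$.
\item[(b)] This is strictly negative at $\eta=\eta\IC(s;q)$, because $\eta\IC(s;q)<s/\pi q^2$.
\item[(c)] Therefore $\eta\IC(\cdot;q)$ is locally $C^1$, hence continuous on $(q\ybar,\infty)$.
\end{enumerate}
A monotone function whose range is an interval would also work, but the implicit function theorem is cleaner.

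\emph{Step 3: the limits.} As $s\to\infty$, fix any $\eta<1$. The exponent $2(1-\eta)/(2-\eta)$ in \eqref{eq:complement-br-payoff-expanded} is positive, so $u^*(s,\eta;q)\to\infty$. Hence $\eta\IC(s;q)>\eta$ for all large $s$. Since $\eta\IC(s;q)<1$, the limit is $1$.

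The limit $s\to q\ybar$ is the step I expect to be most delicate, because $s=q\ybar$ lies outside the domain. I would argue by contradiction.
\begin{enumerate}
\item[(a)] By monotonicity, $\eta\IC(s;q)$ has a limit as $s\downarrow q\ybar$. Suppose this limit is at least some $\varepsilon>0$.
\item[(b)] Shrink $\varepsilon$ so that $\varepsilon<\min\{q\ybar/\pi q^2,1\}$.
\item[(c)] Step 1 gives $u^*(s,\varepsilon;q)\ge\pi q\ybar$ for all $s>q\ybar$. Letting $s\downarrow q\ybar$ and using joint continuity, $u^*(q\ybar,\varepsilon;q)\ge\pi q\ybar$.
\item[(d)] But at skill $q\ybar$ the map $\eta\mapsto u^*(q\ybar,\eta;q)$ is strictly decreasing on $(0,\varepsilon]$, with limit $\pi q\ybar$ as $\eta\to0$. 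Hence $u^*(q\ybar,\varepsilon;q)<\pi q\ybar$.
\end{enumerate}
This contradiction shows the limit is zero.
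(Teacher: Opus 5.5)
Your proposal is correct and follows essentially the same route as the paper: the intermediate value theorem on the decreasing branch of $\eta\mapsto u^*(s,\eta;q)$ gives existence, uniqueness, and the IC characterization; the implicit function theorem gives continuity; and the limits come from divergence of $u^*$ as $s\to\infty$ and a contradiction argument as $s\to q\ybar$. The only minor differences are that you obtain strict monotonicity in $s$ by a direct comparison using \cref{lem:complement-br-payoff}(iii) rather than from the sign of the implicit derivative. In addition, your lower-limit argument works with an $\varepsilon$ below $\min\{q\ybar/\pi q^2,1\}$, which sidesteps the paper's implicit need for the limit to lie in the region where $u^*(q\ybar,\cdot\,;q)$ is decreasing.
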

\begin{proof}
    Suppose~\eqref{eq:complement-sb-restriction-1} holds and let~$s>q\ybar$.
    Then~$\pi^2q^2/2<q\ubar<\pi s$ by \cref{lem:complement-sb-restriction} and the fact that~$\ubar\equiv\pi\ybar$.
    By \cref{lem:complement-br-payoff}, the best-response payoff~$u^*(s,\eta;q)$ falls continuously from~$\pi s$
    to
    \[ \min_{\eta\in(0,1]}u^*(s,\eta;q)=\frac{\pi^2q^2}{2}\begin{cases}
        \left(2-\frac{s}{\pi q^2}\right)\frac{s}{\pi q^2} & \text{if}\ \frac{s}{\pi q^2}<1 \\
        1 & \text{if}\ \frac{s}{\pi q^2}\ge1
    \end{cases} \]
    as~$\eta$ rises from zero to
    \[ \eta_\text{min}(s;q)\equiv\min\left\{\frac{s}{\pi q^2},1\right\}, \]
    then (if~$s/\pi q^2<1$) rises continuously to~$u^*(s,1;q)=\pi^2q^2/2$ as~$\eta$ rises to one.
    So, by the intermediate value theorem, there is a unique effort intensity
    \[ \eta\IC(s;q)\in\left(0,\eta_\text{min}(s;q)\right) \]
    satisfying~\eqref{eq:complement-ic-intensity-definition}.
    Moreover, since~$u^*(s,\eta;q)$ is decreasing in~$\eta$ over~$(0,\eta_{\text{min}}(s;q))$ and bounded above by~$\pi^2q^2/2$ over~$[\eta_\text{min}(s;q),1]$, the incentive constraint~$u^*(s,\eta;q)\ge\pi q\ybar$ holds if and only if~$\eta\le\eta\IC(s;q)$.
    
    Next, I show~$\eta\IC(s;q)$ is continuous and increasing in~$s>q\ybar$.
    Now~$u^*(s,\eta;q)$ is continuously differentiable in~$s$ and~$\eta$, and \cref{lem:complement-br-payoff} implies
    \[ \parfrac{u^*(s,\eta;q)}{\eta}\bigg\vert_{\eta=\eta\IC(s;q)}<0 \]
    because~$0<\eta\IC(s;q)<\eta_\text{min}(s;q)$.
    So, by the implicit function theorem, the mapping~$s\mapsto\eta\IC(s;q)$ is differentiable and, thus, continuous.
    Its derivative~$\partial\eta\IC(s;q)/\partial s$ satisfies
    \[ \parfrac{u^*(s,\eta;q)}{s}\bigg\vert_{\eta=\eta\IC(s;q)}+\parfrac{u^*(s,\eta;q)}{\eta}\bigg\vert_{\eta=\eta\IC(s;q)}\parfrac{\eta\IC(s;q)}{s}=0, \]
    obtained by applying the chain rule to the identity~\eqref{eq:complement-ic-intensity-definition}.
    But \cref{lem:complement-br-payoff} implies
    \[ \parfrac{u^*(s,\eta;q)}{s}\bigg\vert_{\eta=\eta\IC(s;q)}>0
        \ \ \text{and}\ \ 
        \parfrac{u^*(s,\eta;q)}{\eta}\bigg\vert_{\eta=\eta\IC(s;q)}<0, \]
    from which it follows that~$\partial\eta\IC(s;q)/\partial s>0$.
    
    Finally, I determine the limits of~$\eta\IC(s;q)$ as~$s\to q\ybar$ and as~$s\to\infty$.

    Consider the lower limit as~$s\to q\ybar$.
    Now~$\eta\IC(s;q)$ is increasing in~$s$ and bounded below by zero, and so the limit~$\ell\ge0$ of~$\eta\IC(s;q)$ as~$s\to q\ybar$ from above exists.
    If~$\ell>0$, then the continuity of~$u^*$ implies~$u^*(q\ybar,\ell;q)=\pi q\ybar$.
    But for all~$\eta\in(0,\eta_\text{min}(s;q))$ we have
    \[ u^*(q\ybar,\eta;q)<\lim_{\eta\to0}u^*(q\ybar,\eta;q)=\pi q\ybar; \]
    a contradiction.
    Thus~$\ell=0$.

    Now consider the upper limit as~$s\to\infty$.
    For all~$\eta'\in(0,1)$, we have~$u^*(s,\eta';q)\to\infty$ as~$s\to\infty$ (by \cref{lem:complement-br-payoff}) and so~$\eta\IC(s;q)>\eta'$ eventually.
    So we must have~$\eta\IC(s;q)\to1$ as~$s\to\infty$.
\end{proof}
\fi

\begin{figure}
    \centering
    \includegraphics[width=0.55\linewidth]{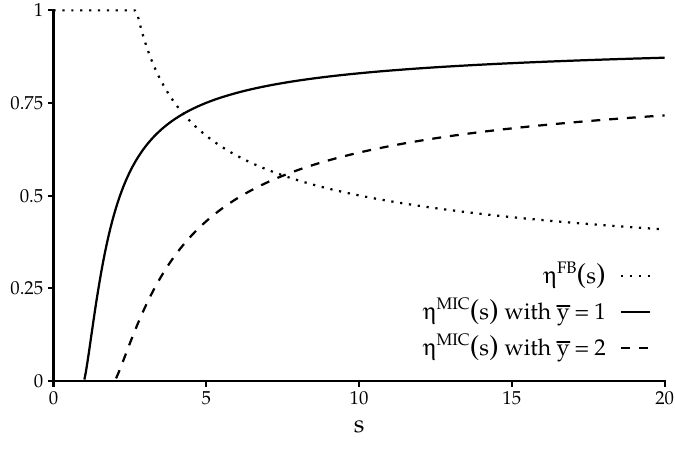}
    \caption{
    First-best and maximum IC effort intensities when~$\pi=1$ and~$\ybar\in\{1,2\}$.
    }
    \label{fig:ic-intensity}
\end{figure}
\cref{fig:ic-intensity} shows that the ``maximum incentive-compatible (IC) effort intensity''~$\eta\IC(s)$ is increasing in the skill level~$s$ and decreasing in the AI-generated output~$\ybar$.
Intuitively, if the student has more skill, then he is more willing to work---by \cref{lem:br-payoff}, his best-response payoff~$u^*(s,\eta)$ is increasing in~$s$---and the incentive constraint~\eqref{eq:ic-constraint} has less bite.
In contrast, if AI is more productive, then the student is more tempted to delegate and the incentive constraint has more bite.

\subsubsection{Optimal effort intensities}
\label{sec:sb-solution}

The student works on task~$t$ if and only if the incentive constraint~\eqref{eq:ic-constraint} holds.
But \cref{lem:ic-intensity} says the constraint holds precisely when the task's effort intensity~$\eta_t$ is at most~$\eta\IC(s_t)$.
So if~$\eta\IC(s_t)$ is larger than the first-best intensity~$\eta\FB(s_t)$ defined in \cref{sec:fb}, then the first-best course is incentive-compatible because setting~$\eta_t=\eta\FB(s_t)$ makes the student prefer to work.
In contrast, if~$\eta\IC(s_t)$ is \emph{smaller} than~$\eta\FB(s_t)$, then the first-best course is \emph{not} incentive-compatible because setting~$\eta_t=\eta\FB(s_t)$ makes the student prefer to delegate.
Then the teacher must choose~$\eta_t$ smaller than~$\eta\FB(s_t)$ to satisfy~\eqref{eq:ic-constraint}.
But \cref{lem:br-effort,lem:fb-intensity} imply the student's best-response effort~$x^*(s_t,\eta)$ is increasing in~$\eta$ over~$(0,\eta\FB(s_t))$.
So if~$\eta\IC(s_t)<\eta\FB(s_t)$, then the teacher maximizes~$x^*(s_t,\eta_t)$ subject to~\eqref{eq:ic-constraint} by setting~$\eta_t=\eta\IC(s_t)$.

Thus, for each task~$t\in\Tcal$, the teacher optimally sets the effort intensity~$\eta_t$ equal to the minimum of~$\eta\IC(s_t)$ and~$\eta\FB(s_t)$.
I characterize this minimum in \cref{lem:sb-intensity}.

\begin{lemma}[Second-best effort intensities]
    \label{lem:sb-intensity}
    Suppose~\eqref{eq:sb-restriction-1} holds and define
    \begin{equation}
        \label{eq:sb-intensity}
        \eta\SB(s)\equiv\min\left\{\eta\IC(s),\,\eta\FB(s)\right\}
    \end{equation}
    for all skill levels~$s>\ybar$.
    There is a unique skill level~$\shat\in\left(\max\left\{\ybar,\,\pi e\right\},\,\infty\right)$ satisfying~$\eta\IC(\shat)=\eta\FB(\shat)$.
    Moreover, the second-best effort intensity
    \[ \eta\SB(s)=\begin{cases}
            \eta\IC(s) & \text{if}\ s<\shat \\
            \eta\FB(s) & \text{if}\ s\ge\shat
        \end{cases} \]
    is increasing in~$s$ when~$s<\shat$ and decreasing in~$s$ when~$s>\shat$.
\end{lemma}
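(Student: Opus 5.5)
The plan is to study the gap function $D(s)\equiv\eta\IC(s)-\eta\FB(s)$ on the domain $(\ybar,\infty)$, where both terms are defined once \eqref{eq:sb-restriction-1} holds. The argument combines \cref{lem:ic-intensity}, which gives continuity, strict increase and the limits of $\eta\IC$, with \cref{lem:fb-intensity} and the continuity and limit properties recorded in \cref{lem:complement-fb-intensity} at $q=1$. From these, $D$ is continuous and strictly increasing, because $\eta\IC$ is increasing and $\eta\FB$ is non-increasing: it is constant on $(0,\pi e]$ and decreasing on $(\pi e,\infty)$. Strict monotonicity of $D$ gives uniqueness of any root $\shat$. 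It also shows that $D<0$ below $\shat$ and $D>0$ above it, so the minimum in \eqref{eq:sb-intensity} equals $\eta\IC(s)$ for $s<\shat$ and $\eta\FB(s)$ for $s\ge\shat$, with equality at $\shat$ itself.

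For existence I will use the intermediate value theorem. As $s\to\infty$, $\eta\IC(s)\to1$ while $\eta\FB(s)\to0$, so $D(s)>0$ for large $s$. At the lower end I show $D<0$ on the whole set $(\ybar,\max\{\ybar,\pi e\}]$, whenever it is nonempty, together with a neighbourhood of $\ybar$.

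If $\ybar<s\le\pi e$, then $\eta\FB(s)=1$ by \cref{lem:fb-intensity}(i), while $\eta\IC(s)<1$ by \cref{lem:ic-intensity}, so $D(s)<0$. If instead $\ybar\ge\pi e$, I use continuity of $\eta\FB$ on $(0,\infty)$. Since $\eta\FB(\ybar)>0$ and $\eta\IC(s)\to0$ as $s\downarrow\ybar$, we get $D(s)<0$ for $s$ close to $\ybar$. In either case there is a root, and it lies strictly above $\max\{\ybar,\pi e\}$. This last point matters because $D<0$ at every $s\le\pi e$ in the domain, and the domain excludes $\ybar$ itself.

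Monotonicity of $\eta\SB$ is then read off directly. On $(\ybar,\shat)$ it equals $\eta\IC$, which is increasing by \cref{lem:ic-intensity}. On $(\shat,\infty)\subset(\pi e,\infty)$ it equals $\eta\FB$, which is decreasing there by \cref{lem:fb-intensity}(ii).

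The only delicate step is the lower-end sign of $D$ when $\ybar\ge\pi e$. There one cannot use $\eta\FB=1$ and must instead use continuity of $\eta\FB$ at $\ybar$ together with the limit $\lim_{s\to\ybar}\eta\IC(s)=0$. This is also the place to check that $\shat$ is strictly greater than $\pi e$ and not merely at least $\pi e$.
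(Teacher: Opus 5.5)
Your proof is correct and follows the paper's own argument. The paper also applies the intermediate value theorem, using that $\eta\IC$ rises continuously from zero to one while $\eta\FB$ equals one up to $\pi e$ and then falls continuously to zero, so the two curves cross exactly once above $\max\{\ybar,\pi e\}$. Your separate treatment of the lower end when $\ybar\ge\pi e$ (continuity of $\eta\FB$ at $\ybar$ together with $\eta\IC\to0$) and your check that $\shat>\pi e$ spell out steps the paper leaves implicit.
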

\ifbodyproofs\subsection{Proof of \texorpdfstring{\cref{lem:sb-intensity}}{Lemma~\ref{lem:sb-intensity}}}

\cref{lem:sb-intensity} is the special case of \cref{lem:complement-sb-intensity} with~$q=1$.
I prove \cref{lem:complement-sb-intensity} in \cref{appsec:complement-sb-intensity}.
\fi

\cref{fig:ic-intensity} shows that the ``second-best effort intensity''~$\eta\SB(s)=\min\{\eta\IC(s),\eta\FB(s)\}$ is increasing in the skill level~$s$ at low levels and decreasing in~$s$ at high levels.
The function~$\eta\SB:(\ybar,\infty)\to(0,1)$ attains its unique maximum at~$s=\shat$, where the maximum IC and first-best effort intensities are equal.
\cref{thm:sb-solution} uses the function to characterize second-best courses.

\begin{theorem}[Second-best courses]
    \label{thm:sb-solution}
    Suppose~\eqref{eq:sb-restriction-1} and~\eqref{eq:sb-restriction-2} hold, and~$(\eta_t)_{t\in\Tcal}$ solves the teacher's problem~\eqref{eq:problem}.
    Then~$\eta_t=\eta\SB(s_t)$ for each task~$t\in\Tcal$.
    Moreover, there exists~$t\SB\in\Tcal$ such that~$\eta_t$ is increasing in~$t$ over~$\{0,\ldots,t\SB\}$ and decreasing in~$t$ over~$\{t\SB+1,\ldots,T-1\}$.
\end{theorem}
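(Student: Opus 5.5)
The proof has two parts. First I pin down the per-task choice by induction on~$t$. Then I read off the shape of the path from the monotonicity of~$\eta\SB$ in \cref{lem:sb-intensity}.

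For the first part, I invoke the reduction at the start of \cref{sec:solution}. Because~$\Delta(s_t,\eta_t)$ is non-decreasing in~$s_t$, the teacher's problem decomposes into the per-task problems~\eqref{eq:task-solution}. That argument needs care: one should check that any optimal sequence must solve~\eqref{eq:task-solution} at every~$t$, not just that the greedy sequence is optimal. I would argue by contradiction. Take the first task at which an optimal~$\eta_t$ does not maximize IC effort. Replacing it with the maximizer strictly raises~$s_{t+1}$. Applying the greedy continuation from the resulting higher skill gives final skill at least as high as the original continuation, by monotonicity of~$\Delta$. So final skill is strictly higher, since increments are non-negative and~$s_{t+1}$ rose strictly.

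Next I show by induction that~$s_t>\ybar$ and that the student works on every task. The case~$t=0$ uses \cref{lem:sb-restriction}(ii). The inductive step uses the fact that working gives a positive increment by \cref{lem:br-effort}(i). With~$s_t>\ybar$, \cref{lem:ic-intensity} turns the constraint into~$\eta\le\eta\IC(s_t)$. By \cref{lem:fb-intensity} together with \cref{lem:br-effort}(iv), the effort~$x^*(s_t,\cdot)$ is strictly increasing on~$(0,\eta\FB(s_t)]$ and non-increasing afterward. Hence the unique maximizer over~$(0,\eta\IC(s_t)]$ is~$\min\{\eta\IC(s_t),\eta\FB(s_t)\}=\eta\SB(s_t)$. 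Uniqueness needs a small check: on~$(\eta\FB(s_t),1]$ the effort is strictly below its peak, because~$\eta\FB$ is the unique maximizer. This gives~$\eta_t=\eta\SB(s_t)$. Since the student works on every task, the skill path~$s_0<s_1<\cdots<s_T$ is strictly increasing.

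For the second part, the path is strictly increasing and \cref{lem:sb-intensity} says~$\eta\SB$ is increasing on~$(\ybar,\shat)$ and decreasing on~$(\shat,\infty)$. I define~$t\SB$ as the largest~$t\in\Tcal$ with~$s_t\le\shat$, and set~$t\SB=0$ if~$s_0>\shat$. If~$s_0>\shat$, the whole sequence is decreasing, which fits the claim with~$t\SB=0$: the set~$\{0\}$ is trivially increasing, and one also needs~$\eta_0>\eta_1$, which holds because the sequence is decreasing.

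The main obstacle is the boundary at~$\shat$. Suppose~$s_{t\SB}\le\shat<s_{t\SB+1}$. Then~$\eta_{t\SB+1}$ could exceed or fall below~$\eta_{t\SB}$, so the peak index is ambiguous. I would resolve this by choosing~$t\SB$ as the argmax of~$\eta_t$ over~$\{t:s_t\le\shat\}\cup\{t:s_t>\shat\ \text{minimal}\}$. Continuity of~$\eta\SB$ at~$\shat$, which follows from continuity of~$\eta\IC$ and~$\eta\FB$ and their equality there, makes~$\eta\SB$ single-peaked. On the left of~$t\SB$ the skills lie in the increasing region, or the step to~$t\SB$ itself raises~$\eta$. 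On the right of~$t\SB$ all skills exceed~$\shat$ and lie in the decreasing region.

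If~$t\SB$ is taken as the first index with~$s_t>\shat$, I must also confirm that~$\eta_{t\SB}$ then exceeds every earlier~$\eta_t$. This holds whenever~$\eta_{t\SB}>\eta_{t\SB-1}$; otherwise I shift~$t\SB$ down by one. One of these two choices always works because the path is increasing up to the last index with~$s_t\le\shat$ and decreasing from the first index with~$s_t>\shat$.
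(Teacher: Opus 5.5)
Your proposal is correct and follows the paper's route. You reduce to the per-task problem~\eqref{eq:task-solution}, identify its unique solution as $\min\{\eta\IC(s_t),\eta\FB(s_t)\}=\eta\SB(s_t)$, and read off the single-peaked sequence from the strictly increasing skill path. Your choice between the last index with $s_t\le\shat$ and the next one coincides with the paper's choice of $t\SB$ as the smallest maximizer of $t\mapsto\eta\SB(s_t)$.

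Your exchange argument usefully makes explicit why every optimal sequence must solve~\eqref{eq:task-solution}, which the paper leaves implicit. Strictness there holds because each new increment weakly exceeds the old one, so the skill gap never shrinks, not merely because increments are non-negative. Your case $s_0>\shat$ is vacuous, since~\eqref{eq:sb-restriction-1} forces $s_0<\shat$.
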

\ifbodyproofs\subsection{Proof of \texorpdfstring{\cref{thm:sb-solution}}{Theorem~\ref{thm:sb-solution}}}

\cref{thm:sb-solution} is the special case of \cref{thm:complement-sb-solution} with~$q=1$.

\begin{theorem}[Second-best courses with complementary AI]
    \label{thm:complement-sb-solution}
    Suppose~\eqref{eq:complement-sb-restriction-1} and~\eqref{eq:complement-sb-restriction-2} hold, and~$(\eta_t)_{t\in\Tcal}$ solves~\eqref{eq:complement-problem}.
    Then~$\eta_t=\eta\SB(s_t;q)$ for each~$t\in\Tcal$.
    Moreover, there exists~$t\SB\in\Tcal$ such that~$\eta_t$ is increasing in~$t$ over~$\{0,\ldots,t\SB\}$ and decreasing in~$t$ over~$\{t\SB+1,\ldots,T-1\}$.
\end{theorem}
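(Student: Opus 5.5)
The plan has three steps: reduce the teacher's problem to a task-by-task choice, show the chosen intensity is \(\eta\SB(s_t;q)\), then read the shape of the sequence off the shape of \(s\mapsto\eta\SB(s;q)\) from \cref{lem:complement-sb-intensity}.

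\emph{Reduction to~\eqref{eq:complement-task-solution}.} The skill increment \(\Delta(s_t,\eta_t)\) is non-decreasing in \(s_t\) for every fixed \(\eta_t\). So any deviation that lowers \(s_{t+1}\) can only lower every subsequent increment under the continuation policy. An induction backward from \(T-1\) (equivalently, a forward argument comparing any feasible sequence with the greedy one) then shows that an optimal course must solve~\eqref{eq:complement-task-solution} at every \(t\). This reduction is only sketched in the text, so I will make it precise here.

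\emph{Each step picks \(\eta\SB(s_t;q)\).} By~\eqref{eq:complement-sb-restriction-2} and \cref{lem:complement-sb-restriction}, \(s_0>q\ybar\), and skill is non-decreasing, so \(s_t>q\ybar\) for all \(t\). \cref{lem:complement-ic-intensity} says the feasible set in~\eqref{eq:complement-task-solution} is \((0,\eta\IC(s_t;q)]\). By \cref{lem:complement-br-effort,lem:complement-fb-intensity}, \(x^*(s_t,\cdot;q)\) is increasing on \((0,\eta\FB(s_t;q)]\) and non-increasing beyond it. Its maximiser over the feasible interval is therefore unique and equals \(\min\{\eta\IC,\eta\FB\}=\eta\SB(s_t;q)\). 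For uniqueness when \(\eta\FB<\eta\IC\), I need strict decrease after \(\eta\FB\). This holds because the left side of~\eqref{eq:complement-br-effort-intensity-condition} is strictly decreasing, so the derivative is strictly negative past \(\eta\FB\). Since \(\eta\SB(s_t;q)\) is feasible, the student works at every \(t\), and \(s_{t+1}=s_t+r\,x^*(s_t,\eta_t;q)>s_t\) by positivity of effort. Hence \((s_t)\) is strictly increasing.

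\emph{Shape in \(t\).} By \cref{lem:complement-sb-intensity}, \(\eta\SB(\cdot;q)\) is increasing on \((q\ybar,\shat)\) and decreasing on \((\shat,\infty)\), and it is continuous since \(\eta\IC\) and \(\eta\FB\) are. Define \(t\SB\) as the largest \(t\in\Tcal\) with \(s_t\le\shat\), and \(t\SB=0\) if no such \(t\) exists (i.e.\ \(s_0>\shat\)). On \(\{0,\ldots,t\SB\}\) skills are strictly increasing within \((q\ybar,\shat]\), so \(\eta_t\) increases; at \(s=\shat\) the two branches agree, and the increase reaches \(\shat\) by continuity. On \(\{t\SB+1,\ldots,T-1\}\) skills exceed \(\shat\), so \(\eta_t\) decreases. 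The edge case \(s_0>\shat\) makes the increasing segment the singleton \(\{0\}\), but \(\eta_0>\eta_1\) must still hold. To get this I will define \(t\SB\) as the last index with \(s_t\le\shat\) when one exists, and otherwise note that the statement's decreasing part over \(\{1,\ldots\}\) together with \(\eta_0>\eta_1\) is consistent with choosing \(t\SB=0\); both segments are trivially monotone there.

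\textbf{Main obstacle.} The delicate step is the rigorous reduction to per-task greedy maximisation. I would argue by induction on the number of remaining tasks, using the monotonicity of \(\Delta\) in \(s\), that the value function \(V_k(s)\) of the last \(k\) tasks is non-decreasing. Then
\[
s+\Delta(s,\eta)+V(s+\Delta(s,\eta))
\]
is maximised by maximising \(\Delta(s,\eta)\), and I would check that the maximiser is unique whenever it is strictly better.
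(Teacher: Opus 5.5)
Your proposal is correct and follows the same route as the paper's proof. You reduce to the per-task problem, identify the constrained maximizer as $\min\{\eta\IC,\eta\FB\}=\eta\SB(s_t;q)$, and read the shape in $t$ off \cref{lem:complement-sb-intensity} using $s_{t+1}>s_t$. Your value-function argument makes rigorous the reduction the paper only sketches: with $V$ the value of the remaining gains, $s+\Delta+V(s+\Delta)$ is strictly increasing in $\Delta$. Your choice of $t\SB$ as the last index with $s_t\le\shat(q)$ works as well as the paper's smallest maximizer of $t\mapsto\eta\SB(s_t;q)$, and the edge case $s_0>\shat(q)$ is vacuous, since \eqref{eq:complement-sb-restriction-1} forces $s_0<\shat(q)$.
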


\begin{proof}
    By \cref{lem:complement-sb-restriction}, we have~$s_t>q\ybar$ for each task~$t\in\Tcal$.

    Fix~$t\in\Tcal$.
    \cref{lem:complement-br-effort} implies~$x^*(s_t,\eta;q)$ is increasing in~$\eta$ when~$\eta<\eta\FB(s_t;q)$ and decreasing in~$\eta$ when~$\eta>\eta\FB(s_t;q)$.
    But, by \cref{lem:complement-ic-intensity}, we have~$u^*(s_t,\eta;q)\ge\pi q\ybar$ if and only if~$\eta\le\eta\IC(s_t;q)$.
    It follows from~\eqref{eq:complement-task-solution} that~$\eta_t=\eta\SB(s_t;q)$.
    Then~$u\SB(s_t;q)\equiv u^*(s_t,\eta\SB(s_t;q);q)\ge\pi q\ybar$ since~$\eta_t=\eta\SB(s_t;q)$ satisfies~\eqref{eq:complement-ic-constraint}.

    Now define~$\shat(q)$ as in \cref{lem:complement-sb-intensity}.
    Then~$\eta\SB(s;q)$ is continuous, increasing in~$s$ over~$(\pi y\ybar,\shat(q))$, and decreasing in~$s$ over~$(\shat(q),\infty)$.
    So, since~$\Tcal$ is finite and~$s_{t+1}>s_t$ for each~$t\in\Tcal$, the function~$\psi:\Tcal\to(0,1]$ defined by~$\psi(t)\equiv\eta\SB(s_t;q)$ has either two maximizers (when~$s_{t-1}<\shat(q)<s_t$ and~$\eta\IC(s_{t-1};q)=\eta\FB(s_t;q)$ for some~$t>0$) or one.
    Denoting its smallest maximizer by~$t\SB$ yields the result.
\end{proof}
\fi

First-best courses comprise tasks with non-increasing effort intensities (see \cref{thm:fb-solution}).
In contrast, second-best courses comprise tasks with increasing-then-decreasing effort intensities.
This non-monotonicity stems from the student's option to delegate.
On early tasks, his skill level~$s_t$ is low, so his payoff~$u\FB(s_t)$ in the first-best course is also low (by \cref{prop:fb-br-s}) and delegation is relatively attractive.
To prevent delegation, the teacher must raise the student's payoff from working, which she achieves by distorting effort intensities downward.
As the student works on tasks and gains skill, his payoff~$u\FB(s_t)$ in the first-best course rises, and so the teacher does not need to distort the course's design as much to make the student prefer working over delegating.
Thus, on early tasks, the second-best course becomes gradually more effort-intensive.
Eventually, the student has built enough skill that his payoff~$u\FB(s_t)$ in the first-best course exceeds his payoff~$\ubar\equiv\pi\ybar$ from delegating.
Then the teacher proceeds with the first-best course, which becomes gradually less effort-intensive (by \cref{thm:fb-solution}).

\cref{thm:sb-solution} formalizes a suggestion made by \cite{Chirikov-etal-2026-Science} in their discussion of AI misuse and assessment reform.
The authors suggest ``requiring students to document their process, justify their choices, [and] demonstrate understanding [of] underlying reasoning'' \citep[p.~820]{Chirikov-etal-2026-Science}.
These redesign choices correspond to making tasks more skill-intensive: the student must be able to identify and explain appropriate methods for completing tasks.

\subsubsection{Best-responses}
\label{sec:sb-br}

Suppose~\eqref{eq:sb-restriction-1} holds, fix a skill level~$s>\ybar$, and let
\[ x\FB(s)\equiv x^*(s,\eta\FB(s))
    \ \ \text{and}\ \  
    x\SB(s)\equiv x^*(s,\eta\SB(s)) \]
denote the student's best-response efforts under first- and second-best course designs.
\cref{lem:fb-intensity} implies
\[ x\SB(s)\le\max_{\eta\in(0,1]}x^*(s,\eta)=x\FB(s), \]
with equality if and only if~$\eta\SB(s)=\eta\FB(s)$.
\begin{figure}
    \centering
    \includegraphics[width=0.55\linewidth]{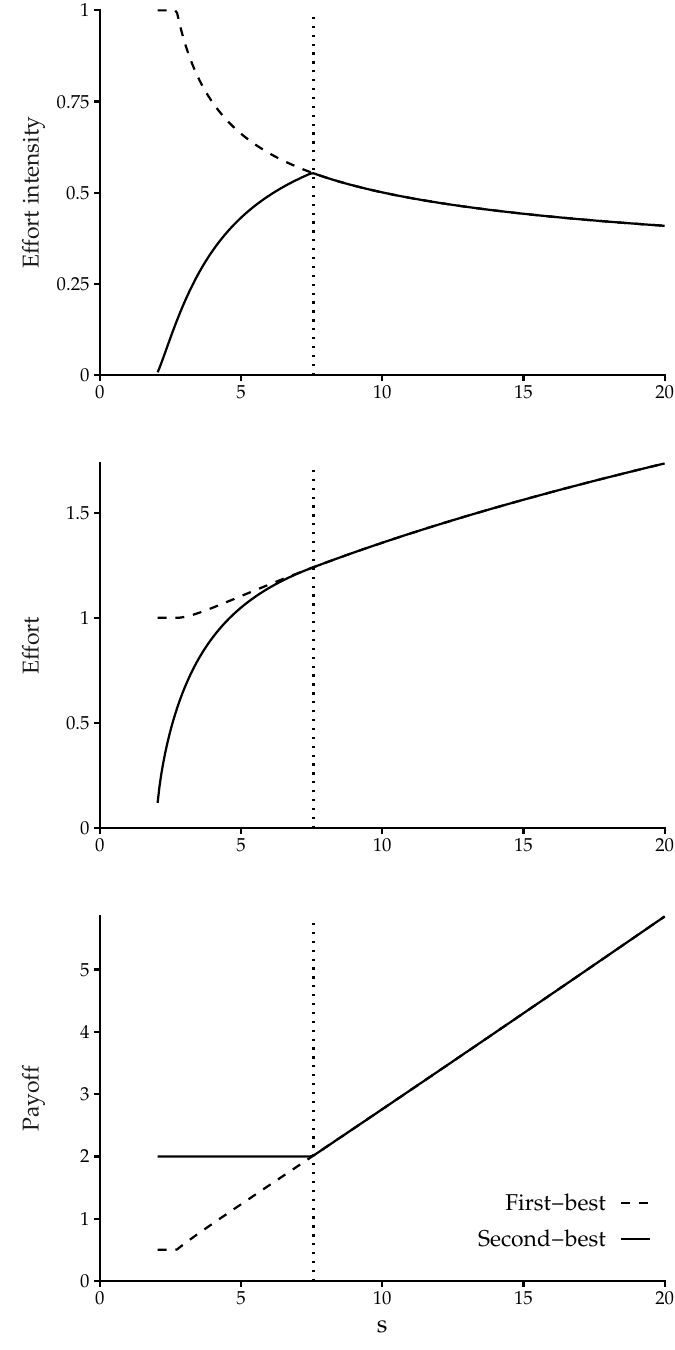}
    \caption{
    Optimal effort intensities, best-response efforts, and best-response payoffs when~$\pi=1$ and~$\ybar=2$.
    Dotted line indicates threshold~$\shat$ defined in \cref{lem:sb-intensity}.
    }
    \label{fig:sb-outcomes}
\end{figure}
\cref{fig:sb-outcomes} illustrates this bound, and shows~$x\SB(s)$ is increasing in the skill level~$s$ when the student has motivation parameter~$\pi=1$ and the AI generates output~$\ybar=2$.
\cref{prop:sb-effort} establishes that~$x\SB(s)$ is increasing in~$s$ for \emph{all} values of~$\pi$ and~$\ybar$, despite the second-best effort intensity~$\eta\SB(s)$ being non-monotone in~$s$.

\begin{proposition}[Second-best efforts]
    \label{prop:sb-effort}
    If~\eqref{eq:sb-restriction-1} holds, then~$x\SB(s)$ is increasing in~$s$ over~$(\ybar,\infty)$.
\end{proposition}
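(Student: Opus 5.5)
We split at the threshold $\shat$ from \cref{lem:sb-intensity}. On $[\shat,\infty)$ we have $\eta\SB(s)=\eta\FB(s)$ and so $x\SB(s)=x\FB(s)$. By \cref{prop:fb-br-s}, $x\FB$ is increasing on $(\pi e,\infty)$, and $\shat>\pi e$. So $x\SB$ is increasing on $[\shat,\infty)$.

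It therefore remains to treat $(\ybar,\shat)$, where $\eta\SB(s)=\eta\IC(s)$, and then to glue the two pieces together by continuity. Continuity holds because $\eta\IC$ and $\eta\FB$ are continuous (\cref{lem:ic-intensity,lem:fb-intensity}), so their minimum is too, and $x^*$ is continuous in $(s,\eta)$. A continuous function that is increasing on each of two adjacent intervals is increasing on their union.

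On $(\ybar,\shat)$, write $x\SB(s)=x^*(s,\eta\IC(s))$. We exploit the binding constraint $u^*(s,\eta\IC(s))=\ubar$ together with \eqref{eq:br-payoff}:
\[ \left(x\SB(s)\right)^2=\frac{2\eta\IC(s)\,\ubar}{2-\eta\IC(s)}. \]
The right-hand side is an increasing function of $\eta\IC(s)$, because $\eta\mapsto\eta/(2-\eta)$ is increasing on $(0,1)$. Moreover, $\eta\IC$ is increasing in $s$ by \cref{lem:ic-intensity}. Hence $x\SB$ is increasing on $(\ybar,\shat)$. This avoids any need to sign $\partial x^*/\partial\eta$ or the envelope-type terms along the IC curve. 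It also sidesteps the fact that an increase in $s$ raises $x^*$ directly, since the identity already incorporates the feasibility adjustment.

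The step I expect to need the most care is the gluing at $\shat$, together with checking that the IC piece is really active on all of $(\ybar,\shat)$. The latter follows from \cref{lem:sb-intensity}, which gives $\eta\SB=\eta\IC$ exactly for $s<\shat$. For the gluing, note that at $\shat$ both formulas agree: $\eta\IC(\shat)=\eta\FB(\shat)$, so both pieces give the same value of $x^*$. Hence the left limit equals $x\SB(\shat)$, and strict increase on each piece yields strict increase on all of $(\ybar,\infty)$.
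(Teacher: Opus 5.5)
Your proof is correct, but it takes a different route from the paper's.

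The paper does not split at $\shat$. It notes that $x\SB(s)$ is the maximum of $x^*(s,\eta)$ over the incentive-compatible set $\{\eta\in(0,1]:u^*(s,\eta)\ge\ubar\}$. For $s'>s>\ybar$ it then takes $\eta=\eta\SB(s)<1$. Because $u^*$ is non-decreasing in skill (\cref{lem:br-payoff}), this $\eta$ remains incentive-compatible at $s'$. Because $x^*(\cdot,\eta)$ is increasing in skill for $\eta<1$ (\cref{lem:br-effort}), it follows that $x\SB(s')\ge x^*(s',\eta)>x^*(s,\eta)=x\SB(s)$.

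This revealed-preference argument is shorter and more robust. It never needs to know which constraint binds, so it uses no closed form, no monotonicity of $\eta\IC$ (which the paper proves via the implicit function theorem), no appeal to \cref{prop:fb-br-s}, and no gluing at $\shat$. The same template also gives the comparative static in $\pi$.

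Your argument leans on more machinery, but it yields something the paper's does not. On $(\ybar,\shat)$, the identity $(x\SB(s))^2=2\eta\IC(s)\ubar/(2-\eta\IC(s))$ shows that constrained effort depends on $s$ only through $\eta\IC(s)$. This is exactly the representation the paper later uses to sign the effect of AI quality in \cref{thm:complement-sb-effort}.

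Two small points. First, \cref{lem:fb-intensity} as stated does not assert continuity of $\eta\FB$, but you do not need it. Left-continuity of $x\SB$ at $\shat$ follows from continuity of $\eta\IC$ and of $x^*$ in $(s,\eta)$, and that is all the gluing requires. Second, strictness on the constrained piece uses $\ubar>0$, which \eqref{eq:sb-restriction-1} guarantees.
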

\ifbodyproofs\subsection{Proof of \texorpdfstring{\cref{prop:sb-effort}}{Proposition~\ref{prop:sb-effort}}}

With~$q\ge1$ arbitrary, the second-best effort is given by
\[ x\SB(s;q)\equiv x^*(s,\eta\SB(s;q);q). \]
\cref{prop:sb-effort} follows immediately from letting~$q=1$ and invoking part~(i) of \cref{lem:complement-sb-effort}.

\begin{lemma}[Second-best efforts with complementary AI]
    \label{lem:complement-sb-effort}
    Suppose~\eqref{eq:complement-sb-restriction-1} holds and let~$s>q\ybar$.
    Then the second-best effort~$x\SB(s;q)$ is
    (i)~increasing in the skill level~$s$,
    (ii)~increasing in the motivation parameter~$\pi$,
    and
    (iii)~decreasing in the AI-generated output~$\ybar$ when~$u\FB(s;q)<\pi q\ybar$ and constant in~$\ybar$ when~$u\FB(s;q)\ge\pi q\ybar$.
\end{lemma}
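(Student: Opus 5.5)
The plan is to split at the threshold $\shat(q)$ from \cref{lem:complement-sb-intensity}. That lemma lets me write $x\SB(s;q)=x^*(s,\eta\IC(s;q);q)$ when $s<\shat(q)$ and $x\SB(s;q)=x\FB(s;q)$ when $s\ge\shat(q)$. I also record the regime boundary: $\eta\IC(s;q)<\eta\FB(s;q)$ exactly when $u\FB(s;q)<\pi q\ybar$. This holds because, by \cref{lem:complement-ic-intensity}, the incentive constraint holds iff $\eta\le\eta\IC(s;q)$. So the first-best intensity is incentive-compatible iff $u^*(s,\eta\FB(s;q);q)\ge\pi q\ybar$. By \cref{prop:complement-fb-br-s}, $u\FB(s;q)$ is increasing in $s$ on the relevant range, so the first-best regime is an up-set in $s$.

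\textbf{Part (i).} In the first-best regime $s>\shat(q)>\pi e q^2$, \cref{prop:complement-fb-br-s} gives $x\FB$ increasing. In the IC regime, I would use the closed form
\[ x^*(s,\eta;q)=\Bigl(\tfrac{2\eta}{2-\eta}\,u^*(s,\eta;q)\Bigr)^{1/2}. \]
This follows from \cref{lem:complement-br-payoff}, which gives $u^*=\frac{1}{\eta}(1-\frac{\eta}{2})(x^*)^2$. On the IC branch $u^*=\pi q\ybar$ identically, so
\[ x\SB(s;q)=\Bigl(\tfrac{2\eta\IC(s;q)}{2-\eta\IC(s;q)}\,\pi q\ybar\Bigr)^{1/2}. \]
This is increasing in $\eta\IC$, and $\eta\IC$ is increasing in $s$ by \cref{lem:complement-ic-intensity}. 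Hence $x\SB$ is increasing on $(q\ybar,\shat(q))$. Continuity of $x\SB$ at $\shat(q)$ comes from continuity of $\eta\IC$ and $\eta\FB$ (Lemmas~\ref{lem:complement-ic-intensity} and~\ref{lem:complement-fb-intensity}). Continuity plus strict increase on each side gives global strict increase.

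\textbf{Part (iii).} In the first-best regime, $x\FB$ does not involve $\ybar$, so $x\SB$ is constant in $\ybar$. In the IC regime, raising $\ybar$ raises the target payoff $\pi q\ybar$. Because $u^*$ is decreasing in $\eta$ on $(0,\eta_{\min})$, this forces $\eta\IC$ down. Then $x^*(s,\eta;q)$ is increasing in $\eta$ below $\eta\FB$ (Lemmas~\ref{lem:complement-br-effort} and~\ref{lem:complement-fb-intensity}), so $x\SB$ falls. I would make this rigorous with the implicit function theorem, using the strictly negative $\eta$-derivative from \cref{lem:complement-br-payoff}.

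\textbf{Part (ii).} This is the main obstacle, since $\pi$ enters both $u^*$ and the delegation payoff $\pi q\ybar$. In the first-best regime, the envelope theorem gives $\partial x\FB/\partial\pi=\partial x^*/\partial\pi>0$ by \cref{lem:complement-br-effort}. In the IC regime, I would use \eqref{eq:complement-br-payoff-expanded} and divide the IC identity by $\pi$, giving
\[ \Bigl(1-\tfrac{\eta}{2}\Bigr)\pi^{\eta/(2-\eta)}\eta^{\eta/(2-\eta)}q^{2\eta/(2-\eta)}s^{2(1-\eta)/(2-\eta)}=q\ybar. \]
The left side is increasing in $\pi$, and it is decreasing in $\eta$ wherever $u^*/\pi$ is, which is the IC region. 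So $\eta\IC$ rises with $\pi$. By the formula above, $x\SB=(2\eta\IC\pi q\ybar/(2-\eta\IC))^{1/2}$, which is increasing both directly in $\pi$ and through $\eta\IC$. The regime switch as $\pi$ varies is handled by continuity of $x\SB$ in $\pi$: it equals $x^*$ evaluated at $\min\{\eta\IC,\eta\FB\}$, both of which are continuous in $\pi$.

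The delicate check is that the $\eta$-derivative of $u^*/\pi$ has the same sign as that of $u^*$. This is immediate, because dividing by $\pi$ does not depend on $\eta$.
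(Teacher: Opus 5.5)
Your proposal is correct, but for parts (i) and (ii) it takes a different route from the paper. Part (iii) is essentially the paper's argument.

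The paper never splits into regimes. It fixes $\eta=\eta\SB(s;q)$ and makes two observations:
\begin{itemize}
\item raising $s$ (or $\pi$) raises $x^*(s,\eta;q)$ at that fixed $\eta$;
\item the same change weakly enlarges the incentive-compatible set $(0,\eta\IC(s;q)]$. For $\pi$ this uses $\partial(u^*/\pi)/\partial\pi>0$, the same fact behind your divided IC identity.
\end{itemize}
So the old intensity stays feasible, and the new second-best effort is at least $x^*$ at the old intensity, which strictly exceeds the old $x\SB$. This revealed-preference argument is global in one line. It needs no envelope theorem, no continuity, and no bookkeeping about where $\eta\IC$ and $\eta\FB$ cross.

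You instead split at $\shat(q)$ from \cref{lem:complement-sb-intensity}. On the IC branch you use the closed form $x\SB(s;q)=\bigl(2\pi q\ybar\,\eta\IC(s;q)/(2-\eta\IC(s;q))\bigr)^{1/2}$, the same identity the paper later uses to prove \cref{thm:complement-sb-effort}. On the first-best branch you use an envelope or maximum argument, and then you glue the two branches. Your route yields more explicit information; for example, on the IC branch $x\SB$ depends on $s$ only through $\eta\IC$. The cost is the gluing step.

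For (i) the gluing is fine. \Cref{lem:complement-sb-intensity} guarantees a single switch in $s$, so continuity at $\shat(q)$ plus strict increase on each side suffices.

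For (ii), ``handled by continuity'' needs one more sentence. There is no lemma saying the regime switches only once in $\pi$: both $\eta\IC$ and $\eta\FB$ rise with $\pi$. Continuity plus strict increase on each regime is not by itself enough for a general closed switching set. What closes the step is already in your computations:
\begin{itemize}
\item the branch $\pi\mapsto x^*(s,\eta\IC;q)$ is increasing in $\pi$ wherever $\eta\IC$ is defined, not just inside the IC regime;
\item $\pi\mapsto x\FB(s;q)$ is increasing in $\pi$ as a maximum of functions increasing in $\pi$;
\item the two branches coincide wherever $\eta\IC=\eta\FB$.
\end{itemize}
Hence at every $\pi$, including switch points, $x\SB$ lies strictly below all nearby values to the right and strictly above all nearby values to the left. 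Local strict increase at every point of an interval gives global strict increase. Alternatively, importing the paper's feasible-set argument for (ii) sidesteps the issue entirely.
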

\begin{proof}
    For all skill levels~$s>q\ybar$, let
    \[ H(s;q)\equiv\left\{\eta\in(0,1]:u^*(s,\eta;q)\ge\pi q\ybar\right\} \]
    be the set of incentive-compatible effort intensities.
    \cref{lem:complement-ic-intensity} implies~$H(s;q)=(0,\eta\IC(s;q)]$, while \cref{lem:complement-br-effort,lem:complement-fb-intensity} imply~$x^*(s,\eta;q)$ is increasing in~$\eta$ on~$(0,\eta\FB(s;q)]$ and decreasing in~$\eta$ on~$[\eta\FB(s;q),1]$.
    So the maximum 
    \[ x\SB(s;q)\equiv\max\left\{x^*(s,\eta;q):\eta\in H(s;q)\right\} \]
    is attained uniquely at~$\min\{\eta\IC(s;q),\eta\FB(s;q)\}=\eta\SB(s;q)$.
    Note also~$\eta\SB(s;q)\le\eta\IC(s;q)<1$.
    
    I use this information to prove parts~(i)--(iii):
    \begin{enumerate}

        \item[(i)]
        Let~$s'>s>q\ybar$ and~$\eta=\eta\SB(s;q)\in(0,1)$.
        \cref{lem:complement-br-effort} implies~$x^*(s',\eta;q)>x^*(s,\eta;q)$.
        But \cref{lem:complement-br-payoff} implies~$u^*(s,\eta;q)\ge u^*(s',\eta;q)$, so~$H(s;q)\subseteq H(s';q)$ and hence~$\eta\in H(s';q)$.
        So
        \[ x\SB(s';q)\ge x^*(s',\eta;q)>x^*(s,\eta;q)=x\SB(s;q). \]

        \item[(ii)]
        By \cref{lem:complement-br-effort}, raising~$\pi$ also raises~$x^*(s,\eta;q)$ at the initial~$\eta=\eta\SB(s;q)$.
        But~\eqref{eq:complement-br-payoff-expanded} implies
        \[ \parfrac{}{\pi}\left[\frac{u^*(s,\eta;q)}{\pi}\right]=\frac{\eta u^*(s,\eta;q)}{\pi^2(2-\eta)}>0 \]
        for all~$s>0$ and~$\eta\in(0,1]$, and so raising~$\pi$ also expands the feasible set~$H(s;q)$.
        It follows (by an argument analogous that used for part~(i)) that~$x\SB(s;q)$ is increasing in~$\pi$.

        \item[(iii)]
        Suppose~$u\FB(s;q)<\pi q\ybar$.
        Then~$x\SB(s;q)=x^*(s,\eta\IC(s;q);q)$ and~$\eta\IC(s;q)<\eta\FB(s;q)$, and \cref{lem:complement-br-effort} implies~$x^*(s,\eta;q)$ is increasing in~$\eta$ in a neighborhood of~$\eta\IC(s;q)$.
        But~$\eta\IC(s;q)$ is defined to solve the implicit equation
        \[ u^*(s,\eta\IC(s;q);q)=\pi q\ybar, \]
        and, since~$\eta\IC(s;q)<s/\pi q^2$, \cref{lem:complement-br-payoff} implies~$u^*(s,\eta;q)$ is decreasing in~$\eta$ in a neighborhood of~$\eta\IC(s;q)$.
        So if~$\ybar$ rises, then~$\eta\IC(s;q)$ must fall and hence~$x\SB(s;q)$ must also fall.

        Now suppose~$u\FB(s;q)\ge\pi q\ybar$.
        Then~$x\SB(s;q)=x^*(s,\eta\FB(s;q);q)$, which is constant in~$\ybar$ because~$\eta\FB(s;q)$ is constant in~$\ybar$.
        \qedhere

    \end{enumerate}
\end{proof}
\fi

Now let
\[ u\FB(s)\equiv u^*(s,\eta\FB(s))
    \ \ \text{and}\ \ 
    u\SB(s)\equiv u^*(s,\eta\SB(s)) \]
denote the student's best-response payoffs under first- and second-best course designs.
\cref{lem:ic-intensity,lem:sb-intensity} imply
\[ u\SB(s)=\max\{\ubar,u\FB(s)\}\ge u\FB(s), \]
with equality if and only if~$\eta\SB(s)=\eta\FB(s)$.%
\footnote{
Since~$u\FB(s)$ is increasing in~$s$ when~$s>\pi e$ (by \cref{prop:fb-br-s}) and exceeds~$\ubar$ precisely when~$s$ exceeds the threshold~$\shat>\pi e$ defined in \cref{lem:sb-intensity}, the second-best payoff~$u\SB(s)$ is constant in~$s$ when~$s<\shat$ and increasing in~$s$ when~$s\ge\shat$.
}
Thus, at all skill levels, the student exerts weakly less effort but obtains weakly higher payoffs under a second-best design than under a first-best design.
That he exerts less effort implies he develops less skill, making the second-best design worse from the teacher's perspective even though it may be better from the student's perspective.%
\footnote{
While the student has higher payoffs for a given skill level under a second-best design than under a first-best design, his skill levels~$(s_t)_{t\in\Tcal}$ under a second-best design are lower because he does not exert as much effort on early tasks.
Consequently, his payoff~$u\SB(s_t)$ on later tasks~$t$ in the second-best course may be smaller than his payoff~$u\FB(s_t)$ on such tasks in the first-best course.
The student's overall preference between courses depends on whether the increase in his payoffs on early tasks (from~$u\FB(s_t)$ to~$\ubar$) dominates the decrease in his payoffs on later tasks.
}

\subsection{Second-best skill gains}
\label{sec:sb-gains}

Suppose~\eqref{eq:sb-restriction-1} and~\eqref{eq:sb-restriction-2} hold, and the teacher designs a course maximizing the student's overall skill gain
\[ s_T-s_0=\sum_{t\in\Tcal}\Delta(s_t,\eta_t). \]
By \cref{thm:sb-solution}, the teacher chooses~$\eta_t=\eta\SB(s_t)$ for each task~$t\in\Tcal$.
This choice satisfies the incentive constraint~\eqref{eq:ic-constraint}, so the student exerts effort~$x\SB(s_t)\equiv x^*(s_t,\eta\SB(s_t))$ and gains skill
\[ \Delta(s_t,\eta_t)=rx\SB(s_t) \]
by working on each task~$t\in\Tcal$.

\begin{lemma}[Skill increments]
    \label{lem:sb-increments}
    Suppose~\eqref{eq:sb-restriction-1} and~\eqref{eq:sb-restriction-2} hold, and~$(\eta_t)_{t\in\Tcal}$ solves the teacher's problem~\eqref{eq:problem}.
    For each task~$t\in\Tcal$, the skill increment~$\Delta(s_t,\eta_t)$ is
    (i)~increasing in the student's initial skill~$s_0$,
    (ii)~increasing in the motivation parameter~$\pi$, 
    (iii)~increasing in the skill accumulation rate~$r$, and
    (iv)~decreasing in the AI-generated output~$\ybar$.
\end{lemma}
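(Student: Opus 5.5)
The plan is an induction on~$t$, resting on two facts. First, by \cref{thm:sb-solution} the increment is $\Delta(s_t,\eta_t)=r\,x\SB(s_t)$, so it depends on the primitives only through the factor~$r$, the function~$x\SB$, and the state~$s_t$. Second, \cref{lem:complement-sb-effort} (with~$q=1$) tells us how~$x\SB$ moves with each argument. Part~(i) says~$x\SB$ is increasing in~$s$, part~(ii) says it is increasing in~$\pi$, and part~(iii) says it is non-increasing in~$\ybar$ (strictly decreasing where~$u\FB(s)<\ubar$). The inductive hypothesis, for each comparative static, is that~$s_t$ moves weakly in the favourable direction. Combining it with these monotonicities then gives the result for~$\Delta_t$ and also for~$s_{t+1}=s_t+\Delta_t$.

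For~(i), raise~$s_0$ and keep the other primitives fixed; note that \eqref{eq:sb-restriction-1} and \eqref{eq:sb-restriction-2} must still hold, which I take as given within the comparison.
\begin{itemize}
\item Base step: $\Delta_0=r\,x\SB(s_0)$ is increasing in~$s_0$ by part~(i) of \cref{lem:complement-sb-effort}.
\item Inductive step: if~$s_t$ is increasing in~$s_0$, then~$\Delta_t=r\,x\SB(s_t)$ is increasing by the same lemma. Hence~$s_{t+1}=s_t+\Delta_t$ is increasing as a sum of increasing terms.
\end{itemize}

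For~(ii), raising~$\pi$ raises~$x\SB(s)$ pointwise, by part~(ii) of that lemma. If~$s_t$ is weakly higher, then
\[ x\SB(s_t;\pi')\ge x\SB(s_t;\pi)\quad\text{and}\quad x\SB(s_t';\pi')\ge x\SB(s_t;\pi') \]
for~$\pi'>\pi$ and~$s_t'\ge s_t$, with the first inequality strict. So~$\Delta_t$ increases strictly, and the induction propagates because~$s_{t+1}$ rises.

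For~(iii), the map~$r\mapsto r\,x\SB(s)$ is increasing, and~$s_t$ is non-decreasing in~$r$ by induction. Indeed,~$x\SB$ does not depend on~$r$, since~$\eta\FB$ and~$\eta\IC$ do not involve~$r$. Hence~$\Delta_t=r\,x\SB(s_t)$ is a product of a strictly increasing positive factor and a weakly increasing positive factor.

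Part~(iv) is the delicate step. Part~(iii) of \cref{lem:complement-sb-effort} gives only weak monotonicity in~$\ybar$ once~$u\FB(s_t)\ge\ubar$, that is, once~$s_t\ge\shat$. For strictness I will use the chain through the state. Under \eqref{eq:sb-restriction-1}, $u\FB(s_0)<\ubar$, so~$x\SB(s_0)$ is strictly decreasing in~$\ybar$, and hence so are~$\Delta_0$ and~$s_1$. Inductively, suppose~$s_t$ is strictly decreasing in~$\ybar$ (true for~$t\ge1$). Raising~$\ybar$ then has two effects:
\begin{itemize}
\item it weakly lowers~$x\SB(\cdot)$ pointwise;
\item it strictly lowers~$s_t$, and~$x\SB$ is strictly increasing in~$s$.
\end{itemize}
The composition therefore falls strictly, so~$\Delta_t$ is decreasing and~$s_{t+1}$ is strictly decreasing. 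For the case~$t=0$, strictness comes instead from \eqref{eq:sb-restriction-1} directly.

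The one care point is that raising~$\ybar$ changes the domain~$(\ybar,\infty)$ of~$x\SB$. I need~$s_t$ to stay above the new~$\ybar$, and \cref{lem:sb-restriction} together with \eqref{eq:sb-restriction-2} holding at the new parameters delivers this.
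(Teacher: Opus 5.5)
Your proposal is correct and follows essentially the same route as the paper. The paper also writes $\Delta(s_t,\eta_t)=r\,x\SB(s_t)$ via \cref{thm:sb-solution} and propagates the monotonicity of $x\SB$ in $s$, $\pi$, and $\ybar$ through the recursion $s_{t+1}=s_t+r\,x\SB(s_t)$, with \eqref{eq:sb-restriction-1} supplying strictness at $t=0$ in part~(iv). Your explicit induction for strictness and your handling of the shifting domain $(\ybar,\infty)$ are slightly more careful than the paper's chain argument, but the approach is the same.
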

\ifbodyproofs\subsection{Proof of \texorpdfstring{\cref{lem:sb-increments}}{Lemma~\ref{lem:sb-increments}}}

\cref{lem:sb-increments} is the special case of \cref{lem:complement-sb-increments} with~$q=1$.

\begin{lemma}[Skill increments with complementary AI]
    \label{lem:complement-sb-increments}
    Suppose~\eqref{eq:complement-sb-restriction-1} and~\eqref{eq:complement-sb-restriction-2} hold, and~$(\eta_t)_{t\in\Tcal}$ solves~\eqref{eq:complement-problem}.
    For each task~$t\in\Tcal$, the skill increment~$\Delta(s_t,\eta_t;q)\equiv s_{t+1}-s_t$ is
    (i)~increasing in the student's initial skill level~$s_0$,
    (ii)~increasing in the motivation parameter~$\pi$, 
    (iii)~increasing in the skill accumulation rate~$r$,
    and
    (iv)~decreasing in the AI-generated output~$\ybar$.
\end{lemma}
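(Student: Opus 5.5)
By \cref{thm:complement-sb-solution}, the optimal course sets $\eta_t=\eta\SB(s_t;q)$ on every task and the student works throughout. Hence the state evolves by the scalar recursion
\[ s_{t+1}=s_t+r\,x\SB(s_t;q), \]
and each increment is $\Delta(s_t,\eta_t;q)=r\,x\SB(s_t;q)$. The plan is to write $\Delta_t$ as a function $\Phi(s_t;\theta)\equiv r\,x\SB(s_t;q)$ of the current skill and the primitives $\theta=(s_0,\pi,r,\ybar)$. Each comparative static then splits into two channels: a direct effect of the parameter on $\Phi$ holding $s_t$ fixed, and an indirect effect through $s_t$. By \cref{lem:complement-sb-effort}(i), $\Phi$ is increasing in $s$. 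So it suffices to show two things: the direct effect has the claimed sign (weakly), and $s_t$ moves in a direction that makes the indirect effect reinforce it. The fact that $s_t$ moves the right way follows by induction on $t$, since $s_{t+1}=s_t+\Phi(s_t;\theta)$ is increasing in $s_t$ and in any parameter that raises $\Phi$.

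For part (i), $s_0$ has no direct effect on $\Phi$. The induction gives $s_t$ strictly increasing in $s_0$, and then \cref{lem:complement-sb-effort}(i) gives $\Delta_t$ strictly increasing in $s_0$.

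For part (iii), $\Phi$ is proportional to $r$ and $x\SB$ does not depend on $r$, so the direct effect is strictly positive. The induction then shows $s_t$ is weakly increasing in $r$ (strictly for $t\ge1$), so $\Delta_t$ is strictly increasing in $r$.

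For part (ii), \cref{lem:complement-sb-effort}(ii) gives $x\SB(s;q)$ strictly increasing in $\pi$ at fixed $s$. Combined with the induction, this gives strictness for every $t$.

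For part (iv), I would compare two values $\ybar<\ybar'$, both satisfying \eqref{eq:complement-sb-restriction-1} and \eqref{eq:complement-sb-restriction-2}. The domain $s>q\ybar$ holds along both paths by \cref{lem:complement-sb-restriction}. By \cref{lem:complement-sb-effort}(iii), $x\SB(s;q)$ is weakly decreasing in $\ybar$. It is strictly decreasing exactly when $u\FB(s;q)<\pi q\ybar'$, and the comparison needs care because the $\ybar'$-path is also where the constraint is tighter. At $t=0$, \eqref{eq:complement-sb-restriction-1} says $u\FB(s_0;q)<\pi q\ybar<\pi q\ybar'$, so the IC constraint binds at both values. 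The direct effect is therefore strict, and $s_1'<s_1$. For $t\ge1$, induction gives $s_t'<s_t$. Then
\[ x\SB(s_t';q;\ybar')\le x\SB(s_t';q;\ybar)<x\SB(s_t;q;\ybar), \]
where the strict inequality comes from strict monotonicity in $s$. So the decrease is strict at every $t$. Parts (i)–(iii) also get their strictness at $t=0$ from the direct effect, or for (i) from the state itself.

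The main obstacle is making sure the parameter changes stay inside the region where \eqref{eq:complement-sb-restriction-1} and \eqref{eq:complement-sb-restriction-2} hold. Otherwise \cref{thm:complement-sb-solution} cannot be invoked for both paths, which matters especially for $\pi$ and $s_0$, since raising them can violate \eqref{eq:complement-sb-restriction-1}. I would state the comparative statics as comparisons among parameter values that all satisfy the maintained assumptions. If a change exits the region, I would note that $x\SB$ is still well defined as the constrained maximizer, via \cref{lem:complement-sb-effort}, on the domain $s>q\ybar$. In that case the monotonicity argument above goes through unchanged, using $x\SB$ directly rather than the theorem's characterization.
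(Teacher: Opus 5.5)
Your proposal is correct and follows essentially the paper's own argument. Both use \cref{thm:complement-sb-solution} to write $\Delta(s_t,\eta_t;q)=r\,x\SB(s_t;q)$, then propagate each comparative static forward through $s_{t+1}=s_t+r\,x\SB(s_t;q)$, combining the direct effect (from \cref{lem:complement-sb-effort}(ii)--(iii), or the factor $r$) with the strict monotonicity of $x\SB$ in $s$ from \cref{lem:complement-sb-effort}(i). Your two-value comparison for $\ybar$, and your remarks on keeping parameter changes inside the region where \eqref{eq:complement-sb-restriction-1} and \eqref{eq:complement-sb-restriction-2} hold, make explicit what the paper leaves implicit without changing the argument.
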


\begin{proof}
    \cref{thm:complement-sb-solution} implies~$\Delta(s_t,\eta_t;q)=rx\SB(s_t;q)$ for each~$t\in\Tcal$.
    Moreover, \cref{lem:complement-sb-restriction} implies~$s_0>q\ybar$.
    I use this information establish parts~(i)--(iv):
    \begin{itemize}
        
        \item[(i)]
        \cref{lem:complement-sb-effort} says~$x\SB(s;q)$ is increasing in~$s$ over~$(q\ybar,\infty)$.
        So if~$s_0$ rises, then~$x\SB(s_0;q)$ rises, which makes~$s_1=s_0+rx\SB(s_0;q)$ rise, which makes~$x\SB(s_1;q)$ rise, which makes~$s_2=s_1+rx\SB(s_1;q)$ rise, and so on.
        Thus, if~$s_0$ rises, then~$x\SB(s_t;q)$ and~$\Delta(s_t,\eta_t;q)$ rise for each~$t\in\Tcal$.

        \item[(ii)]
        \cref{lem:complement-sb-effort} says~$x\SB(s;q)$ is increasing in~$s$ over~$(q\ybar,\infty)$ and is increasing in~$\pi$.
        So if~$\pi$ rises, then~$x\SB(s_0;q)$ rises, which makes~$s_1=s_0+rx\SB(s_0;q)$ rise, which together with~$\pi$ rising makes~$x\SB(s_1;q)$ rise, which makes~$s_2=s_1+rx\SB(s_1;q)$ rise, and so on.
        Thus, if~$\pi$ rises, then~$x\SB(s_t;q)$ and~$\Delta(s_t,\eta_t;q)$ rise for each~$t\in\Tcal$.

        \item[(iii)]
        \cref{lem:complement-sb-effort} says~$x\SB(s;q)$ is increasing in~$s$ over~$(q\ybar,\infty)$, and \cref{lem:complement-br-effort} implies~$x\SB(s_0;q)>0$.
        So if~$r$ rises, then~$s_1=s_0+rx\SB(s_0;q)$ rises, which makes~$x\SB(s_1;q)$ rise, which together with~$r$ rising makes~$s_2=s_1+rx\SB(s_1;q)$ rise, and so on.
        Thus, if~$r$ rises, then~$\Delta(s_t,\eta_t)=s_{t+1}-s_t$ rises for each~$t\in\Tcal$.

        \item[(iv)]
        Since~\eqref{eq:complement-sb-restriction-1} holds, we have~$u\FB(s_0;q)<\pi q\ybar$.
        \cref{lem:complement-sb-effort} then implies~$x\SB(s_0;q)$ is decreasing in~$\ybar$.
        \cref{lem:complement-sb-effort} also implies~$x\SB(s;q)$ is increasing in~$s$ over~$(q\ybar,\infty)$, and is non-increasing in~$\ybar$ for all~$s>q\ybar$.
        So if~$\ybar$ rises, then~$x\SB(s_0;q)$ falls, which makes~$s_1=s_0+rx\SB(s_0;q)$ fall, which together with~$\ybar$ rising makes~$x\SB(s_1;q)$ fall, which makes~$s_2=s_1+rx\SB(s_1;q)$ fall, and so on.
        Thus, if~$\ybar$ rises, then~$x\SB(s_t;q)$ and~$\Delta(s_t,\eta_t)$ fall for each~$t\in\Tcal$.
        \qedhere

    \end{itemize}
\end{proof}
\fi

The following result follows immediately from \cref{lem:sb-increments}.

\begin{theorem}[Overall skill gain]
    \label{thm:sb-overall-gain}
    Suppose~\eqref{eq:sb-restriction-1} and~\eqref{eq:sb-restriction-2} hold, and~$(\eta_t)_{t\in\Tcal}$ solves the teacher's problem~\eqref{eq:problem}.
    Then the student's overall skill gain~$(s_T-s_0)$ is
    (i)~increasing in his initial skill~$s_0$,
    (ii)~increasing in the motivation parameter~$\pi$,
    (iii)~increasing in the skill accumulation rate~$r$, and
    (iv)~decreasing in the AI-generated output~$\ybar$.
\end{theorem}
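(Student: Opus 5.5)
The plan is to write the overall gain as a telescoping sum and then read off each comparative static from \cref{lem:sb-increments}. By \cref{thm:sb-solution}, the optimal course sets $\eta_t=\eta\SB(s_t)$ on every task. By \cref{lem:ic-intensity}, this choice satisfies~\eqref{eq:ic-constraint}, so the student works on every task. Hence
\[ s_T-s_0=\sum_{t\in\Tcal}\Delta(s_t,\eta_t)=\sum_{t\in\Tcal}r\,x\SB(s_t), \]
where the path $(s_t)$ is generated recursively by $s_{t+1}=s_t+r\,x\SB(s_t)$ starting from~$s_0$.

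\textbf{Step 1: apply \cref{lem:sb-increments} termwise.} That lemma says each increment $\Delta(s_t,\eta_t)$ is increasing in $s_0$, $\pi$ and $r$, and decreasing in $\ybar$. Each claimed property concerns one parameter with the others held fixed. A finite sum of functions that are all strictly increasing (respectively decreasing) in that parameter is itself strictly increasing (respectively decreasing). This gives parts~(i)--(iv) directly.

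\textbf{Step 2: check that the comparison stays admissible.} \cref{lem:sb-increments} compares optimal courses at two parameter values, and both must satisfy~\eqref{eq:sb-restriction-1} and~\eqref{eq:sb-restriction-2}. So the statement should be read as comparing parameter values at which both restrictions hold, as in \cref{lem:sb-increments}. Under that reading the solution at each value is pinned down by \cref{thm:sb-solution}, and so the overall gain is a well-defined function of the parameters. I will state this explicitly.

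\textbf{Main obstacle.} The only delicate point is the one handled inside \cref{lem:sb-increments}: the whole skill path shifts with the parameter. For $\ybar$ this rests on strict monotonicity of $x\SB(s_0)$ in $\ybar$, which holds because~\eqref{eq:sb-restriction-1} makes the constraint bind at $t=0$. It also rests on weak monotonicity of $x\SB$ in $\ybar$ at later skill levels, and on $x\SB$ being increasing in~$s$ (\cref{prop:sb-effort}).

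Since \cref{lem:sb-increments} is assumed, this propagation is already done. The theorem therefore follows by summation over the finite set $\Tcal$; in particular the $t=0$ term alone already gives strictness in every case.
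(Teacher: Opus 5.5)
Your proposal is correct and matches the paper's proof, which likewise writes $s_T-s_0$ as the telescoping sum $\sum_{t\in\Tcal}(s_{t+1}-s_t)$ and applies \cref{lem:sb-increments} term by term. Your extra remark, that comparisons should be restricted to parameter values where both \eqref{eq:sb-restriction-1} and \eqref{eq:sb-restriction-2} hold, is a sensible clarification that the paper leaves implicit.
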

\ifbodyproofs\subsection{Proof of \texorpdfstring{\cref{thm:sb-overall-gain}}{Theorem~\ref{thm:sb-overall-gain}}}

\cref{thm:sb-overall-gain} is the special case of \cref{thm:complement-sb-overall-gain} with~$q=1$.

\begin{theorem}[Overall skill gain with complementary AI]
    \label{thm:complement-sb-overall-gain}
    Suppose~\eqref{eq:complement-sb-restriction-1} and~\eqref{eq:complement-sb-restriction-2} hold, and~$(\eta_t)_{t\in\Tcal}$ solves~\eqref{eq:complement-problem}.
    Then~$(s_T-s_0)$ is 
    (i)~increasing in the student's initial skill level~$s_0$,
    (ii)~increasing in the motivation parameter~$\pi$, 
    (iii)~increasing in the skill accumulation rate~$r$,
    and
    (iv)~decreasing in the AI-generated output~$\ybar$.
\end{theorem}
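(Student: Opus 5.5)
The plan is to deduce \cref{thm:complement-sb-overall-gain} directly from \cref{lem:complement-sb-increments} via the telescoping identity
\[ s_T-s_0=\sum_{t\in\Tcal}\left(s_{t+1}-s_t\right)=\sum_{t\in\Tcal}\Delta(s_t,\eta_t;q). \]
First, I will invoke \cref{thm:complement-sb-solution}. Under \eqref{eq:complement-sb-restriction-1} and \eqref{eq:complement-sb-restriction-2}, it says any solution sets $\eta_t=\eta\SB(s_t;q)$ on every task. Hence the student works throughout and $\Delta(s_t,\eta_t;q)=r\,x\SB(s_t;q)$. The skill path $(s_t)$ is therefore uniquely pinned down by the primitives through the recursion $s_{t+1}=s_t+r\,x\SB(s_t;q)$. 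As a result, $s_T-s_0$ is a well-defined function of $(s_0,\pi,r,\ybar)$, and its comparative statics are meaningful even though the optimal course itself depends on the primitives.

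Next, I will handle each part separately. \cref{lem:complement-sb-increments} says each summand is strictly increasing in $s_0$, $\pi$, and $r$, and strictly decreasing in $\ybar$. A finite sum of strictly monotone terms in the same direction is strictly monotone, since $T\ge2$ and each term moves strictly. This gives parts (i)--(iv) at once.

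The one point I would check carefully is that the comparative statics stay inside the region where the hypotheses hold. When we perturb a primitive, \eqref{eq:complement-sb-restriction-1} and \eqref{eq:complement-sb-restriction-2} must continue to hold, so that \cref{lem:complement-sb-increments} applies at both parameter values being compared. I will state the result as monotonicity over the set of parameters satisfying both restrictions, comparing any two such parameter vectors that differ in one coordinate. Since \cref{lem:complement-sb-increments} is itself phrased this way, no further argument is needed.

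Beyond this bookkeeping, I expect no real obstacle. The substantive work is the inductive propagation argument already done in \cref{lem:complement-sb-increments}, which rests on $x\SB$ being increasing in $s$ (\cref{lem:complement-sb-effort}).
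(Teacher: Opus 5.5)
Your proposal is correct and matches the paper's proof, which likewise deduces the result immediately from \cref{lem:complement-sb-increments} and the telescoping identity $s_T-s_0=\sum_{t\in\Tcal}(s_{t+1}-s_t)$. Your extra bookkeeping is implicit in the paper and harmless: you invoke \cref{thm:complement-sb-solution} to pin down the skill path, and you restrict comparisons to parameter vectors satisfying both restrictions. One minor remark: strictness needs only $T\ge1$, not $T\ge2$.
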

\begin{proof}
    The result follows immediately from \cref{lem:complement-sb-increments} and the fact that
    \[ s_T-s_0=\sum_{t\in\Tcal}(s_{t+1}-s_t).\qedhere \]
\end{proof}
\fi

On each task, the student gains more skill when:
(i)~he has more skill at the outset (i.e., $s_0$ is higher), since skill complements effort and grows across tasks;
(ii)~he is more motivated (i.e., $\pi$ is higher), since he is more willing to exert effort;
(iii)~he gains more skill per unit of effort exerted (i.e., $r$ is higher).
The student gains \emph{less} skill when AI is more productive (i.e., $\ybar$ is higher), since the teacher needs to distort the course design further from the first-best to induce work.

The effects of~$s_0$, $\pi$, $r$, and~$\ybar$ on the skill gained from each task also compound across tasks.
This is because~$s_{t+1}=s_t+rx\SB(s_t)$ is increasing in~$x\SB(s_t)$, so any parameter change that raises~$x\SB(s_t)$ also raises~$s_{t+1}$, which, by \cref{prop:sb-effort}, raises~$x\SB(s_{t+1})$ too---a positive feedback loop.
For example, a one-off boost to the student's initial skill raises the effort he exerts on the initial task \emph{and} the effort he exerts on subsequent tasks, since he enters each task more skilled than he would have without the boost.
Thus, as emphasized by \cite{Heckman-2000-ResEcon} and~\cite{Cunha-Heckman-2007-AER}, ``skill begets skill'' due to dynamic complementarities.

\section{Extension: Complementary AI}
\label{sec:complement}

So far I have assumed AI substitutes for effort.
Now I consider the possibility that AI complements effort.
For example, rather than prompt AI to ``do my homework,'' the student could use skill to break his homework into subtasks and use AI to accelerate the effortful completion of those subtasks.%
\footnote{
For example, suppose the task is to prove a mathematical theorem.
The student could pass the entire theorem to a Large Language Model (LLM) before submitting a simple user prompt: ``prove the theorem.''
Alternatively, he could devise a proof strategy that involves proving a series of lemmas, then using the LLM to prove each of those lemmas.
}
So the student can use AI to avoid or enhance his effort, consistent with survey evidence on students' reported uses of AI \citep{Contractor-Reyes-2026-}.

To capture the possibility that AI can complement or substitute for effort, I introduce an ``AI quality'' parameter~$q\ge1$ that multiplies the effort he exerts when he works \emph{and} the output~$\ybar$ he produces when he delegates to AI.
Then the student has working payoff
\[ u(x,s,\eta;q)\equiv\pi(qx)^\eta s^{1-\eta}-\frac{x^2}{2} \]
and delegation payoff
\[ q\ubar=\pi q\ybar. \]
Letting~$q=1$ yields the payoffs defined in \cref{sec:model}.%
\footnote{
All of my results in \cref{sec:br,sec:solution} extend to the case with arbitrary~$q\ge1$ without qualitative changes.
}
Increasing~$q$ makes effort more productive \emph{and} raises the delegation payoff.
If the student works on a task, then his best-response effort
\[ x^*(s,\eta;q)\in\argmax_{x\ge0}\,u(x,s,\eta;q) \]
and payoff
\[ u^*(s,\eta;q)\equiv\max_{x\ge0}\,u(x,s,\eta;q) \]
both depend on~$q$, and so the first- and second-best course designs will also depend on~$q$.%
\footnote{
I write~$q$ (a primitive) after a semicolon to distinguish it from~$s$ (a state variable) and~$\eta$ (a control variable).
I include~$q$ as an argument but not the other primitives~$(T,\pi,r,\ybar)$ because my focus is on how outcomes depend on~$q$.
}
I study this dependence in \cref{sec:complement-fb,sec:complement-sb}, following the steps in my analysis of the case with~$q=1$ (see \cref{sec:fb,sec:sb}).

\subsection{First-best courses}
\label{sec:complement-fb}

Consider the first-best case in which~$\ybar=0$ and the student always prefers to work.
Letting~$z\equiv qx$ denote his AI-enhanced effort gives
\begin{equation}
    \label{eq:complement-enhanced-payoff}
    u(x,s,\eta;q)=\frac{1}{q^2}\left(\pi q^2 z^\eta s^{1-\eta}-\frac{z^2}{2}\right)
\end{equation}
for all skill levels~$s>0$ and effort intensities~$\eta\in(0,1]$.
The bracketed term on the right-hand side of~\eqref{eq:complement-enhanced-payoff} equals the payoff from the case with~$q=1$, with~$x$ replaced by~$z$ and~$\pi$ replaced by~$\pi q^2$.
So, by \cref{lem:br-effort}, the student maximizes the bracketed term by choosing
\[ z=s\left(\frac{(\pi q^2)\eta}{s}\right)^{1/(2-\eta)}, \]
which corresponds to exerting best-response effort
\begin{equation}
    \label{eq:complement-br-effort}
    x^*(s,\eta;q)=s\left(\frac{\pi\eta q^\eta}{s}\right)^{1/(2-\eta)}.
\end{equation}
The effort~$x^*(s,\eta;q)$ is increasing in the AI quality parameter~$q$.
Intuitively, raising~$q$ makes effort more productive, so the student is willing to exert more.

Replacing~$\pi$ by~$\pi q^2$ also yields a more general version of the first-best effort intensity
\[ \eta\FB(s;q)\in\argmax_{\eta\in(0,1]}\,x^*(s,\eta;q) \]
defined in \cref{lem:fb-intensity}:
\begin{enumerate}

    \item[(i)]
    If~$s\le\pi eq^2$, then~$\eta\FB(s;q)=1$.
    
    \item[(ii)]
    If~$s>\pi eq^2$, then~$\eta\FB(s;q)\in(0,1)$ is the unique solution to
    \begin{equation}
        \label{eq:complement-fb-intensity-foc}
        \eta\FB(s;q)\exp\left(\frac{2}{\eta\FB(s;q)}-1\right)=\frac{s}{\pi q^2},
    \end{equation}
    is decreasing in~$s$, and is increasing in~$\pi$ and~$q$.%
    \footnote{
    See \cref{appsec:fb-intensity} for the derivation of~$\eta\FB(s;q)$ and its properties.
    }

\end{enumerate}
Since the teacher wants to maximize~$x^*(s_t,\eta_t;q)$ for each task~$t\in\Tcal$ (see \cref{sec:solution} and \cref{appsec:fb-solution}), she optimally chooses~$\eta_t=\eta\FB(s_t;q)$, which is non-decreasing in~$q$ for all~$s_t>0$ and increasing in~$q$ when~$s_t>\pi eq^2$.
Thus, she optimally assigns more effort-intensive tasks when AI has higher quality.
She does so to capitalize on the student's higher willingness to exert effort.

Now fix a skill level~$s>0$, and define the first-best effort and payoff
\[ x\FB(s;q)\equiv x^*(s,\eta\FB(s;q);q)
    \ \ \text{and}\ \ 
    u\FB(s;q)\equiv u^*(s,\eta\FB(s;q);q) \]
analogously to \cref{sec:fb}.
Improvements in AI quality raise~$x\FB(s;q)$ but may lower~$u\FB(s;q)$:

\begin{proposition}[First-best outcomes and AI quality]
    \label{prop:complement-fb-br-q}
    Fix a skill level~$s>0$.
    The first-best effort~$x\FB(s;q)$ is increasing in the AI quality parameter~$q$.
    The first-best payoff~$u\FB(s;q)$ is decreasing in~$q$ when~$q<\sqrt{s/\pi e}$ and increasing in~$q$ when~$q\ge\sqrt{s/\pi e}$.
\end{proposition}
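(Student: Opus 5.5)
The plan is to split on the threshold in \cref{lem:complement-fb-intensity}. Since $s>\pi eq^2$ if and only if $q<\sqrt{s/\pi e}$, the first-best intensity $\eta\FB(s;q)$ is interior for $q<\sqrt{s/\pi e}$ and equals one for $q\ge\sqrt{s/\pi e}$.

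\textbf{Corner region.} For $q\ge\sqrt{s/\pi e}$ the proof of \cref{prop:complement-fb-br-s} already gives $x\FB(s;q)=\pi q$ and $u\FB(s;q)=\pi^2q^2/2$. Both are increasing in $q$.

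\textbf{Interior region: closed forms.} For $q<\sqrt{s/\pi e}$, I first obtain closed forms in terms of $\eta\equiv\eta\FB(s;q)$ alone. Rewrite the first-order condition~\eqref{eq:complement-fb-intensity-foc} as $\pi\eta q^2=s\exp(1-2/\eta)$. This gives
\[ \frac{\pi\eta q^\eta}{s}=q^{\eta-2}\exp\left(-\frac{2-\eta}{\eta}\right). \]
Substituting into~\eqref{eq:complement-br-effort} should collapse the effort to
\[ x\FB(s;q)=\frac{s}{q}\exp\left(-\frac{1}{\eta}\right). \]
Then \cref{lem:complement-br-payoff} gives
\[ u\FB(s;q)=\frac{2-\eta}{2\eta}\cdot\frac{s^2}{q^2}\exp\left(-\frac{2}{\eta}\right). \]

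\textbf{Interior region: derivative of $\eta$.} I differentiate the logged first-order condition $\log\eta+2/\eta-1=\log s-\log\pi-2\log q$ with respect to $q$. This should give
\[ \parfrac{\eta}{q}=\frac{2\eta^2}{q(2-\eta)}>0, \]
which is consistent with the monotonicity in $q$ stated in \cref{lem:complement-fb-intensity}.

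\textbf{Interior region: signs.} Using the closed forms and the chain rule:
\begin{itemize}
\item For effort, $\partial\log x\FB/\partial q=\eta^{-2}\,\partial\eta/\partial q-1/q$, which should simplify to $\eta/(q(2-\eta))>0$. As a cross-check, the envelope theorem gives the same conclusion, since the direct effect $\partial\log x^*/\partial q=\eta/(q(2-\eta))$ from \cref{lem:complement-br-effort}(v) matches it.
\item For the payoff, the $\eta$-dependent factor $\log(2-\eta)-\log\eta-2/\eta$ has derivative $4(1-\eta)/(\eta^2(2-\eta))$. Combining this with $\partial\eta/\partial q$ and the $-2/q$ from the $q^{-2}$ factor should yield
\[ \parfrac{\log u\FB}{q}=\frac{2}{q}\left(\frac{4(1-\eta)}{(2-\eta)^2}-1\right)=-\frac{2\eta^2}{q(2-\eta)^2}<0. \]
\end{itemize}
So on this region $x\FB$ is increasing in $q$ and $u\FB$ is decreasing in $q$.

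\textbf{Gluing the regions.} Finally, $x\FB$ and $u\FB$ are continuous in $q$ across $q=\sqrt{s/\pi e}$, because $\eta\FB(s;q)$ is continuous in $q$ by \cref{lem:complement-fb-intensity}. Hence $x\FB$ is increasing on the whole domain. The payoff $u\FB$ is decreasing below the threshold and increasing from it onward, with its minimum at the threshold, where $u\FB=\pi^2q^2/2$.

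The main obstacle is the algebra of the interior payoff derivative, namely that the two $\eta$-effects combine to the clean sign $-\eta^2$. The closed forms above avoid differentiating the messier expression~\eqref{eq:complement-br-payoff-expanded} directly, and I would verify the simplification $4(1-\eta)-(2-\eta)^2=-\eta^2$ carefully.
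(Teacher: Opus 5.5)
Your proposal is correct. Its skeleton matches the paper's: the split at $q=\sqrt{s/\pi e}$, the corner values $\pi q$ and $\pi^2q^2/2$, and the interior closed form $x\FB(s;q)=(s/q)\exp(-1/\eta\FB(s;q))$. The interior monotonicity arguments differ, however.

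\emph{Effort.} The paper only observes that $(s/q)\exp(-1/\eta)$ is increasing in $\eta$ and that $\eta\FB(s;q)$ is increasing in $q$. As written, this ignores the explicit factor $1/q$, which falls as $q$ rises. Your computation $\parfrac{\log x\FB}{q}=-1/q+\eta^{-2}\,\parfrac{\eta}{q}=\eta/(q(2-\eta))>0$, or equivalently your envelope-theorem cross-check, is what actually closes that step. The cleanest version is the idea behind your cross-check. Since $x\FB(s;q)=\max_{\eta}x^*(s,\eta;q)$ and each $x^*(s,\eta;q)$ is increasing in $q$ by \cref{lem:complement-br-effort}(v), for $q<q'$ we get $x\FB(s;q')\ge x^*(s,\eta\FB(s;q);q')>x\FB(s;q)$. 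This covers both regions at once and needs no continuity gluing.

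\emph{Payoff.} The paper takes a shortcut you could adopt. Substitute the first-order condition $\exp(-2/\eta)=\pi\eta q^2/(es)$ into your expression for $u\FB$. It collapses to $u\FB(s;q)=\frac{\pi s}{2e}\left(2-\eta\FB(s;q)\right)$, which contains no explicit $q$. So $u\FB$ is decreasing in $q$ directly from the monotonicity of $\eta\FB$ in \cref{lem:complement-fb-intensity}, and you avoid the algebra you flagged as the main obstacle. Differentiating this form reproduces your derivative $-2\eta^2/(q(2-\eta)^2)$ exactly.
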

\ifbodyproofs\subsection{Proof of \texorpdfstring{\cref{prop:complement-fb-br-q}}{Proposition~\ref{prop:complement-fb-br-q}}}

\begin{proof}[\nopunct\unskip]
    Suppose~$s\le\pi eq^2$.
    Then~$\eta\FB(s;q)=1$ by \cref{lem:complement-fb-intensity}, and so~$x\FB(s;q)=x^*(s,1;q)=\pi q$ and~$u\FB(s;q)=u^*(s,1;q)=\pi^2q^2/2$ are increasing in~$q$.

    Now suppose~$s>\pi eq^2$.
    Then~$\eta\FB(s;q)$ satisfies~\eqref{eq:complement-fb-intensity-foc}, which can be rewritten as
    \begin{equation}
        \label{eq:complement-fb-intensity-foc-alt}
        \frac{1}{q^2}\exp\left(-\left(\frac{2}{\eta\FB(s;q)}-1\right)\right)=\frac{\pi\eta\FB(s;q)}{s}.
    \end{equation}
    Then \cref{lem:complement-br-effort} implies
    \begin{align*}
        x\FB(s;q)
        &= s\left(\frac{\pi\eta\FB(s;q)q^{\eta\FB(s;q)}}{s}\right)^{1/(2-\eta\FB(s;q))} \\
        &= s\left(\frac{1}{q^2}\exp\left(-\left(\frac{2}{\eta\FB(s;q)}-1\right)\right)q^{\eta\FB(s;q)}\right)^{1/(2-\eta\FB(s;q))} \\
        &= \frac{s}{q}\exp\left(-\frac{1}{\eta\FB(s;q)}\right),
    \end{align*}
    which is increasing in~$\eta\FB(s;q)$.
    But \cref{lem:complement-fb-intensity} implies~$\eta\FB(s;q)$ is increasing in~$q$, and it follows that~$x\FB(s;q)$ is increasing in~$q$.
    Moreover, \cref{lem:complement-br-payoff} implies
    \begin{align*}
        u\FB(s;q)
        &= \frac{2-\eta\FB(s;q)}{2\eta\FB(s;q)}\left(x\FB(s;q)\right)^2 \\
        &= \frac{\left(2-\eta\FB(s;q)\right)s^2}{2\eta\FB(s;q)eq^2}\exp\left(-\left(\frac{2}{\eta\FB(s;q)}-1\right)\right) \\
        &\overset{\star}{=} \frac{\pi s}{2e}\left(2-\eta\FB(s;q)\right),
    \end{align*}
    where~$\star$ uses the substitution~\eqref{eq:complement-fb-intensity-foc-alt}.
    So~$u\FB(s;q)$ is decreasing in~$\eta\FB(s;q)$, which is increasing in~$q$.
    Thus~$u\FB(s;q)$ is decreasing in~$q$.%
    \footnote{
    Recall from \cref{lem:complement-fb-intensity} that~$\eta\FB(s;q)\to1$ as~$s\to\pi eq^2$ from above.
    So
    \[ \lim_{s\to\pi eq^2}\frac{\pi s}{2e}\left(2-\eta\FB(s;q)\right)=\frac{\pi(\pi eq^2)}{2e}=\frac{\pi^2q^2}{2}=u\FB(s;q)\bigg\vert_{s\le\pi eq^2}, \]
    confirming the continuity of~$u\FB(s;q)$ in~$s$ as claimed in \cref{prop:complement-fb-br-s}.
    }
\end{proof}
\fi

The intuition for \cref{prop:complement-fb-br-q} is as follows.
The teacher chooses~$\eta\FB(s_t;q)$ to maximize the student's effort, rather than his payoff, and the effort-maximizing intensity is larger than the payoff-maximizing intensity because the former does not account for the cost of exerting effort.
As AI quality improves, the teacher optimally makes tasks more effort-intensive to keep extracting maximal effort from the now-more-productive student.
This pulls~$\eta\FB(s_t;q)$ further from the payoff-maximizing intensity, and so the student's payoff \emph{falls} even as his effort and skill gain rise.
Only when~$\eta\FB(s_t;q)$ reaches its ceiling of one---which, by the definition of~$\eta\FB(s;q)$, happens precisely when~$q\ge\sqrt{s_t/\pi e}$---do further quality improvements lead to higher payoffs with no change in effort intensities.

\subsection{Second-best courses}
\label{sec:complement-sb}

Now consider the second-best case in which~$\ybar>0$ and the student may prefer to delegate.
Given a skill level~$s>0$ and effort intensity~$\eta\in(0,1]$, the student's best-response effort~$x^*(s,\eta;q)$ yields payoff%
\footnote{
See \cref{appsec:br-payoff} for the derivation of~\eqref{eq:complement-br-payoff}.
}
\begin{equation}
    \label{eq:complement-br-payoff}
    u^*(s,\eta;q)=\frac{1}{\eta}\left(1-\frac{\eta}{2}\right)\left(x^*(s,\eta;q)\right)^2.
\end{equation}
He works on task~$t\in\Tcal$ if and only if his best-response payoff from working is at least his payoff from delegating:
\begin{equation}
    \label{eq:complement-ic-constraint}
    u^*(s_t,\eta_t;q)\ge \pi q\ybar.
    \tag{\ref{eq:ic-constraint}$;q$}
\end{equation}

\subsubsection{Parameter restrictions}

As in \cref{sec:sb}, I assume the first-best course is not incentive-compatible:
\begin{equation}
    \label{eq:complement-sb-restriction-1}
    u\FB(s_0;q)<\pi q\ybar.
    \tag{\ref{eq:sb-restriction-1}$;q$}
\end{equation}
I also assume it is possible to design an incentive-compatible course: 
\begin{equation}
    \label{eq:complement-sb-restriction-2}
    \sup_{\eta\in(0,1]}u^*(s_0,\eta;q)>\pi q\ybar.
    \tag{ND$;q$}
\end{equation}
Conditions~\eqref{eq:complement-sb-restriction-1} and~\eqref{eq:complement-sb-restriction-2} generalize conditions~\eqref{eq:sb-restriction-1} and~\eqref{eq:sb-restriction-2} from \cref{sec:sb}, and imply the bounds
\begin{equation}
    \label{eq:complement-sb-restriction}
    \frac{\pi q^2}{2}<q\ybar<s_0
\end{equation}
on the AI-generated output~$q\ybar$.%
\footnote{
See \cref{appsec:sb-restriction} for the derivation of~\eqref{eq:complement-sb-restriction}.
}
The intuition for these bounds extends from \cref{sec:sb}: the AI out-produces the student (net of costs) on fully effort-intensive tasks (on which his cost-adjusted output is~$u^*(s,1;q)/\pi=\pi q^2/2$ regardless of his skill level~$s$), but the student out-produces the AI on fully skill-intensive tasks (on which he produces~$s$).

\subsubsection{Incentive compatibility}

Suppose~\eqref{eq:complement-sb-restriction-1} holds and fix a skill level~$s>q\ybar$.
As in \cref{sec:sb}, there is a unique effort intensity
\[ \eta\IC(s;q)\in\left(0,\min\left\{\frac{s}{\pi q^2},1\right\}\right) \]
at which the student is indifferent between working and delegating:
\begin{equation}
    \label{eq:complement-ic-intensity-definition}
    u^*(s,\eta\IC(s;q);q)=\pi q\ybar.
\end{equation}
Moreover, the incentive constraint~\eqref{eq:complement-ic-constraint} holds if and only if~$\eta_t\le\eta\IC(s_t;q)$.%
\footnote{
See \cref{appsec:ic-intensity} for the proof that~$\eta\IC(s;q)$ exists.
}

\begin{figure}
    \centering
    \includegraphics[width=0.55\linewidth]{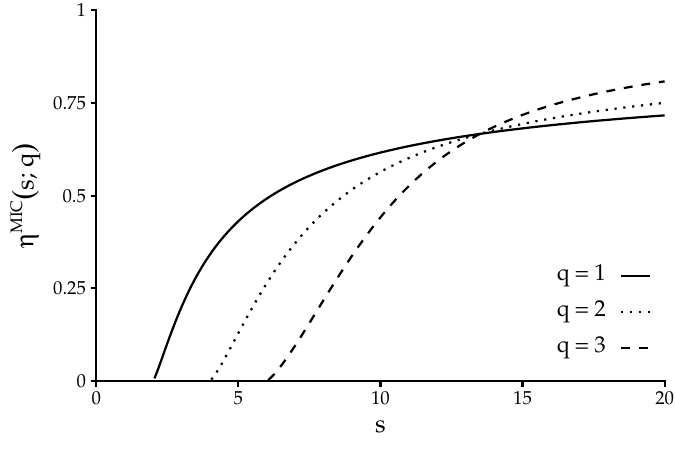}
    \caption{
    Maximum IC effort intensities~$\eta\IC(s;q)$ when~$\pi=1$ and~$\ybar=2$.
    }
    \label{fig:complement-ic-intensity}
\end{figure}
\cref{fig:complement-ic-intensity} shows that whether~$\eta\IC(s;q)$ is increasing in the AI quality parameter~$q$ depends on whether~$\eta\IC(s;q)>2/3$.
\cref{lem:complement-ic-intensity-q} characterizes this dependence in terms of the skill level~$s$.

\begin{lemma}[Incentive compatibility and AI quality]
    \label{lem:complement-ic-intensity-q}
    Suppose~\eqref{eq:complement-sb-restriction-1} holds, fix a skill level~$s>q\ybar$, and define
    \begin{equation}
        \label{eq:complement-ic-intensity-q-threshold}
        s^\dag\equiv\frac{27\,\ybar^2}{8\,\pi}.
    \end{equation}
    Then~$\eta\IC(s;q)$ is decreasing in the AI quality parameter~$q$ when~$s<s^\dag$, constant in~$q$ when~$s=s^\dag$, and increasing in~$q$ when~$s>s^\dag$.
    Moreover, we have~$s<s^\dag$ if and only if~$\eta\IC(s;q)<2/3$.
\end{lemma}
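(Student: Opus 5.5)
The plan is to apply the implicit function theorem to the indifference condition~\eqref{eq:complement-ic-intensity-definition}, working in logarithms so that the dependence on~$q$ separates cleanly. Define
\[ G(\eta,q)\equiv\log u^*(s,\eta;q)-\log(\pi q\ybar). \]
By \cref{lem:complement-ic-intensity}, $\eta\IC(s;q)$ is the unique root of~$G(\cdot,q)$ in~$(0,\eta_\text{min}(s;q))$. The proof of that lemma also shows that~$\partial u^*/\partial\eta<0$ at this root, so~$\partial G/\partial\eta<0$ there. The implicit function theorem then gives that~$\eta\IC(s;q)$ is differentiable in~$q$, and that~$\partial\eta\IC/\partial q$ has the same sign as~$\partial G/\partial q$ evaluated at~$\eta=\eta\IC(s;q)$.

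Next I compute~$\partial G/\partial q$ from the expanded formula~\eqref{eq:complement-br-payoff-expanded}. There~$q$ enters~$u^*$ only through the factor~$q^{2\eta/(2-\eta)}$, so
\[ \parfrac{G}{q}=\frac{2\eta}{(2-\eta)q}-\frac{1}{q}=\frac{3\eta-2}{(2-\eta)q}. \]
Evaluated at~$\eta=\eta\IC(s;q)$, this is negative, zero, or positive according as~$\eta\IC(s;q)$ is less than, equal to, or greater than~$2/3$. The three monotonicity cases therefore reduce to the final claim: that~$\eta\IC(s;q)<2/3$, $=2/3$, or~$>2/3$ according as~$s<s^\dag$, $s=s^\dag$, or~$s>s^\dag$, uniformly in~$q$.

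The main step, and the only place where something special must happen, is showing that this comparison does not depend on~$q$. I would evaluate~\eqref{eq:complement-br-payoff-expanded} at~$\eta=2/3$. The exponents become~$2/(2-\eta)=3/2$, $\eta/(2-\eta)=1/2$, $2\eta/(2-\eta)=1$ and~$2(1-\eta)/(2-\eta)=1/2$. Hence
\[ u^*(s,2/3;q)=\tfrac{2}{3}\,\pi^{3/2}\left(\tfrac{2}{3}\right)^{1/2}q\,s^{1/2}, \]
which is linear in~$q$, just like the delegation payoff~$\pi q\ybar$. The factor~$q$ therefore cancels from the comparison~$u^*(s,2/3;q)\gtrless\pi q\ybar$, which becomes
\[ \left(\tfrac{2}{3}\right)^{3/2}(\pi s)^{1/2}\gtrless\ybar, \quad\text{that is,}\quad s\gtrless\frac{27\,\ybar^2}{8\,\pi}=s^\dag. \]

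Finally, I translate this comparison into one about~$\eta\IC$ using the characterization in \cref{lem:complement-ic-intensity}: the constraint~\eqref{eq:complement-ic-constraint} holds at~$\eta$ if and only if~$\eta\le\eta\IC(s;q)$. Applying this at~$\eta=2/3$ gives that~$\eta\IC(s;q)\ge2/3$ if and only if~$u^*(s,2/3;q)\ge\pi q\ybar$, which holds if and only if~$s\ge s^\dag$. For equality, I would use uniqueness of the root: when~$s=s^\dag$, the value~$2/3$ solves~\eqref{eq:complement-ic-intensity-definition} for every~$q$. I need~$2/3<\eta_\text{min}(s;q)$ to place this root in the uniqueness interval; this follows because~$u^*(s^\dag,\cdot\,;q)$ is strictly decreasing on~$(0,\eta_\text{min}(s^\dag;q))$ and constant in~$\eta$ nowhere, so a value of~$\eta$ at or beyond~$\eta_\text{min}$ could not attain~$\pi q\ybar$ (which exceeds the payoffs available there, as shown in that lemma). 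Hence~$\eta\IC(s^\dag;q)=2/3$ for all~$q$, so~$\eta\IC$ is constant in~$q$ at~$s=s^\dag$. Combining this with the sign result from the second paragraph completes the argument.
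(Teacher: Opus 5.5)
Your proof is correct and takes essentially the paper's route: you implicitly differentiate the indifference condition in $q$, note that $q$ enters $u^*$ with exponent $2\eta/(2-\eta)$ but enters $\pi q\ybar$ with exponent one, so the sign of the derivative flips at $\eta\IC(s;q)=2/3$, and you find $s^\dag$ by solving $u^*(s,2/3;q)=\pi q\ybar$. The only difference is minor: the paper gets the threshold from the monotonicity of $\eta\IC(s;q)$ in $s$ and the intermediate value theorem, whereas you read it off directly from the IC-set characterization at $\eta=2/3$, which is slightly cleaner and also shows $\eta\IC(s^\dag;q)=2/3$ for every $q$.
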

\ifbodyproofs\subsection{Proof of \texorpdfstring{\cref{lem:complement-ic-intensity-q}}{Lemma~\ref{lem:complement-ic-intensity-q}}}

\begin{proof}[\nopunct\unskip]
    Differentiating~\eqref{eq:complement-ic-intensity-definition} implicitly with respect to~$q$ gives
    \[ \parfrac{u^*(s,\eta;q)}{\eta}\bigg\vert_{\eta=\eta\IC(s;q)}\parfrac{\eta\IC(s;q)}{q}+\parfrac{u^*(s,\eta;q)}{q}\bigg\vert_{\eta=\eta\IC(s;q)}=\pi\ybar. \]
    Since~$u^*(s,\eta;q)$ is decreasing in~$\eta$ at~$\eta=\eta\IC(s;q)$, we have~$\partial\eta\IC(s;q)/\partial q>0$ if and only if
    \begin{equation}
        \label{eq:complement-ic-intensity-q-proof-condition}
        \parfrac{u^*(s,\eta;q)}{q}\bigg\vert_{\eta=\eta\IC(s;q)}>\pi\ybar.
    \end{equation}
    But
    \[ \parfrac{u^*(s,\eta;q)}{q}=\frac{2\eta}{2-\eta}\cdot\frac{u^*(s,\eta;q)}{q}, \]
    and therefore
    \[ \parfrac{u^*(s,\eta;q)}{q}\bigg\vert_{\eta=\eta\IC(s;q)}=\frac{2\pi\eta\IC(s;q)\ybar}{2-\eta\IC(s;q)} \]
    by~\eqref{eq:complement-ic-intensity-definition}.
    So~\eqref{eq:complement-ic-intensity-q-proof-condition} holds if and only if
    \[ \frac{2\eta\IC(s;q)}{2-\eta\IC(s;q)}>1, \]
    which, since~$0<\eta\IC(s;q)<1$, holds precisely when~$\eta\IC(s;q)>2/3$.

    Now by \cref{lem:complement-ic-intensity}, the intensity~$\eta\IC(s;q)$ rises continuously from zero to one as the skill level~$s$ rises from~$q\ybar$ to~$\infty$.
    So, by the intermediate value theorem, there is a unique~$s^\dag>q\ybar$ such that~$\eta\IC(s^\dag;q)=2/3$ and
    \[ \parfrac{\eta\IC(s;q)}{q}\gtreqless0
        \ \ \text{as}\ \ 
        s\gtreqless s^\dag. \]
    The skill level~$s^\dag$ satisfies
    \begin{align*}
        \pi q\ybar
        &= u^*\left(s^\dag,\frac{2}{3};q\right) \\
        &= \frac{1}{2/3}\left(1-\frac{2/3}{2}\right)\left(x^*\left(s^\dag,\frac{2}{3};q\right)\right)^2 \\
        &= \left(s^\dag\left(\frac{\pi(2/3)q^{2/3}}{s^\dag}\right)^{1/(2-2/3)}\right)^2 \\
        &= \left(\frac{2\pi}{3}\right)^{3/2}q(s^\dag)^{1/2},
    \end{align*}
    which can be rearranged for~\eqref{eq:complement-ic-intensity-q-threshold}.
    (\cref{lem:complement-sb-restriction} implies~$\ybar>\pi q/2$, and so~$s^\dag>27q\ybar/16>q\ybar$ as required.)
\end{proof}
\fi

To see why the effect of raising~$q$ on~$\eta\IC(s;q)$ depends on whether~$\eta\IC(s;q)>2/3$, fix an effort intensity~$\eta\in(0,1]$, and decompose the student's best-response payoff~\eqref{eq:complement-br-payoff} into the product
\[ u^*(s,\eta;q)=q^{2\eta/(2-\eta)}\,u^*(s,\eta;1) \]
of terms that do and do not depend on~$q$.
Thus, the AI quality parameter~$q$ enters the best-response payoff~$u^*(s,\eta;q)$ with exponent~$2\eta/(2-\eta)$ and the delegation payoff~$\pi q\ybar$ with exponent~$1$.
So if~$q$ rises, then~$u^*(s,\eta;q)$ rises faster than~$\pi q\ybar$ precisely when~$2\eta/(2-\eta)>1$, which holds if and only if~$\eta>2/3$.
But~$u^*(s,\eta;q)$ is decreasing in~$\eta$ in a neighborhood of~$\eta\IC(s;q)$.%
\footnote{
\cref{lem:complement-br-payoff,lem:complement-ic-intensity} imply~$\eta\IC(s;q)$ belongs to an interval over which~$u^*(s,\eta;q)$ is decreasing in~$\eta$.
}
So if~$u^*(s,\eta;q)$ rises faster than~$\pi q\ybar$ for~$\eta$ near~$\eta\IC(s;q)$, then~$\eta\IC(s;q)$ must rise to preserve the identity~\eqref{eq:complement-ic-intensity-definition}.
Thus, the effect of increasing~$q$ on~$\eta\IC(s;q)$ depends on whether~$\eta\IC(s;q)>2/3$.
The threshold~$s^\dag$ is precisely the skill level at which~$\eta\IC(s^\dag;q)=2/3$.

\subsubsection{Optimal effort intensities}
\label{sec:complement-solution}

Suppose~\eqref{eq:complement-sb-restriction-1} and~\eqref{eq:complement-sb-restriction-2} hold so that~$s_t>q\ybar$ for each~$t\in\Tcal$.
As in \cref{sec:sb}, the teacher optimally sets each effort intensity~$\eta_t\in(0,1]$ equal to the minimum of the maximum IC intensity~$\eta\IC(s_t;q)$ and first-best intensity~$\eta\FB(s_t;q)$.
I characterize this minimum in \cref{lem:complement-sb-intensity}, which generalizes \cref{lem:sb-intensity} to cases with~$q>1$.

\begin{lemma}[Second-best effort intensities with complementary AI]
    \label{lem:complement-sb-intensity}
    Suppose~\eqref{eq:complement-sb-restriction-1} holds and define
    \begin{equation}
        \label{eq:complement-sb-intensity}
        \eta\SB(s;q)\equiv\min\left\{\eta\IC(s;q),\,\eta\FB(s;q)\right\}
    \end{equation}
    for all skill levels~$s>q\ybar$.
    There is a unique level
    \[ \shat(q)\in\left(\max\left\{q\ybar,\pi eq^2\right\},\,\infty\right) \]
    satisfying~$\eta\IC(\shat(q);q)=\eta\FB(\shat(q);q)$.
    Moreover,
    \begin{equation}
        \label{eq:complement-sb-intensity-piecewise}
        \eta\SB(s;q)=\begin{cases}
            \eta\IC(s;q) & \text{if}\ s<\shat(q) \\
            \eta\FB(s;q) & \text{if}\ s\ge\shat(q)
        \end{cases}
    \end{equation}
    is increasing in~$s$ when~$s<\shat(q)$ and decreasing in~$s$ when~$s>\shat(q)$.
    Furthermore, the threshold~$\shat(q)$ is increasing in the AI quality parameter~$q$.
\end{lemma}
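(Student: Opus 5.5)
The plan is to study the difference $D(s;q)\equiv\eta\IC(s;q)-\eta\FB(s;q)$ on $(q\ybar,\infty)$ and show it crosses zero exactly once.

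\textbf{Existence and uniqueness of the crossing.} By \cref{lem:complement-ic-intensity}, $s\mapsto\eta\IC(s;q)$ is continuous and increasing, with limits $0$ as $s\to q\ybar$ and $1$ as $s\to\infty$. By \cref{lem:complement-fb-intensity}, $s\mapsto\eta\FB(s;q)$ is continuous and non-increasing, equals $1$ on $(0,\pi eq^2]$, and tends to $0$ as $s\to\infty$. Hence $D(\cdot;q)$ is continuous and strictly increasing.

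Its limit as $s\to q\ybar$ is $-\eta\FB(q\ybar;q)<0$, and its limit as $s\to\infty$ is $1>0$. The intermediate value theorem then gives a unique zero $\shat(q)>q\ybar$. For every $s\le\pi eq^2$ we have $\eta\FB(s;q)=1>\eta\IC(s;q)$, because $\eta\IC<1$ by \cref{lem:complement-ic-intensity}. So $D<0$ there, and therefore $\shat(q)>\max\{q\ybar,\pi eq^2\}$.

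\textbf{The piecewise form and monotonicity.} Since $D(\cdot;q)$ is strictly increasing, $D<0$ for $s<\shat(q)$ and $D\ge0$ for $s\ge\shat(q)$. This is exactly the piecewise formula~\eqref{eq:complement-sb-intensity-piecewise}. On $s<\shat(q)$, $\eta\SB=\eta\IC$ is increasing by \cref{lem:complement-ic-intensity}. On $s>\shat(q)>\pi eq^2$, $\eta\SB=\eta\FB$ is decreasing by part~(ii) of \cref{lem:complement-fb-intensity}.

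\textbf{Comparative statics of $\shat(q)$ in $q$.} This is the step needing care. I will first recharacterize the threshold through payoffs. By \cref{lem:complement-ic-intensity}, $\eta\FB(s;q)\le\eta\IC(s;q)$ holds if and only if $\eta\FB(s;q)$ satisfies the incentive constraint, that is, $u\FB(s;q)\ge\pi q\ybar$. Hence $s<\shat(q)$ if and only if $u\FB(s;q)<\pi q\ybar$. At $s=\shat(q)$ both intensities coincide, so $u\FB(\shat(q);q)=\pi q\ybar$.

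Because $\shat(q)>\pi eq^2$, the closed form derived in the proof of \cref{prop:complement-fb-br-q} applies, and $\shat(q)$ solves
\[ \frac{\pi s}{2e}\bigl(2-\eta\FB(s;q)\bigr)=\pi q\ybar . \]
Now take $q'>q$ and evaluate the left-hand side at $s=\shat(q)$. Since $\eta\FB$ is increasing in $q$ in the interior region by \cref{lem:complement-fb-intensity}, the left-hand side weakly falls, so $u\FB(\shat(q);q')\le\pi q\ybar$. The right-hand side instead rises to $\pi q'\ybar>\pi q\ybar$. Therefore $u\FB(\shat(q);q')<\pi q'\ybar$.

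A technical point: if $\shat(q)\le\pi e q'^2$, then $\shat(q)<\shat(q')$ follows directly from the bound $\shat(q')>\pi eq'^2$ already established. Otherwise the closed form above is valid at $q'$, the displayed inequality holds, and the recharacterization gives $\shat(q)<\shat(q')$. In both cases $\shat(q)$ is increasing in $q$.

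The main obstacle is making the payoff recharacterization of $\shat(q)$ rigorous, namely the equivalence between $\eta\FB\le\eta\IC$ and $u\FB\ge\pi q\ybar$. I will justify it directly from the final sentence of \cref{lem:complement-ic-intensity}.
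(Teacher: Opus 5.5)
Your proof is correct. The existence, uniqueness, location, piecewise-form and monotonicity parts match the paper's argument: a continuous increasing $\eta\IC$ against a continuous non-increasing $\eta\FB$, plus the intermediate value theorem.

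You diverge on the comparative static in $q$.

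\emph{The paper's route.} It parametrizes the crossing by the common intensity $\etahat$. Substituting $\shat=\pi\etahat q^2\exp(2/\etahat-1)$ into the expanded payoff gives $q=Q(\etahat)$ with $Q$ increasing, so $\etahat$ rises with $q$. The final passage from $\etahat$ to $\shat$ is left implicit. It is not immediate from that expression, since $q^2$ rises while $\etahat\exp(2/\etahat-1)$ falls.

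\emph{Your route.} You combine three facts: the single-crossing recharacterization $s<\shat(q)\iff u\FB(s;q)<\pi q\ybar$, the closed form $u\FB(s;q)=\frac{\pi s}{2e}(2-\eta\FB(s;q))$, and the monotonicity of $\eta\FB$ in $q$. This compares $q$ and $q'$ directly and needs no differentiation. It is the formal version of the intuition the paper gives after the lemma: as $q$ rises, working on a first-best task gets less attractive while delegating gets more attractive.

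\emph{What each buys.} Your identity also closes the paper's last step. At the crossing, $\shat(q)=2eq\ybar/(2-\etahat)$, which increases because both $q$ and $\etahat$ do. What the paper's route gives that yours does not is that the peak intensity $\etahat=\max_s\eta\SB(s;q)$ itself rises with $q$.

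\emph{One small omission.} Your recharacterization at $q'$ is valid only for $s>q'\ybar$, the domain of $\eta\IC(\cdot;q')$. So you also need the case $\shat(q)\le q'\ybar$. It is dispatched exactly like your $\shat(q)\le\pi eq'^2$ case, using $\shat(q')>\max\{q'\ybar,\pi eq'^2\}$. Like the paper, you also tacitly need (SB$;q'$) so that $\shat(q')$ is defined.
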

\ifbodyproofs\subsection{Proof of \texorpdfstring{\cref{lem:complement-sb-intensity}}{Lemma~\ref{lem:complement-sb-intensity}}}
\label{appsec:complement-sb-intensity}

\begin{proof}[\nopunct\unskip]
    By \cref{lem:complement-ic-intensity}, the maximum IC effort intensity~$\eta\IC(s;q)\in(0,1)$ rises continuously from zero to one as~$s$ rises from~$q\ybar$ to~$\infty$.
    In contrast, by \cref{lem:complement-fb-intensity}, the first-best effort intensity~$\eta\FB(s;q)$ equals one when~$s\le\pi eq^2$, and falls continuously from one to zero as~$s$ rises from~$\pi eq^2$ to~$\infty$.
    So, by the intermediate value theorem, there is a unique
    \[ \shat\in\left(\max\left\{q\ybar,\,\pi e q^2\right\},\,\infty\right) \]
    such that~$\eta\IC(\shat;q)=\eta\FB(\shat;q)$, and~\eqref{eq:complement-sb-intensity-piecewise} holds for all~$s>q\ybar$.
    It follows that~$\eta\SB(s;q)$ is increasing in~$s$ when~$s<\shat$ and decreasing in~$s$ when~$s>\shat$.

    It remains to show~$\shat$ is increasing in~$q$.

    Define~$\etahat\equiv\eta\IC(\shat;q)=\eta\FB(\shat;q)$ for convenience.
    Then~\eqref{eq:complement-fb-intensity-foc} implies
    \[ \shat=\pi\etahat q^2\exp\left(\frac{2}{\etahat}-1\right), \]
    and so~\eqref{eq:complement-br-payoff-expanded} and the definition of~$\eta\IC$ imply
    \begin{align*}
        \pi q\ybar
        &= u^*(\shat,\etahat;q) \\
        &= \left(1-\frac{\etahat}{2}\right)\pi^{2/(2-\etahat)}\etahat^{\etahat/(2-\etahat)}q^{2\etahat/(2-\etahat)}\left(\pi\etahat q^2\exp\left(\frac{2}{\etahat}-1\right)\right)^{2(1-\etahat)/(2-\etahat)} \\
        &= \frac{\pi^2\etahat (2-\etahat)q^2}{2}\exp\left(\frac{2(1-\etahat)}{\etahat}\right).
    \end{align*}
    Defining
    \[ Q(\eta)\equiv\frac{2\ybar}{\pi\eta(2-\eta)}\exp\left(-\frac{2(1-\eta)}{\eta}\right) \]
    then yields an implicit equation~$Q(\etahat)=q$ defining~$\etahat$ in terms of~$q$.
    But the function~$Q:(0,1]\to\R$ is increasing on its domain because~$Q(\eta)>0$ and
    \begin{align*}
        \frac{Q'(\eta)}{Q(\eta)}
        &= \parfrac{\log(Q(\eta))}{\eta} \\
        &= \parfrac{}{\eta}\left[\log\left(\frac{2\ybar}{\pi}\right)-\log(\eta(2-\eta))-\frac{2(1-\eta)}{\eta}\right] \\
        &=\frac{2\left((\eta-1)^2+1\right)}{\eta^2(2-\eta)} \\
        &> 0
    \end{align*}
    for all~$\eta\in(0,1]$.
    So if~$q$ rises, then~$\etahat$ must also rise to maintain~$Q(\etahat)=q$.
\end{proof}
\fi

Thus, as in the case with~$q=1$ (see \cref{sec:sb}), the second-best effort intensity~$\eta\SB(s;q)$ is increasing-then-decreasing in the skill level~$s$:
\[ \parfrac{\eta\SB(s;q)}{s}>0
    \ \ \text{if and only if}\ \ 
    s<\shat(q). \]
The threshold~$\shat(q)$ at which~$\partial\eta\SB(s;q)/\partial s$ changes sign is increasing in the AI quality parameter~$q$.
This is because~$\shat(q)$ belongs to the set~$(\pi eq^2,\infty)$ of skill levels~$s$ at which increasing~$q$ lowers the first-best payoff~$u\FB(s;q)$ (see \cref{prop:complement-fb-br-q}).
As~$q$ rises, working on a first-best task becomes less attractive to the student while delegating becomes more attractive (since the delegation payoff~$\pi q\ybar$ rises).
As a result, it takes more accumulated skill for the payoff from working on a first-best task to be more attractive than delegating.

The skill level~$s$ also determines whether~$\eta\SB(s;q)$ rises or falls when AI quality improves.
To see why, recall the thresholds~$s^\dag$ and~$\shat(q)$ defined in \cref{lem:complement-ic-intensity-q,lem:complement-sb-intensity}.
If~$s<\shat(q)$, then~$\eta\SB(s;q)=\eta\IC(s;q)$ is decreasing in~$q$ when~$s<s^\dag$ and increasing in~$q$ when~$s>s^\dag$.
In contrast, if~$s\ge\shat(q)$, then~$\eta\SB(s;q)=\eta\FB(s;q)$ is increasing in~$q$ always.
Thus, the effect of increasing~$q$ on~$\eta\SB(s;q)$ changes from negative to positive as~$s$ passes through~$\min\{s^\dag,\shat(q)\}$ from below.
Intuitively, if the student has low skill, then AI quality improvements raise AI's productivity faster than they make the student's effort more productive, and so the teacher must distort effort intensities downward to induce effort.
In contrast, if the student has high skill, then AI quality improvements make his effort more productive faster than they raise AI's productivity, and the teacher makes tasks more effort-intensive to capitalize on the student's higher willingness to exert effort.

\subsubsection{Best-response efforts}

\begin{figure}
    \centering
    \includegraphics[width=0.55\linewidth]{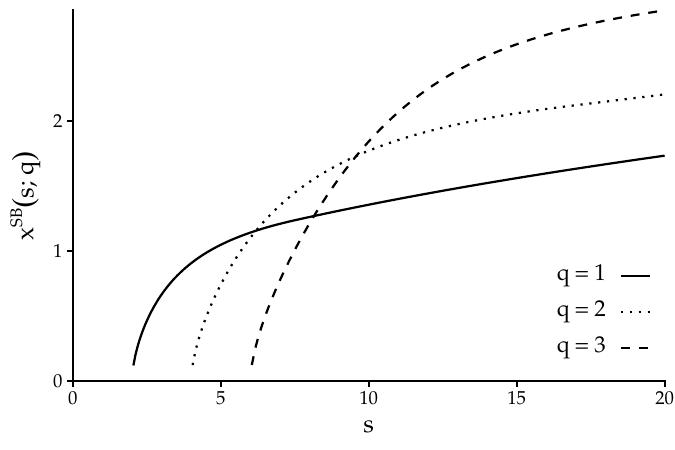}
    \caption{
    Second-best efforts~$x\SB(s;q)$ when~$\pi=1$ and~$\ybar=2$.
    }
    \label{fig:complement-sb-effort}
\end{figure}
Suppose~\eqref{eq:complement-sb-restriction-1} holds.
\cref{fig:complement-sb-effort} shows how the second-best effort
\[ x\SB(s;q)\equiv x^*(s,\eta\SB(s;q);q) \]
grows with the skill level~$s>q\ybar$ when~$(\pi,\ybar)=(1,2)$ and~$q\in\{1,2,3\}$.
If~$s=5$, then~$x\SB(s;1)>x\SB(s;2)$.
In contrast, if~$s=15$, then~$x\SB(s;1)<x\SB(s;2)$.
So~$x\SB(s;q)$ can be increasing or decreasing in the AI quality parameter~$q$, depending on the skill level~$s$.%
\footnote{
The second-best effort~$x\SB(s;q)$ can also be non-monotone in~$q$.
For example, \cref{fig:complement-sb-effort} shows~$x\SB(s;3)<x\SB(s;1)<x\SB(s;2)$ when~$(\pi,\ybar)=(1,2)$ and~$s=7$.
}
I characterize this dependence in the main result of this section:

\begin{theorem}[Second-best efforts and AI quality]
    \label{thm:complement-sb-effort}
    Suppose~\eqref{eq:complement-sb-restriction-1} holds, fix a skill level~$s>q\ybar$, and define~$s^\dag$ and~$\shat(q)$ as in \cref{lem:complement-ic-intensity-q,lem:complement-sb-intensity}.
    There is a unique skill level
    \[ s^\ddag(q)\in\left(q\ybar,\,\min\left\{s^\dag,\shat(q)\right\}\right) \]
    such that the second-best effort~$x\SB(s;q)$ is decreasing in the AI quality parameter~$q$ when~$s<s^\ddag(q)$ and increasing in~$q$ when~$s>s^\ddag(q)$.
    Moreover, the threshold~$s^\ddag(q)$ is increasing in~$q$.
\end{theorem}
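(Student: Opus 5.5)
The plan is to split the skill axis at $\shat(q)$ and handle the two pieces separately.

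\emph{Region $s\ge\shat(q)$.} Here \cref{lem:complement-sb-intensity} gives $x\SB(s;q)=x\FB(s;q)$, and \cref{prop:complement-fb-br-q} says this is increasing in $q$. This needs a little care. Raising $q$ raises $\shat(q)$, so a fixed $s$ can move from the FB branch to the IC branch. I would handle this with the one-sided derivatives of $x\SB$, using that $x\SB$ is continuous in $q$ (both branches agree at $\shat(q)$). So it suffices to sign $\partial x\SB/\partial q$ on the IC branch and to show the sign is positive near $\shat(q)$.

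\emph{Region $q\ybar<s<\shat(q)$.} Here $\eta=\eta\IC(s;q)$ and $u^*=\pi q\ybar$. Combining this with \eqref{eq:complement-br-payoff} gives the closed form
\[ x\SB(s;q)^2=\frac{2\pi q\ybar\,\eta\IC(s;q)}{2-\eta\IC(s;q)}. \]
Taking logs and differentiating in $q$ yields $1/q+\frac{2}{\eta(2-\eta)}\,\partial\eta\IC/\partial q$. For $\partial\eta\IC/\partial q$ I reuse the implicit differentiation from the proof of \cref{lem:complement-ic-intensity-q}, together with the formulas for $\partial u^*/\partial\eta$ and $\partial u^*/\partial q$ from \cref{lem:complement-br-payoff}. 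Write $L\equiv\log(s/\pi\eta q^2)$, which is positive because $\eta\IC<s/\pi q^2$. Then
\[ \parfrac{\eta\IC}{q}=-\frac{(2-3\eta)(2-\eta)}{2qL}, \]
and hence
\[ \parfrac{\log x\SB}{q}=\frac{1}{q}\left(1-\frac{2(2-3\eta)}{\eta L}\right). \]
So $x\SB$ is increasing in $q$ if and only if
\[ \varphi\equiv\eta L+6\eta-4>0, \]
evaluated at $\eta=\eta\IC(s;q)$.

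\emph{Locating the crossing.} I parametrize the IC branch by $\eta\in(0,1)$, using that $\eta\IC$ is a continuous increasing bijection by \cref{lem:complement-ic-intensity}.
\begin{itemize}
\item As $s\to q\ybar$ we have $\eta\to0$, so $\eta L\to0$ and $\varphi\to-4<0$.
\item At $\eta=2/3$, i.e.\ at $s=s^\dag$, we get $\varphi=\tfrac23L>0$.
\item At $\shat(q)$, equation \eqref{eq:complement-fb-intensity-foc} gives $L=2/\etahat-1$, so $\varphi=5\etahat-2$. I then need $\etahat>2/5$. I would prove this through the function $Q$ from the proof of \cref{lem:complement-sb-intensity}: since $Q$ is increasing, it suffices to check $Q(2/5)<q$ using $\ybar<$ the bounds \eqref{eq:complement-sb-restriction}. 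If that fails, I would instead argue via the $\varphi>0$ condition directly.
\end{itemize}
These facts put a zero of $\varphi$ inside $(q\ybar,\min\{s^\dag,\shat(q)\})$.

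\emph{Main obstacle: uniqueness of the zero.} I would use the IC identity, via the expanded payoff \eqref{eq:complement-br-payoff-expanded}, to write $\log s$ explicitly as a function of $\eta$ along the branch. This makes $L$, and hence $\varphi$, an explicit function of $\eta$. I would then show single crossing: at any zero, $\eta L=4-6\eta$, and I would show $d\varphi/d\eta>0$ there. I expect this derivative computation to be the main obstacle. The fallback is to show that $\eta L$ is increasing in $s$ directly, since $L=\log s-\log\eta-\log(\pi q^2)$ and the derivative of $\log s$ along the branch can be computed.

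\emph{Monotonicity of $s^\ddag(q)$.} Define $s^\ddag(q)$ by $\varphi(s^\ddag,q)=0$ and apply the implicit function theorem. With $\partial\varphi/\partial s>0$ from the step above, it remains to show $\partial\varphi/\partial q<0$ at fixed $s$. This should follow because raising $q$ lowers $L$ directly through the $-2\log q$ term, and at the crossing $\eta\IC<2/3$, so $\eta\IC$ falls in $q$ by \cref{lem:complement-ic-intensity-q}. I still need to check that the combined effect on $\varphi$ is negative.
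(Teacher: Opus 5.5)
\textbf{Gap: a dropped factor of 2 that changes the result.} Your formula for $\partial\eta\IC/\partial q$ is correct, but the next substitution misses a cancellation. Since $\frac{2}{\eta(2-\eta)}\cdot\frac{(2-3\eta)(2-\eta)}{2qL}=\frac{2-3\eta}{q\eta L}$, you should get
\[ \parfrac{\log(x\SB)^2}{q}=\frac{1}{q}\left(1-\frac{2-3\eta}{\eta L}\right). \]
So the correct sign function is $g\equiv\eta L+3\eta-2$ (the paper's $\eta\IC(3+h)-2$), not $\varphi=\eta L+6\eta-4=g+(3\eta-2)$.

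With $g$, the value at $\shat(q)$ is $2\etahat>0$ unconditionally. With $\varphi$ it is $5\etahat-2$, and your proposed check $\etahat>2/5$ is false in general. Indeed $Q(2/5)=25\ybar/(8\pi e^3)$ exceeds $q$ whenever $\ybar>8\pi qe^3/25$, and \eqref{eq:complement-sb-restriction} bounds $\ybar$ from above only through $s_0$. For example, $\pi=q=1$, $\ybar=20$, $s_0=21$ satisfies \eqref{eq:complement-sb-restriction-1} and \eqref{eq:complement-sb-restriction-2} but gives $\etahat\approx0.32$. Then $\shat(q)<s^\dag$ and $\varphi<0$ on the entire IC branch, so your argument finds no crossing at all. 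Your fallback cannot help, because $\varphi$ genuinely is negative there.

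That sign also contradicts the envelope theorem. At $s=\shat(q)$ the intensity is interior, so $\partial x^*/\partial\eta=0$, and the $q$-derivative along the IC branch must equal $\partial x^*/\partial q>0$.

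Even when $\varphi$ does cross zero, $\varphi<g$ wherever $\eta<2/3$. So its zero lies strictly above the true $s^\ddag(q)$, and you would report the wrong sign on an interval. For example, at $(\pi,\ybar,q,s)=(1,2,1,5)$ one has $\eta\IC\approx0.431$ and $L\approx2.45$, so $g\approx0.35>0$ while $\varphi\approx-0.36$. Direct computation confirms $x\SB(5;1.05)\approx1.056>x\SB(5;1)\approx1.048$.

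\textbf{With $g$ in place of $\varphi$, the rest of your plan works, and parts are simpler than the paper's.}

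Uniqueness is not an obstacle: $\partial g/\partial s=(2+L)\,\partial\eta\IC/\partial s+\eta\IC/s>0$ by \cref{lem:complement-ic-intensity} and $L>0$.

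The comparison with $s^\dag$ goes through as you describe.

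For monotonicity of $s^\ddag(q)$, your implicit-function route closes immediately. The conditions $g=0$ and $L>0$ force $\eta\IC=2/(3+L)<2/3$, so $\partial\eta\IC/\partial q<0$. Hence $\partial g/\partial q=(2+L)\,\partial\eta\IC/\partial q-2\eta\IC/q<0$, which gives $ds^\ddag/dq>0$. This is shorter than the paper's route, which reparametrizes by $\eta^\ddag$, writes $q=Q(\eta^\ddag)$, and signs the polynomials $P_1$ and $P_2$.
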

\ifbodyproofs\subsection{Proof of \texorpdfstring{\cref{thm:complement-sb-effort}}{Theorem~\ref{thm:complement-sb-effort}}}

\crefalias{enumi}{claim}

The result contains three claims:
\begin{enumerate}

    \item
    \label{claim:complement-sb-effort-1}
    There exists~$s^\ddag\in(q\ybar,\shat(q))$ such that~$x\SB(s;q)$ is decreasing in~$q$ when~$s<s^\ddag$ and increasing in~$q$ when~$s>s^\ddag$.
    
    \item
    \label{claim:complement-sb-effort-2}
    $s^\ddag<s^\dag$.

    \item
    \label{claim:complement-sb-effort-3}
    $s^\ddag$ is increasing in~$q$.

\end{enumerate}
I prove these claims below.

\subsubsection{Proof of \texorpdfstring{\cref{claim:complement-sb-effort-1}}{Claim~\ref{claim:complement-sb-effort-1}}}

My proof of \cref{claim:complement-sb-effort-1} invokes the following lemma.

\begin{lemma}
    \label{lem:complement-ic-intensity-ddq}
    Suppose~\eqref{eq:complement-sb-restriction-1} holds, define~$\shat(q)$ as in \cref{lem:complement-sb-intensity}, let~$s\in(q\ybar,\shat(q))$, and define
    \[ h(s,\eta;q)\equiv\log\left(\frac{s}{\pi\eta q^2}\right) \]
    for all~$\eta\in(0,1]$.
    Then
    \begin{equation}
        \label{eq:complement-ic-intensity-ddq}
        \parfrac{\eta\IC(s;q)}{q}=\frac{(2-\eta\IC(s;q))(3\eta\IC(s;q)-2)}{2q\,h(s,\eta\IC(s;q);q)}.
    \end{equation}
\end{lemma}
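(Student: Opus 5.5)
Differentiate the defining identity \eqref{eq:complement-ic-intensity-definition} implicitly in $q$, exactly as in the proof of \cref{lem:complement-ic-intensity-q}, but now computing $\partial u^*/\partial\eta$ explicitly rather than only signing it. Write $\eta\equiv\eta\IC(s;q)$ and $u^*\equiv u^*(s,\eta;q)=\pi q\ybar$. The chain rule gives
\[ \parfrac{u^*(s,\eta;q)}{\eta}\,\parfrac{\eta\IC(s;q)}{q}+\parfrac{u^*(s,\eta;q)}{q}=\pi\ybar, \]
evaluated at $\eta=\eta\IC(s;q)$. The implicit function theorem applies because $\partial u^*/\partial\eta<0$ at $\eta\IC(s;q)$; this holds since $0<\eta\IC(s;q)<s/\pi q^2$ by \cref{lem:complement-ic-intensity}.

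Next I will substitute the two partial derivatives. From \eqref{eq:complement-br-payoff-expanded},
\[ \parfrac{u^*}{q}=\frac{2\eta}{2-\eta}\cdot\frac{u^*}{q}=\frac{2\eta}{2-\eta}\,\pi\ybar, \]
using $u^*=\pi q\ybar$. From the proof of \cref{lem:complement-br-payoff},
\[ \parfrac{u^*}{\eta}=-\frac{2u^*}{(2-\eta)^2}\,h(s,\eta;q)=-\frac{2\pi q\ybar}{(2-\eta)^2}\,h(s,\eta;q). \]
Solving for the derivative gives
\[ \parfrac{\eta\IC}{q}=\frac{\pi\ybar\left(1-\frac{2\eta}{2-\eta}\right)}{-\frac{2\pi q\ybar}{(2-\eta)^2}h}
=\frac{(2-\eta)^2}{2qh}\cdot\frac{3\eta-2}{2-\eta}
=\frac{(2-\eta)(3\eta-2)}{2q\,h(s,\eta;q)}. \]
Here I use $1-\frac{2\eta}{2-\eta}=\frac{2-3\eta}{2-\eta}$, and the minus sign flips $2-3\eta$ to $3\eta-2$. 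This is \eqref{eq:complement-ic-intensity-ddq}.

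The only delicate point is confirming that $h(s,\eta\IC(s;q);q)>0$, so that the division is legitimate and the sign reading matches \cref{lem:complement-ic-intensity-q}. This follows from $\eta\IC(s;q)<s/\pi q^2$. The restriction $s<\shat(q)$ plays no role in the algebra; it only ensures that $\eta\SB=\eta\IC$, which is where the formula will later be used. Everything else is routine substitution, and the main risk is a sign slip in $\partial u^*/\partial\eta$, which I will cross-check against the sign statement in \cref{lem:complement-ic-intensity-q}.
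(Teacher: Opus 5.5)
Your proposal is correct and matches the paper's proof essentially step for step. You differentiate the indifference condition $u^*(s,\eta\IC(s;q);q)=\pi q\ybar$ in $q$ and substitute $\partial u^*/\partial q=\frac{2\eta}{2-\eta}\cdot\frac{u^*}{q}$ from \eqref{eq:complement-br-payoff-expanded} and $\partial u^*/\partial\eta=-\frac{2u^*}{(2-\eta)^2}h$ from the proof of \cref{lem:complement-br-payoff}, both evaluated using $u^*=\pi q\ybar$, with $h>0$ because $\eta\IC(s;q)<s/\pi q^2$; your explicit appeal to the implicit function theorem for differentiability in $q$ is a small point the paper leaves implicit.
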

\begin{proof}
    Differentiating~\eqref{eq:complement-ic-intensity-definition}  with respect to~$q$ gives
    \begin{equation}
        \label{eq:complement-ic-incensity-ift-q}
        \parfrac{u^*(s,\eta;q)}{q}\bigg\vert_{\eta=\eta\IC(s;q)}+\parfrac{u^*(s,\eta;q)}{\eta}\bigg\vert_{\eta=\eta\IC(s;q)}\parfrac{\eta\IC(s;q)}{q}=\pi\ybar.
    \end{equation}
    Now~\eqref{eq:complement-br-payoff-expanded} implies
    \begin{align}
        \parfrac{u^*(s,\eta;q)}{q}\bigg\vert_{\eta=\eta\IC(s;q)}
        &= \frac{2\eta u^*(s,\eta;q)}{(2-\eta)q}\bigg\vert_{\eta=\eta\IC(s;q)} \notag \\
        &= \frac{2\pi\eta\IC(s;q)\ybar}{2-\eta\IC(s;q)}. \label{eq:complement-ic-intensity-dudq}
    \end{align}
    Moreover, the proof of \cref{lem:complement-br-payoff} implies
    \begin{align}
        \parfrac{u^*(s,\eta;q)}{\eta}\bigg\vert_{\eta=\eta\IC(s;q)}
        &= -\frac{2u^*(s,\eta;q)}{(2-\eta)^2}h(s,\eta;q)\bigg\vert_{\eta=\eta\IC(s;q)} \notag \\
        &= -\frac{2\pi q\ybar}{(2-\eta\IC(s;q))^2}h(s,\eta\IC(s;q);q), \label{eq:complement-ic-intensity-dudeta}
    \end{align}
    where~$h(s,\eta\IC(s;q);q)>0$ because~$\eta\IC(s;q)<s/\pi q^2$.
    Substituting~\eqref{eq:complement-ic-intensity-dudq} and~\eqref{eq:complement-ic-intensity-dudeta} into~\eqref{eq:complement-ic-incensity-ift-q} gives~\eqref{eq:complement-ic-intensity-ddq}.
\end{proof}

\begin{proof}[Proof of \cref{claim:complement-sb-effort-1}]
    Suppose~$s\ge\shat(q)$.
    Then~$x\SB(s;q)=x\FB(s;q)$, which, by \cref{prop:complement-fb-br-q}, is increasing in~$q$.

    Now suppose~$s<\shat(q)$.
    Then~$\eta\SB(s;q)=\eta\IC(s;q)<1$.
    But~$\eta\IC(s;q)$ satisfies~\eqref{eq:complement-ic-intensity-definition}, which, by \cref{lem:complement-br-payoff}, implies
    \[ x\SB(s;q)=\sqrt{\frac{2\pi\eta\IC(s;q)q\ybar}{2-\eta\IC(s;q)}}. \]
    Differentiating with respect to~$q$ gives
    \begin{align*}
        \parfrac{x\SB(s;q)}{q}
        &= \frac{2\pi q\ybar}{x\SB(s;q)\left(2-\eta\IC(s;q)\right)^2}\left(\frac{\eta\IC(s;q)\left(2-\eta\IC(s;q)\right)}{2q}+\parfrac{\eta\IC(s;q)}{q}\right),
    \end{align*}
    which is positive if and only if
    \begin{equation}
        \label{eq:complement-sb-effort-condition-1}
        \parfrac{\eta\IC(s;q)}{q}>-\frac{\eta\IC(s;q)\left(2-\eta\IC(s;q)\right)}{2q}.
    \end{equation}
    Our goal is to express~\eqref{eq:complement-sb-effort-condition-1} as a condition on the skill level~$s$.
    Define~$h(s,\eta;q)$ as in \cref{lem:complement-ic-intensity-ddq} and define
    \[ g(s;q)\equiv\eta\IC(s;q)\left(3+h(s,\eta\IC(s;q);q)\right)-2 \]
    for all~$s\in(q\ybar,\shat(q))$.
    Substituting~\eqref{eq:complement-ic-intensity-ddq} into~\eqref{eq:complement-sb-effort-condition-1} and rearranging the result gives
    \begin{equation}
        \label{eq:complement-sb-effort-condition-2}
        g(s;q)>0.
    \end{equation}
    Conditions~\eqref{eq:complement-sb-effort-condition-1} and~\eqref{eq:complement-sb-effort-condition-2} are equivalent, and so~$x\SB(s;q)$ is increasing in~$q$ if and only if~\eqref{eq:complement-sb-effort-condition-2} holds.
    Now~$g(s;q)\to-2$ as~$s\to q\ybar$ from above because~$\eta\IC(s;q)\,h(s,\eta\IC(s;q);q)\to0$.
    Moreover, since~$h(s,\eta;q)$ is continuous in its arguments and~$\eta\IC(s;q)\to\eta\FB(s;q)$ as~$s\to\shat(q)$ from below, we have~$h(s,\eta\IC(s;q);q)\to h(s,\eta\FB(s;q);q)=2/\eta\FB(s;q)-1$ and hence~$g(s;q)\to2\eta\FB(s;q)>0$ as~$s\to\shat(q)$ from below.
    But~$g(s;q)$ is continuous in~$s$ and
    \begin{align*}
        \parfrac{g(s;q)}{s}
        &= \parfrac{\eta\IC(s;q)}{s}\left(2+h(s,\eta\IC(s;q);q)\right)+\frac{\eta\IC(s;q)}{s}
    \end{align*}
    is positive because~$\eta\IC(s;q)$ is increasing in~$s$ (by \cref{lem:complement-ic-intensity}).
    So~$g(s;q)$ rises continuously from a negative number to a positive number as~$s$ rises from~$q\ybar$ to~$\shat(q)$.
    So, by the intermediate value theorem, there is a unique skill level
    \[ s^\ddag\in\left(q\ybar,\,\shat(q)\right) \]
    such that~$g(s^\ddag;q)=0$, and~\eqref{eq:complement-sb-effort-condition-2} holds if and only if~$s>s^\ddag$.
    It follows that~$x\SB(s;q)$ is decreasing in~$q$ when~$s<s^\ddag$ and increasing in~$q$ when~$s>s^\ddag$.
\end{proof}

\subsubsection{Proof of \texorpdfstring{\cref{claim:complement-sb-effort-2}}{Claim~\ref{claim:complement-sb-effort-2}}}

\begin{proof}[\nopunct\unskip]
    Suppose~$s^\dag<\shat(q)$, and define~$g(s;q)$ and~$h(s,\eta;q)$ as in the proof of \cref{claim:complement-sb-effort-1}.
    Then \cref{lem:complement-ic-intensity} implies
    \begin{align*}
        g(s^\dag;q)
        &= \eta\IC(s^\dag;q)\left(3+h(s^\dag,\eta\IC(s^\dag;q);q)\right)-2 \\
        &= \frac{2}{3}\left(3+h\left(s^\dag,\frac{2}{3};q\right)\right)-2 \\
        &= \frac{2}{3}h\left(s^\dag,\frac{2}{3};q\right) \\
        &> 0,
    \end{align*}
    and therefore~$s^\ddag<s^\dag$ because~$g(s;q)$ is increasing in~$s$ and~$g(s^\ddag;q)=0<g(s^\dag;q)$.

    Now suppose~$s^\dag\ge\shat(q)$.
    Then~$s^\ddag<\shat(q)\le s^\dag$ and thus~$s^\ddag<s^\dag$.
\end{proof}

\subsubsection{Proof of \texorpdfstring{\cref{claim:complement-sb-effort-3}}{Claim~\ref{claim:complement-sb-effort-3}}}

\begin{proof}[\nopunct\unskip]
    Define~$g(s;q)$ as in the proof of \cref{claim:complement-sb-effort-1} and define~$\eta^\ddag\equiv\eta\SB(s^\ddag;q)\in(0,2/3)$ for convenience.
    Since~$s^\ddag<\shat(q)$, we have~$\eta\SB(s^\ddag;q)\equiv\eta\IC(s^\ddag;q)$.
    Thus
    \[ 0
        = g(s^\ddag;q)
        = \eta^\ddag\left(3+\log\left(\frac{s^\ddag}{\pi\eta^\ddag q^2}\right)\right)-2, \]
    which can be rewritten as
    \begin{equation}
        \label{eq:complement-sb-effort-sddag}
        s^\ddag=\pi\eta^\ddag q^2\exp\left(\frac{2}{\eta^\ddag}-3\right).
    \end{equation}
    Now~$s\mapsto\log(s)$ is an increasing transformation on~$(0,\infty)$, so~$s^\ddag$ is increasing in~$q$ if and only if
    \begin{align}
        \parfrac{\log(s^\ddag)}{q}
        &= \parfrac{}{q}\left[\log(\pi)+\log(\eta^\ddag)+2\log(q)+\frac{2}{\eta^\ddag}-3\right] \notag \\
        &= \frac{1}{\eta^\ddag}\left(1-\frac{2}{\eta^\ddag}\right)\parfrac{\eta^\ddag}{q}+\frac{2}{q} \label{eq:complement-sb-effort-dlogsdq}
    \end{align}
    is positive.

    To determine~\eqref{eq:complement-sb-effort-dlogsdq}, we need to determine~$\partial\eta^\ddag/\partial q$.
    Substituting~$(s,\eta)=(s^\ddag,\eta^\ddag)$ and~\eqref{eq:complement-sb-effort-sddag} into~\eqref{eq:complement-br-payoff-expanded} gives
    \begin{align*}
        u^*(s^\ddag,\eta^\ddag;q)
        &= \left(1-\frac{\eta^\ddag}{2}\right)\pi^2\eta^\ddag q^2\exp\left(\frac{2(2-3\eta^\ddag)(1-\eta^\ddag)}{\eta^\ddag(2-\eta^\ddag)}\right).
    \end{align*}
    But~$u^*(s^\ddag,\eta^\ddag;q)=\pi q\ybar$ by definition.
    Combining these two expressions yields an implicit equation
    \begin{equation}
        \label{eq:complement-sb-effort-3-implicit}
        q=Q(\eta^\ddag)\equiv\frac{2\ybar}{\pi\eta^\ddag(2-\eta^\ddag)}\exp\left(-\frac{2(2-3\eta^\ddag)(1-\eta^\ddag)}{\eta^\ddag(2-\eta^\ddag)}\right)
    \end{equation}
    defining~$\eta^\ddag$.

    Now~$Q:(0,2/3)\to\R$ is differentiable on its domain.
    Denoting its derivative by~$Q'$ gives
    \begin{align*}
        \frac{Q'(\eta)}{Q(\eta)}
        &= \parfrac{\log(Q(\eta))}{\eta} \\
        &= \parfrac{}{\eta}\left[\log\left(\frac{2\ybar}{\pi}\right)-\log\left(\eta(2-\eta)\right)-\frac{2(2-3\eta)(1-\eta)}{\eta(2-\eta)}\right] \\
        &= -\frac{2P_1(\eta)}{\eta^2(2-\eta)^2},
    \end{align*}
    where~$P_1(\eta)\equiv\eta^3-2\eta^2+6\eta-4<0$ for all~$\eta\in(0,2/3)$.
    Differentiating~\eqref{eq:complement-sb-effort-3-implicit} with respect to~$q$ and rearranging the result then gives
    \begin{align}
        \parfrac{\eta^\ddag}{q}
        &= \frac{1}{Q'(\eta^\ddag)} \notag \\
        &= -\frac{(\eta^\ddag)^2(2-\eta^\ddag)^2}{2qP_1(\eta^\ddag)} \label{eq:complement-sb-effort-detadq}
    \end{align}
    because~$Q(\eta^\ddag)=q$.
    Moreover, substituting~\eqref{eq:complement-sb-effort-detadq} into~\eqref{eq:complement-sb-effort-dlogsdq} gives
    \begin{align*}
        \parfrac{\log(s^\ddag)}{q}
        &= \frac{1}{\eta^\ddag}\left(1-\frac{2}{\eta^\ddag}\right)\left(-\frac{(\eta^\ddag)^2(2-\eta^\ddag)^2}{2qP_1(\eta^\ddag)}\right)+\frac{2}{q} \\
        &= \frac{P_2(\eta^\ddag)}{2qP_1(\eta^\ddag)},
    \end{align*}
    where~$P_2(\eta)\equiv3\eta^3-2\eta^2+12\eta-8<0$ for all~$\eta\in(0,2/3)$.
    Thus~$\partial\log(s^\ddag)/\partial q>0$, and so~$s^\ddag$ is increasing in~$q$.
\end{proof}
\fi

In the second-best course, the student gains skill~$s_{t+1}-s_t=rx\SB(s_t;q)$ proportional to the effort~$x\SB(s_t;q)$ he exerts on each task~$t\in\Tcal$.
Thus, \cref{thm:complement-sb-effort} says that under a second-best design, improvements in AI quality make high-skill students gain skill faster but low-skill students gain skill slower.
This is because high-skill students become more willing to work (since their productivity rises faster than AI's) but low-skill students become more tempted to delegate (since their productivity rises slower than AI's).

\cref{thm:complement-sb-effort} predicts heterogeneity in the impact of AI across students who use AI differently.
Students who use AI to augment their effort will learn more when AI quality improves.
In contrast, students who use AI as a substitute for effort will learn less when AI quality improves.
This prediction is consistent with empirical evidence that the educational impact of AI depends on how it is used \citep{Contractor-Reyes-2026-a,McNichols-etal-2026-Proc.LAK2616thInt.Learn.Anal.Knowl.Conf.,Pallant-etal-2026-StudHighEd,Shen-Tamkin-2026-,Zamarripa-2026-}.

\cref{thm:complement-sb-effort} also predicts that improvements in AI quality will widen the distribution of skill across students.
This may appear to conflict with empirical evidence that AI narrows the distribution of measured performance among students \citep{Bergenholtz-etal-2026-AMLE,Hausman-etal-2025-} and workers \citep{Brynjolfsson-etal-2025-QJE,Noy-Zhang-2023-Science}.
However, this conflict can be resolved by distinguishing output from skill.
In my model, delegation raises the student's output when he has low skill and cannot produce much output on his own.
So delegating to AI raises a low-skill student's output toward high-skill students' while leaving his skill unchanged.
Thus, compression in performance and divergence in skill are not competing predictions but joint implications of the same mechanism.

\subsection{Second-best skill gains}
\label{sec:complement-sb-gains}

Suppose~\eqref{eq:complement-sb-restriction-1} and~\eqref{eq:complement-sb-restriction-2} hold, and the teacher designs a second-best course: $\eta_t=\eta\SB(s_t;q)$ for each task~$t\in\Tcal$.
Then the student's overall skill gain
\[ s_T-s_0=r\sum_{t\in\Tcal}x\SB(s_t;q) \]
depends on his best-response efforts~$x\SB(s_t;q)$, which depend on his skill levels~$s_t$ and the AI quality parameter~$q$.
By \cref{thm:complement-sb-effort}, increasing~$q$ raises~$x\SB(s_t;q)$ when~$s_t$ is large but lowers~$x\SB(s_t;q)$ when~$s_t$ is small.
The overall effect of increasing~$q$ on~$(s_T-s_0)$ depends on whether the positive effects at high skill levels dominate the negative effects at low skill levels.

\begin{figure}
    \centering
    \includegraphics[width=0.55\linewidth]{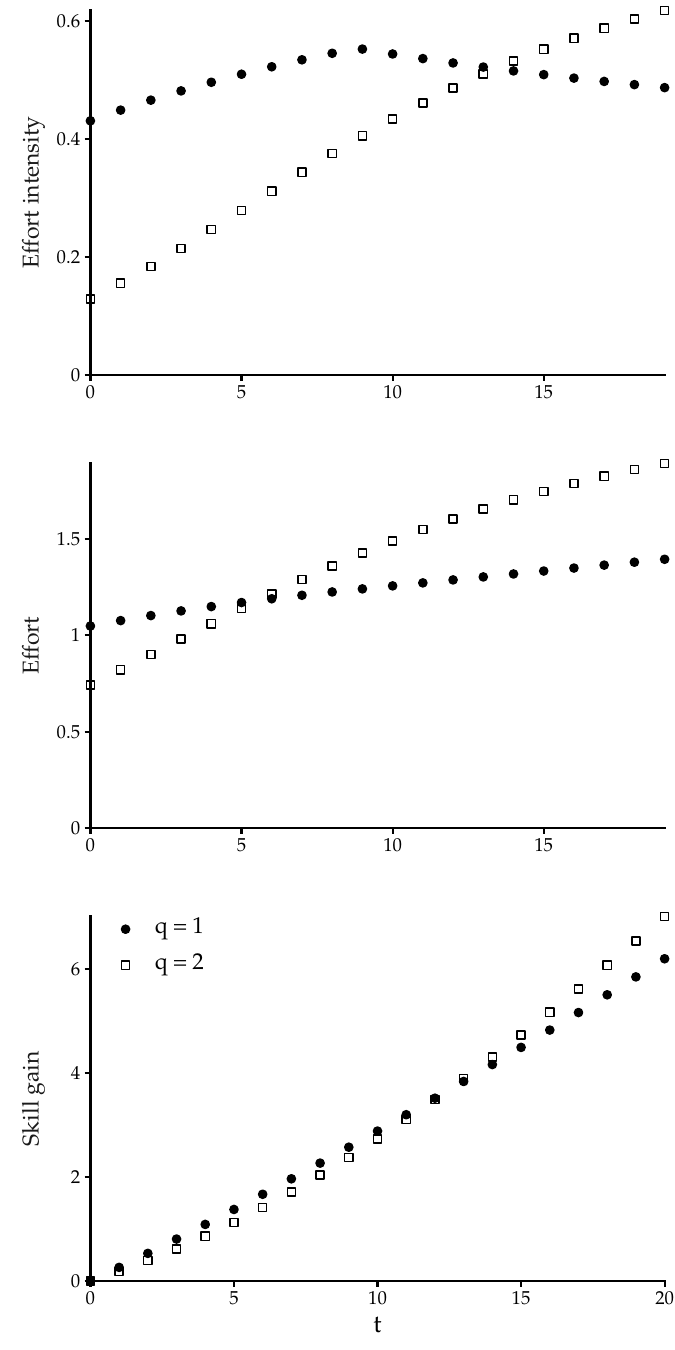}
    \caption{
    Effort intensities~$\eta_t$, best-response efforts~$x^*(s_t,\eta_t;q)$, and cumulative skill gains~$(s_t-s_0)$ in second-best course when~$(s_0,\pi,r,\ybar)=(5,1,0.25,2)$.
    }
    \label{fig:complement-sb-overall-gain}
\end{figure}
For example, suppose the student has initial skill~$s_0=5$, has motivation parameter~$\pi=1$, and builds skill at rate~$r=0.25$, and the AI generates output~$q\ybar$ with~$\ybar=2$.
\cref{fig:complement-sb-overall-gain} shows that the student exerts more effort~$x\SB(s_0;q)$ in the second-best course when~$q=1$ than when~$q=2$.
As a result, his skill~$s_1=s_0+rx\SB(s_0;q)$ after completing the initial task is higher when~$q=1$ than when~$q=2$.
However, his effort on the next few tasks grows faster when~$q=2$.
For each task~$t\ge6$, the student exerts more effort---and so gains skill faster---when~$q=2$.
Eventually his skill level~$s_t$ under~$q=2$ surpasses the level under~$q=1$.
Indeed, for tasks~$t\ge13$ onwards, the student's skill level~$s_t$ is higher when~$q=2$ than when~$q=1$.

This example shows that the effect of increasing the AI quality~$q$ on the student's overall skill gain~$(s_T-s_0)$ depends on the number of tasks~$T$.
In the example, increasing~$q$ from one to two raises~$(s_T-s_0)$ when~$T\ge13$ but lowers~$(s_T-s_0)$ when~$T\le12$.

The effect of increasing~$q$ on~$(s_T-s_0)$ also depends on the skill accumulation rate~$r$.
Intuitively, the larger is~$r$, the faster the student's skill level~$s_t$ passes the threshold~$s^\ddag(q)$, and so the higher is the proportion of tasks on which increasing~$q$ has positive effects (by raising the marginal product of effort) rather than negative effects (by making delegation more attractive).

\begin{table}
    \centering
    \small
    \caption{
    Overall skill gains~$(s_T-s_0)$ from second-best course when~$(s_0,\pi,\ybar)=(5,1,2)$.
    }
    \label{tab:complement-sb-overall-gain}
    \begin{tabular}{crrrrrr}
\toprule
& \multicolumn{2}{c}{$r=0.25$} & \multicolumn{2}{c}{$r=0.5$} & \multicolumn{2}{c}{$r=1$} \\
\cmidrule(lr){2-3} \cmidrule(lr){4-5} \cmidrule(lr){6-7}
$T$ & $q=1$ & $q=2$ & $q=1$ & $q=2$ & $q=1$ & $q=2$ \\
\midrule
5 & 1.38 & 1.13 & 2.86 & 2.61 & 6.04 & 6.34\\
10 & 2.88 & 2.73 & 6.15 & 6.79 & 13.64 & 16.60\\
20 & 6.20 & 7.01 & 13.85 & 17.24 & 32.96 & 41.89\\
\bottomrule
\end{tabular}

\end{table}
I quantify this intuition in \cref{tab:complement-sb-overall-gain}.
It shows how~$(s_T-s_0)$ depends on the number of tasks~$T$ and skill accumulation rate~$r$ when~$(s_0,\pi,\ybar)=(5,1,2)$ and~$q\in\{1,2\}$.
If~$T=5$, then raising~$q$ from one to two lowers~$(s_T-s_0)$ when~$r\in\{0.25,0.5\}$ but raises~$(s_T-s_0)$ when~$r=1$.
In contrast, if~$T=20$, then raising~$q$ from one to two raises~$(s_T-s_0)$ when~$r\in\{0.25,0.5\}$ \emph{and} when~$r=1$.
Intuitively, if the course is short, then the student must gain skill quickly for the positive effects of AI quality improvements on later tasks to outweigh the negative effects on early tasks.
If the course is longer, then the student does not need to gain skill as quickly for AI quality improvements to have positive effects overall.

The patterns above suggest AI quality improvements increase the return on ``learning how to learn'' \citep{Novak-Gowin-1984-}.
If the student gains more skill per unit of effort (i.e., $r$ is higher), then he overcomes his temptation to delegate earlier in the course and so is faster to realize the productivity-enhancing benefits of AI.

\section{Discussion of modeling assumptions}
\label{sec:assumptions}

This section discusses my modeling assumptions and the implications of changing them.
To economize on notation, I focus on the case with AI quality~$q=1$ and suppress~$q$ from my discussion.
However, my conclusions do not change qualitatively when I allow~$q\ge1$ to be arbitrary.

\subsection{Cobb-Douglas production}
\label{sec:cobb-douglas}

I assume the student produces output
\begin{equation}
    \label{eq:output-cobb-douglas}
    y(x,s,\eta)\equiv x^\eta s^{1-\eta}
\end{equation}
when he exerts effort~$x\ge0$, has skill~$s>0$, and works on a task with effort-intensity~$\eta\in(0,1]$.
The Cobb-Douglas production function~\eqref{eq:output-cobb-douglas} is increasing and weakly concave in~$x$, ensuring the student's best-response effort is unique and the teacher's problem is well-defined (see \cref{sec:quadratic-costs}).
The function also has two properties that are essential for my results:
\begin{enumerate}

    \item[(i)]
    A single argument~$\eta$ controls the relative productivities of effort and skill.

    \item[(ii)]
    The cross-derivative
     \[ \parfrac{^2y(x,s,\eta)}{x\,\partial s}=\eta(1-\eta)x^{\eta-1}s^{-\eta} \]
    is non-negative (and positive when~$\eta<1$).
    Thus, effort and skill are complementary inputs to production: having more of one raises the marginal productivity of the other.

\end{enumerate}
Property~(i) allows me to characterize first- and second-best courses as one-dimensional mappings from skill levels~$s$ to effort intensities~$\eta\FB(s)$ and~$\eta\SB(s)$.
Property~(ii) generates the constant-then-decreasing shape of first-best intensities established in \cref{lem:fb-intensity}.
To see why, let~$\rho\in(-1,\infty)$ be non-zero and suppose the student produces output
\begin{equation}
    \label{eq:output-ces}
    y(x,s,\eta)=\left(\eta x^{-\rho}+(1-\eta)s^{-\rho}\right)^{-1/\rho}.
\end{equation}
Then his effort and skill have constant elasticity of substitution (CES)~$1/(1+\rho)$, which measures how much he can compensate for exerting less effort by being more skilled (and vice versa).%
\footnote{
\cite{Heckman-2007-PNAS} and \cite{Cunha-Heckman-2007-AER} also use CES production functions to model skill formation.
}
As with Cobb-Douglas production, the effort intensity~$\eta\in(0,1]$ controls the relative productivities of effort and skill.
But the CES production function~\eqref{eq:output-ces} has an additional parameter~$\rho$ that controls the complementarity between effort and skill.
Raising~$\rho$ makes the inputs more complementary: as~$\rho$ rises from~$-1$ to~$\infty$, the elasticity of substitution~$1/(1+\rho)$ between effort and skill falls from~$\infty$ (i.e., the inputs are perfect substitutes) to zero (i.e., they are perfect complements).
Intuitively, as~$\rho$ rises, it becomes harder for the student to substitute effort for skill because production relies more on combining both inputs in fixed proportions.%
\footnote{
Indeed, in the limit as~$\rho\to\infty$, the CES production function~\eqref{eq:output-ces} converges to the Leontief function~$\min\{x,s\}$.
But if~$y(x,s,\eta)=\min\{x,s\}$, then the student optimally combines effort and skill in fixed proportions regardless of~$\eta$.
}

In the limit as~$\rho\to0$, the CES production function~\eqref{eq:output-ces} converges to the Cobb-Douglas function~\eqref{eq:output-cobb-douglas}, and effort and skill have elasticity of substitution equal to one.%
\footnote{
Since~$z\mapsto\exp(z)$ and~$z\mapsto\log(z)$ are continuous mappings on~$(0,\infty)$, we have
\[ \lim_{\rho\to0}\left(\eta x^{-\rho}+(1-\eta)s^{-\rho}\right)^{-1/\rho}=\exp\left(-\lim_{\rho\to0}\frac{1}{\rho}\log\left(\eta x^{-\rho}+(1-\eta)s^{-\rho}\right)\right). \]
The limit inside the bracket on the right-hand side is indeterminate.
So L'H\^opital's rule implies
\begin{align*}
    \lim_{\rho\to0}\left(\eta x^{-\rho}+(1-\eta)s^{-\rho}\right)^{-1/\rho}
    &= \exp\left(-\lim_{\rho\to0}\frac{\left(\eta x^{-\rho}(-\log(x))+(1-\eta)s^{-\rho}(-\log(s))\right)}{\eta x^{-\rho}+(1-\eta)s^{-\rho}}\right) \\
    &= \exp\left(-\left(\log\left(x^{-\eta}\right)+\log\left(s^{-(1-\eta)}\right)\right)\right) \\
    &= x^{\eta}s^{1-\eta}.
\end{align*}
Thus, the CES production function~\eqref{eq:output-ces} converges to the Cobb-Douglas function~\eqref{eq:output-cobb-douglas} in the limit as~$\rho\to0$.
}
Letting~$\rho$ be greater or less than zero allows me to embed more or less complementarity than in my baseline model with Cobb-Douglas production.
In particular, I can analyze how the constant-then-decreasing shape of~$\eta\FB(s)$ depends on the complementarity between effort and skill.
\cref{lem:fb-intensity-ces} extends this shape to CES production.%
\footnote{
Condition~\eqref{eq:fb-intensity-ces-foc} is analogous to the condition~\eqref{eq:fb-intensity-foc} derived in \cref{sec:fb} for the case with Cobb-Douglas production.
In fact, since
\[ \lim_{\rho\to0}\,(1+\rho)^{-1/\rho-1}\left(\frac{\eta}{\eta+\rho}\right)^{-2/\rho-1}=\exp\left(\frac{2}{\eta}-1\right) \]
for all~$\eta\in(0,1]$, the left-hand side of~\eqref{eq:fb-intensity-ces-foc} converges to the left-hand side of~\eqref{eq:fb-intensity-foc} as~$\rho\to0$.
Likewise, as~$\rho\to0$, the skill threshold~$\pi(1+\rho)^{1/\rho}$ identified in \cref{lem:fb-intensity-ces} converges to the threshold~$\pi e$ identified in \cref{lem:fb-intensity}.
Thus, the first-best intensity~$\eta\FB(s)$ defined in \cref{lem:fb-intensity-ces} nests the intensity defined for Cobb-Douglas production and generalizes it to CES production.
}

\begin{lemma}[First-best effort intensities with CES production]
    \label{lem:fb-intensity-ces}
    Suppose the student's production function has the CES form~\eqref{eq:output-ces}.
    For all skill levels~$s>0$, there is a unique~$\eta\FB(s)\in(0,1]$ such that:
    \begin{enumerate}

        \item[(i)]
        if~$s\le\pi(1+\rho)^{1/\rho}$, then~$\eta\FB(s)=1$;

        \item[(ii)]
        if~$s>\pi(1+\rho)^{1/\rho}$, then~$\eta\FB(s)\in(\max\{-\rho,0\},1)$ satisfies
        \begin{equation}
            \label{eq:fb-intensity-ces-foc}
            \eta\FB(s)(1+\rho)^{-1/\rho-1}\left(\frac{\eta\FB(s)}{\eta\FB(s)+\rho}\right)^{-2/\rho-1}=\frac{s}{\pi},
        \end{equation}
        and is decreasing in~$s$ and the complementarity parameter~$\rho$.

    \end{enumerate}
    Moreover, the best-response effort~$x^*(s,\eta)$ attains its maximum over~$\eta\in(0,1]$ when~$\eta=\eta\FB(s)$.
\end{lemma}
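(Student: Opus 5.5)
The plan is to follow the Cobb--Douglas argument behind \cref{lem:br-effort,lem:fb-intensity}, applied to the CES technology~\eqref{eq:output-ces}: first solve for the best-response effort $x^*(s,\eta)$, then study the sign of $\partial\log x^*/\partial\eta$.

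\emph{Step 1: best response.} With CES production, the payoff $\pi y(x,s,\eta)-x^2/2$ is strictly concave in $x$, because $y$ is concave in $x$ for $\rho>-1$. So $x^*$ is characterized by the first-order condition
\[ \pi\eta x^{-\rho-1}\bigl(\eta x^{-\rho}+(1-\eta)s^{-\rho}\bigr)^{-1/\rho-1}=x. \]
This condition cannot be solved in closed form. I will therefore work with the ratio $k\equiv x^*/s$ and write $x^*=s\,k$. The condition then becomes
\[ \pi\eta\, k^{-\rho-2}\bigl(\eta k^{-\rho}+1-\eta\bigr)^{-1/\rho-1}=s. \]
I will treat this as an implicit relation $F(k,\eta)=s/\pi$.

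\emph{Step 2: sign of $\partial x^*/\partial\eta$.} Holding $s$ fixed, maximizing $x^*$ over $\eta$ is the same as maximizing $k$. By the implicit function theorem, $\partial k/\partial\eta$ has the sign of $\partial_\eta\log F$ times the sign of $-\partial_k\log F$. I expect $\partial_k\log F<0$, since the effort response is unique. This reduces the problem to
\[ \partial_\eta\log F=\frac{1}{\eta}-\frac{1+\rho}{\rho}\cdot\frac{k^{-\rho}-1}{\eta k^{-\rho}+1-\eta}. \]
At an interior maximizer this expression vanishes. That condition yields $k^{-\rho}=(\eta+\rho)/(\eta(1+\rho))$, which requires $\eta+\rho>0$ and so gives the bound $\eta>\max\{-\rho,0\}$. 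Substituting this value of $k$ back into $F=s/\pi$ should produce exactly~\eqref{eq:fb-intensity-ces-foc}. The algebra is routine, because $\eta k^{-\rho}+1-\eta=(1+\rho)^{-1}\cdot(\ldots)$ simplifies once $k$ is eliminated.

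\emph{Step 3: monotonicity, and the main obstacle.} The main work is showing that the left-hand side of~\eqref{eq:fb-intensity-ces-foc}, call it $G_\rho(\eta)$, is strictly decreasing on $(\max\{-\rho,0\},1]$. I also need it to diverge to $\infty$ as $\eta\downarrow\max\{-\rho,0\}$, and to equal $\pi(1+\rho)^{1/\rho}/\pi=(1+\rho)^{1/\rho}$ at $\eta=1$ (this is direct, since $\eta/(\eta+\rho)=1/(1+\rho)$ there). For the monotonicity I will compute $\frac{d}{d\eta}\log G_\rho$. This equals
\[ \frac{1}{\eta}-\Bigl(\frac{2}{\rho}+1\Bigr)\frac{\rho}{\eta(\eta+\rho)}=\frac{\eta+\rho-2-\rho}{\eta(\eta+\rho)}=\frac{\eta-2}{\eta(\eta+\rho)}<0. \]
With these facts the argument of \cref{lem:complement-fb-intensity} carries over. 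If $s\le\pi(1+\rho)^{1/\rho}$, then $\partial_\eta\log F$ has the increasing sign throughout, so $\eta\FB=1$. Otherwise the intermediate value theorem gives a unique root, and $x^*$ rises before it and falls after it, so the root is the maximizer. Because $G_\rho$ is decreasing and the right-hand side $s/\pi$ is increasing, $\eta\FB$ is decreasing in $s$.

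\emph{Step 4: comparative static in $\rho$.} For monotonicity in $\rho$, I will show that $\partial_\rho\log G_\rho(\eta)<0$ at fixed $\eta$. Then a larger $\rho$ must be offset by a smaller $\eta\FB$. Writing
\[ \log G_\rho=\log\eta-\Bigl(\tfrac1\rho+1\Bigr)\log(1+\rho)+\Bigl(\tfrac2\rho+1\Bigr)\log\Bigl(1+\tfrac\rho\eta\Bigr), \]
the sign check reduces to a one-variable inequality of the form $\phi(\rho/\eta)$ compared with $\phi(\rho)$, using $\eta\le1$. I expect this to be the fiddliest step. I would first try to show that the map $a\mapsto\bigl(\tfrac{2}{\rho}+1\bigr)\log(1+a\rho)$-type terms are monotone in $a=1/\eta\ge1$. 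If that fails, I would fall back on a direct derivative computation followed by an elementary inequality $\log(1+z)\gtrless z/(1+z)$.
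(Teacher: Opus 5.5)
Your skeleton matches the paper's. You take the first-order condition for $x^*$, use the implicit function theorem for the sign of $\partial x^*/\partial\eta$, pin down $x^*/s$ with an interior condition, substitute back to get $G_\rho(\eta)=s/\pi$, and compute the same derivative $\frac{d}{d\eta}\log G_\rho=(\eta-2)/(\eta(\eta+\rho))$. However, a few steps fail or are missing.

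\textbf{The interior condition is mis-solved.} Setting $\partial_\eta\log F=0$ gives $\rho(\eta k^{-\rho}+1-\eta)=\eta(1+\rho)(k^{-\rho}-1)$, which collapses to $\eta k^{-\rho}=\eta+\rho$. So $k^{-\rho}=1+\rho/\eta$, which is the paper's $s=(1+\rho/\eta)^{1/\rho}x^*$, not $(\eta+\rho)/(\eta(1+\rho))$.
\begin{itemize}
\item With your value, $\eta k^{-\rho}+1-\eta=(1+2\rho-\rho\eta)/(1+\rho)$, and the substitution does \emph{not} produce \eqref{eq:fb-intensity-ces-foc}.
\item With the corrected value, $\eta k^{-\rho}+1-\eta=1+\rho$ and $k^{-\rho-2}=(\eta/(\eta+\rho))^{-2/\rho-1}$, so $F$ becomes exactly $G_\rho(\eta)$.
\item As a check, letting $\rho\to0$ gives $k=e^{-1/\eta}$, which matches the Cobb--Douglas $x\FB(s)=s\exp(-1/\eta\FB(s))$.
\end{itemize}

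\textbf{The claim that the argument ``carries over'' skips the step that does the work.} Here $x^*$ has no closed form, so you need to show the following.
\begin{itemize}
\item $\partial_\eta\log F$ has the sign of $(\eta+\rho-\eta k^{-\rho})/\rho$.
\item This is positive for every $k$ when $\eta\le-\rho$. Otherwise it is positive if and only if $k>k_c(\eta)\equiv(1+\rho/\eta)^{-1/\rho}$.
\item A direct computation gives $k\,\partial_k\log F\le-1$, which confirms the sign you only ``expected.'' Hence $F(\cdot,\eta)$ is decreasing, and $k^*>k_c(\eta)$ if and only if $s/\pi<F(k_c(\eta),\eta)=G_\rho(\eta)$.
\item You also need $G_\rho\to\infty$ at $\max\{-\rho,0\}$. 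This follows from writing $G_\rho=\eta^{-2/\rho}(\eta+\rho)^{2/\rho+1}(1+\rho)^{-1/\rho-1}$.
\end{itemize}
This single-crossing fact then yields (i), (ii) and the maximizer claim. It is a legitimate alternative to the paper's route, which gets a maximizer by continuity on $[0,1]$ and then characterizes interior maximizers by the first-order condition.

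\textbf{The genuine gap is Step~4, which you leave open.} One has
\[ \rho^2\,\partial_\rho\log G_\rho(\eta)=H(\eta,\rho)\equiv\log(1+\rho)-2\log\left(1+\frac{\rho}{\eta}\right)+\frac{\rho(2-\eta)}{\eta+\rho}, \]
and you need $H<0$. This does not reduce to comparing $\phi(\rho/\eta)$ with $\phi(\rho)$. With $a=\rho/\eta$, the last term is $(2a-\rho)/(1+a)$, which mixes the two arguments.

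The paper fixes $\eta$ and varies $\rho$. Then $\partial_\rho H=-\rho\bigl((\eta-1)^2+1+\rho\bigr)/\bigl((1+\rho)(\eta+\rho)^2\bigr)$ has the sign of $-\rho$, and $H\to0$ as $\rho\to0$. Hence $H<0$ on both sides of zero.

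Your fallback also works if you vary $\eta$ instead. Then $\partial_\eta H=\rho^2(2-\eta)/\bigl(\eta(\eta+\rho)^2\bigr)>0$, so $H(\eta,\rho)\le H(1,\rho)=\rho/(1+\rho)-\log(1+\rho)<0$ by $\log(1+z)>z/(1+z)$ for $z\neq0$. This is exactly where $\eta\le1$ enters. Without one of these arguments, the claim that $\eta\FB(s)$ is decreasing in $\rho$ remains unproved.
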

\ifbodyproofs\subsection{Proof of \texorpdfstring{\cref{lem:fb-intensity-ces}}{Lemma~\ref{lem:fb-intensity-ces}}}

\cref{lem:fb-intensity-ces} is the special case of \cref{lem:complement-fb-intensity-ces} with~$q=1$.

\begin{lemma}[First-best effort intensities with CES production and complementary AI]
    \label{lem:complement-fb-intensity-ces}
    Let~$\rho\in(-1,\infty)$ be non-zero and suppose the student's production function has the CES form
    \begin{equation}
        \label{eq:complement-output-ces}
        y(x,s,\eta;q)=\left(\eta(qx)^{-\rho}+(1-\eta)s^{-\rho}\right)^{-1/\rho}.
    \end{equation}
    For all skill levels~$s>0$, there is a unique~$\eta\FB(s;q)\in(0,1]$ such that:
    
    \begin{enumerate}

        \item[(i)]
        if~$s\le\pi(1+\rho)^{1/\rho}q^2$, then~$\eta\FB(s;q)=1$;

        \item[(ii)]
        if~$s>\pi(1+\rho)^{1/\rho}q^2$, then~$\eta\FB(s;q)\in(\max\{-\rho,0\},1)$ satisfies
        \begin{equation}
            \label{eq:complement-fb-intensity-ces-foc}
            \eta\FB(s;q)(1+\rho)^{-1/\rho-1}\left(\frac{\eta\FB(s;q)}{\eta\FB(s;q)+\rho}\right)^{-2/\rho-1}=\frac{s}{\pi q^2},
        \end{equation}
        is decreasing in~$s$ and~$\rho$, and is increasing in~$q$;

        \item[(iii)]
        the best-response effort~$x^*(s,\eta;q)$  attains its maximum over~$\eta\in(0,1]$ when~$\eta=\eta\FB(s;q)$.

    \end{enumerate}
    Moreover, $\eta\FB(s;q)$ is continuous in both~$s$ and~$q$, and converges to~$\max\{-\rho,0\}$ as~$s\to\infty$.
\end{lemma}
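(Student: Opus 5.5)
Following the Cobb-Douglas case (\cref{lem:complement-br-effort,lem:complement-fb-intensity}), I would first compute the best-response effort under~\eqref{eq:complement-output-ces} and then maximize it over~$\eta$. Write $z\equiv qx$ and note that, as in~\eqref{eq:complement-enhanced-payoff}, the payoff equals $q^{-2}(\pi q^2 y(z,s,\eta)-z^2/2)$ with $y$ the $q=1$ CES function. Hence it suffices to treat $q=1$ with $\pi$ replaced by $\pi q^2$, which explains every appearance of $\pi q^2$ in the statement. With $q=1$, the first-order condition is $\pi\eta x^{-\rho-1}y^{1+\rho}=x$. Using homogeneity, set $k\equiv x/s$; then $y=s(\eta k^{-\rho}+1-\eta)^{-1/\rho}$, and the FOC becomes
\[ \frac{s}{\pi}=\eta\,k^{-\rho-2}\left(\eta k^{-\rho}+1-\eta\right)^{-(1+\rho)/\rho}. \]
I would confirm that the payoff is strictly concave (or at least single-peaked) in~$x$, so that the FOC identifies the unique maximizer $k^*(\eta)$. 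The implicit equation $F(k,\eta)=s/\pi$ then defines $x^*(s,\eta)=s\,k^*(\eta)$.

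Next I would maximize $k^*$ over $\eta\in(0,1]$ by the implicit function theorem. Since $F$ is decreasing in~$k$ (to be checked), the sign of $\partial k^*/\partial\eta$ equals the sign of $\partial F/\partial\eta$. We have
\[ \partial\log F/\partial\eta=\frac{1}{\eta}-\frac{1+\rho}{\rho}\cdot\frac{k^{-\rho}-1}{\eta k^{-\rho}+1-\eta}. \]
Setting this to zero gives $\rho(\eta k^{-\rho}+1-\eta)=(1+\rho)\eta(k^{-\rho}-1)$, i.e.\ $k^{-\rho}=(\eta+\rho)/(\eta(1+\rho))$ at the critical point. This requires $\eta+\rho$ to have the sign needed for $k^{-\rho}>0$, which yields the bound $\eta>\max\{-\rho,0\}$. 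Substituting this $k$ back into $F=s/\pi$ should produce exactly~\eqref{eq:complement-fb-intensity-ces-foc}. This algebra is routine but it is where errors creep in, so I would verify it against the limit $\rho\to0$, which must recover~\eqref{eq:fb-intensity-foc}.

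The main obstacle is showing that the left-hand side $G(\eta)$ of~\eqref{eq:complement-fb-intensity-ces-foc} is strictly decreasing on $(\max\{-\rho,0\},1]$ and diverges to $\infty$ at the lower endpoint. Monotonicity gives uniqueness of the critical point and identifies it as a maximum: $k^*$ is increasing in $\eta$ when $G(\eta)>s/\pi$ and decreasing otherwise, mirroring~\eqref{eq:complement-br-effort-intensity-condition}. I would differentiate $\log G=\log\eta-(1/\rho+1)\log(1+\rho)+(2/\rho+1)\log((\eta+\rho)/\eta)$ to get
\[ \frac{1}{\eta}-\left(\frac{2}{\rho}+1\right)\frac{\rho}{\eta(\eta+\rho)}=\frac{\eta-\rho-2}{\eta(\eta+\rho)}, \]
which is negative because $\eta\le1<\rho+2$ and $\eta+\rho>0$. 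Divergence holds because $(\eta/(\eta+\rho))^{-2/\rho-1}\to\infty$ as $\eta\to0$ when $\rho>0$, and as $\eta\to-\rho$ when $\rho\in(-1,0)$. At the other endpoint, $G(1)=(1+\rho)^{-1/\rho-1}(1+\rho)^{2/\rho+1}=(1+\rho)^{1/\rho}$. Therefore, if $s\le\pi(1+\rho)^{1/\rho}q^2$, effort is increasing on the whole interval and $\eta\FB=1$. Otherwise the intermediate value theorem gives a unique interior $\eta\FB$, and since $G$ is decreasing, $\eta\FB$ is decreasing in $s$ and increasing in $\pi q^2$, hence in~$q$.

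For monotonicity in $\rho$, I would show $\partial\log G/\partial\rho>0$ at fixed $\eta$, so that $\eta\FB$ must fall as $\rho$ rises; I expect this sign check to be the messiest step. Continuity in $s$ and $q$ follows from the implicit function theorem together with the boundary matching $G(1)=(1+\rho)^{1/\rho}$. The limit $\eta\FB\to\max\{-\rho,0\}$ as $s\to\infty$ follows because $G$ blows up only at that endpoint.
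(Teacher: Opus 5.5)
Your overall route is essentially the paper's. You reduce the interior first-order condition to $G(\eta,\rho)=s/(\pi q^2)$, show $G$ falls in $\eta$ from $\infty$ to $(1+\rho)^{1/\rho}$, and read off (i)--(iii). Your normalization $k=x/s$, with the pointwise criterion ``$k^*$ increases in $\eta$ iff $G(\eta)>s/\pi$,'' is a valid and slightly cleaner substitute for the paper's compactness-plus-interior-FOC argument.

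\textbf{Gap: monotonicity in $\rho$.} This step is wrong as planned. Since $\eta^{\mathrm{FB}}$ solves $G(\eta,\rho)=s/(\pi q^2)$ with $\partial G/\partial\eta<0$, the implicit function theorem gives $\partial\eta^{\mathrm{FB}}/\partial\rho=-(\partial G/\partial\rho)/(\partial G/\partial\eta)$. So $\eta^{\mathrm{FB}}$ falls in $\rho$ if and only if $\partial G/\partial\rho<0$. The inequality you propose to prove, $\partial\log G/\partial\rho>0$, is false: at $\eta=1$ we have $\log G=\log(1+\rho)/\rho$, which is decreasing in $\rho$. Your inequality would also imply the opposite comparative static. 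What is needed is $\partial\log G/\partial\rho<0$ on the whole domain. That is the one genuinely nontrivial computation in this lemma, and your proposal gives no argument for it. The paper writes $\partial\log G/\partial\rho=H(\eta,\rho)/\rho^2$ with
\[ H(\eta,\rho)=\log(1+\rho)+2\log\left(\frac{\eta}{\eta+\rho}\right)-\frac{\rho(\eta-2)}{\eta+\rho}. \]
It then shows that $\partial H/\partial\rho=-\rho\left((\eta-1)^2+1+\rho\right)/\left((1+\rho)(\eta+\rho)^2\right)$ has the sign of $-\rho$. Since $H\to0$ as $\rho\to0$, this gives $H<0$ for every admissible $\rho\neq0$.

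\textbf{Algebra slips in the ``routine'' steps.} These would derail the verification you planned.
\begin{itemize}
\item Solving $\rho(\eta k^{-\rho}+1-\eta)=(1+\rho)\eta(k^{-\rho}-1)$ gives $k^{-\rho}=(\eta+\rho)/\eta$, not $(\eta+\rho)/(\eta(1+\rho))$. With your expression, substituting back does not yield~\eqref{eq:complement-fb-intensity-ces-foc}, and your own $\rho\to0$ check gives $k\to e^{1-1/\eta}$ instead of $e^{-1/\eta}$. With the correct expression, $\eta k^{-\rho}+1-\eta=1+\rho$ and $F$ collapses exactly to $G$; this is the paper's condition $s=(1+\rho/\eta^*)^{1/\rho}qx^*$.
\item The derivative is $\partial\log G/\partial\eta=(\eta-2)/(\eta(\eta+\rho))$, not $(\eta-\rho-2)/(\eta(\eta+\rho))$. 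Your sign conclusion survives.
\item For $\rho>0$, the divergence of $G$ as $\eta\to0$ must account for the prefactor $\eta$: $G$ behaves like a constant times $\eta^{-2/\rho}$.
\item For $\rho\in(-1,0)$, you still need $k^*$ to be increasing on $(0,-\rho]$, where there is no critical point. This follows because $\partial\log F/\partial\eta=(\eta+\rho-\eta k^{-\rho})/\left(\eta\rho(\eta k^{-\rho}+1-\eta)\right)$ is positive there.
\end{itemize}
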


\cref{lem:complement-fb-intensity-ces} generalizes \cref{lem:complement-fb-intensity} to CES production functions.

My proof of \cref{lem:complement-fb-intensity-ces} invokes the following result.

\begin{lemma}
    \label{lem:complement-fb-intensity-ces-function}
    Let
    \[ Z\equiv\{(z_1,z_2)\in\R^2:\max\{-z_2,0\}<z_1\le1,\ z_2>-1,\ z_2\not=0\} \]
    and let~$G:Z\to\R$ be the function defined by
    \[ G(z_1,z_2)=z_1(1+z_2)^{-1/z_2-1}\left(\frac{z_1}{z_1+z_2}\right)^{-2/z_2-1}. \]
    Then~$G$ is continuously differentiable and decreasing in both arguments on its domain.
    Moreover,
    \[ \lim_{z_1\to^+\max\{-z_2,0\}}G(z_1,z_2)=\infty\ \ \text{and}\ \ G(1,z_2)=(1+z_2)^{1/z_2} \]
    for all non-zero~$z_2>-1$.
\end{lemma}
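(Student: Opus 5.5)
The plan is to work with $\log G$ throughout, since every factor of $G$ is a power. On $Z$ we have $z_1>0$, $1+z_2>0$ and $z_1+z_2>0$, so
\[ \log G(z_1,z_2)=\log z_1-\Bigl(\tfrac{1}{z_2}+1\Bigr)\log(1+z_2)-\Bigl(\tfrac{2}{z_2}+1\Bigr)\bigl(\log z_1-\log(z_1+z_2)\bigr) \]
is a composition of smooth functions. It is well defined because $z_2\neq0$. Hence $G=\exp(\log G)$ is continuously differentiable on $Z$.

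\emph{The value at $z_1=1$.} Here the last factor equals $(1+z_2)^{2/z_2+1}$. Combining it with $(1+z_2)^{-1/z_2-1}$ gives $G(1,z_2)=(1+z_2)^{1/z_2}$.

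\emph{The boundary limit.} I split into two cases.
\begin{itemize}
\item If $z_2>0$, then $z_1\to0$. The $\log z_1$ terms combine to $-\tfrac{2}{z_2}\log z_1\to\infty$, and everything else stays bounded.
\item If $-1<z_2<0$, then $z_1\to-z_2>0$, so $\log(z_1+z_2)\to-\infty$. This term enters with coefficient $\tfrac{2}{z_2}+1=\tfrac{2+z_2}{z_2}<0$, so it tends to $+\infty$, and the other terms stay bounded.
\end{itemize}
In both cases $\log G\to\infty$, so $G\to\infty$.

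\emph{Monotonicity in $z_1$.} A direct computation gives
\[ \partial_{z_1}\log G=\frac{1}{z_1}-\frac{2+z_2}{z_1(z_1+z_2)}=\frac{z_1-2}{z_1(z_1+z_2)}. \]
This is negative because $0<z_1\le1$ and $z_1+z_2>0$.

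\emph{Monotonicity in $z_2$ (the main obstacle).} Differentiating and collecting terms, I expect
\[ \partial_{z_2}\log G=\frac{1}{z_2^2}\,\varphi(z_2),\qquad \varphi(w)\equiv\log(1+w)-2\log\Bigl(1+\frac{w}{z_1}\Bigr)+\frac{w(2-z_1)}{z_1+w}, \]
with $z_1$ held fixed. The step that uses the hint is to note that $\varphi(0)=0$ and to study $\varphi'$ on the interval $w\in(\max\{-z_1,-1\},\infty)$.

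Putting $\varphi'$ over the common denominator $(1+w)(z_1+w)^2>0$, the numerator should simplify to
\[ -w\bigl(w+1+(1-z_1)^2\bigr). \]
The bracket is positive because $w>-1$. So $\varphi'>0$ for $w<0$ and $\varphi'<0$ for $w>0$, which means $\varphi$ attains its maximum value $0$ at $w=0$. Hence $\varphi(z_2)<0$ for every admissible $z_2\neq0$, so $\partial_{z_2}\log G<0$ and $G$ is decreasing in $z_2$.

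The only real risk is an algebra slip in the expansion of the numerator of $\varphi'$, so I would recheck that expansion term by term.
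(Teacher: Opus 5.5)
Your proposal is correct and follows the paper's proof essentially step for step. Both work with $\log G$, both get $\partial_{z_1}\log G=(z_1-2)/(z_1(z_1+z_2))$, and both write $\partial_{z_2}\log G$ as $z_2^{-2}$ times a function (your $\varphi$, the paper's $H$) that vanishes at $z_2=0$ and whose derivative has numerator $-z_2\bigl(z_2+1+(1-z_1)^2\bigr)$; your expansion of that numerator checks out, so the algebra slip you worried about does not occur.
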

\begin{proof}
    I first establish the boundary values of~$G(z_1,z_2)$ on~$Z$.
    Fix~$z_2>-1$ with~$z_2\not=0$ and write
    \[ G(z_1,z_2)=z_1^{-2/z_2}(z_1+z_2)^{2/z_2+1}(1+z_2)^{-1/z_2-1}. \]
    Clearly~$G(1,z_2)=(1+z_2)^{1/z_2}$.
    For the limit as~$z_1\to\max\{-z_2,0\}$ from above, there are two cases to consider:
    \begin{enumerate}

        \item[(i)]
        Suppose~$z_2>0$ so that~$\max\{-z_2,0\}=0$.
        Then, as~$z_1\to0$ from above, we have~$z_1^{-2/z_2}\to\infty$ and~$(z_1+z_2)^{2/z_2+1}\to z_2^{2/z_2+1}\in(0,\infty)$, and so~$G(z_1,z_2)\to\infty$.

        \item[(ii)]
        Now suppose~$z_2\in(-1,0)$ so that~$\max\{-z_2,0\}=-z_2$.
        Then, as~$z_1\to-z_2$ from above, we have~$z_1^{-2/z_2}\to(-z_2)^{-2/z_2}\in(0,\infty)$ and~$(z_1+z_2)^{2/z_2+1}\to\infty$, and so~$G(z_1,z_2)\to\infty$.

    \end{enumerate}
    Thus~$G(z_1,z_2)\to\infty$ as~$z_1\to\max\{-z_2,0\}$ from above.

    Next, I establish continuous differentiability.
    Define the functions~$G_1,G_2,G_3:Z\to\R$ by
    \[ G_1(z_1,z_2)\equiv z_1,\ \ G_2(z_1,z_2)\equiv(1+z_2)^{-1/z_2-1},\ \ \text{and}\ \ G_3(z_1,z_2)\equiv\left(\frac{z_1}{z_1+z_2}\right)^{-2/z_2-1}. \]
    Each function is positive and continuously differentiable on its domain, so their product~$G$ is also positive and continuously differentiable.

    It remains to show~$G(z_1,z_2)$ is decreasing in its arguments over~$Z$.
    Let~$(z_1,z_2)\in Z$ so that~$0<z_1<2$ and~$z_1+z_2>0$.
    Then
    \begin{align*}
        \parfrac{\log\left(G(z_1,z_2)\right)}{z_1}
        &= \parfrac{}{z_1}\left[\log(z_1)-\left(\frac{1}{z_2}+1\right)\log(1+z_2)-\left(\frac{2}{z_2}+1\right)\log\left(\frac{z_1}{z_1+z_2}\right)\right] \\
        &= \frac{z_1-2}{z_1(z_1+z_2)} \\
        &< 0
    \end{align*}
    and therefore
    \[ \parfrac{G(z_1,z_2)}{z_1}=G(z_1,z_2)\parfrac{\log(G(z_1,z_2))}{z_1}<0. \]
    Similarly
    \begin{align*}
        \parfrac{G(z_1,z_2)}{z_2}
        &= G(z_1,z_2)\parfrac{\log(G(z_1,z_2))}{z_2} \\
        &= G(z_1,z_2)\frac{1}{z_2^2}H(z_1,z_2),
    \end{align*}
    where~$H:Z\to\R$ is defined by
    \[ H(z_1,z_2)\equiv\log(1+z_2)+2\log\left(\frac{z_1}{z_1+z_2}\right)-\frac{z_2(z_1-2)}{z_1+z_2}. \]
    Since~$G(z_1,z_2)>0$ and~$z_2^2>0$, it suffices to show~$H(z_1,z_2)<0$.
    Now
    \begin{align*}
        \parfrac{H(z_1,z_2)}{z_2}
        &= -\frac{z_2\left((z_1-1)^2+(1+z_2)\right)}{(1+z_2)(z_1+z_2)^2}
    \end{align*}
    has the same sign as~$-z_2$ because~$1+z_2>0$ and~$(z_1+z_2)^2>0$.
    So~$H(z_1,z_2)$ is increasing in~$z_2$ over~$(-z_1,0)$ and decreasing in~$z_2$ over~$(0,\infty)$.
    But each term in~$H(z_1,z_2)$ tends to zero as~$z_2\to0$, so~$H(z_1,z_2)\to0$ as~$z_2\to0$.
    It follows that~$H(z_1,z_2)<0$ for all~$(z_1,z_2)\in Z$.
\end{proof}

\begin{proof}[Proof of \cref{lem:complement-fb-intensity-ces}]
    Fix~$s>0$ and~$\eta\in(0,1]$, and define~$g(x,s,\eta;q)\equiv\eta(qx)^{-\rho}+(1-\eta)s^{-\rho}$.
    Differentiating the student's payoff
    \[ u(x,s,\eta;q)=\pi g(x,s,\eta;q)^{-1/\rho}-\frac{x^2}{2} \]
    twice with respect to~$x$ and simplifying the result gives
    \begin{align*}
        \parfrac{^2u(x,s,\eta;q)}{x^2}
        &= -\pi(1+\rho)\eta(1-\eta)s^{-\rho}q^2(qx)^{-\rho-2}g(x,s,\eta;q)^{-1/\rho-2}-1,
    \end{align*}
    which is negative when~$x>0$, $\rho>-1$, and~$\rho\not=0$.
    So there is a unique best-response effort~$x^*\equiv x^*(s,\eta;q)>0$ satisfying the first-order condition
    \[ 0
        = \parfrac{u(x,s,\eta;q)}{x}\bigg\vert_{x=x^*}
        = \pi g(x^*,s,\eta;q)^{-1/\rho-1}\left(\eta q^{-\rho}(x^*)^{-\rho-1}\right)-x^*, \]
    which can be rewritten as
    \begin{equation}
        \label{eq:complement-br-effort-ces-foc}
        (qx^*)^{2+\rho}=\pi\eta q^2 g(x^*,s,\eta;q)^{-1/\rho-1}.
    \end{equation}
    Moreover, since~$u(x,s,\eta;q)$ is concave, the implicit function theorem implies~$x^*$ is continuously differentiable (and, thus, continuous) in~$\eta$ and that the derivative~$\partial x^*/\partial\eta$ has the same sign as
    \begin{align}
        \parfrac{^2u(x,s,\eta;q)}{x\,\partial\eta}\bigg\vert_{x=x^*}
        &= \pi g(x^*,s,\eta;q)^{-1/\rho-2}q^{-\rho}(x^*)^{-\rho-1}\left(s^{-\rho}+\frac{\eta}{\rho}\left(s^{-\rho}-(qx^*)^{-\rho}\right)\right). \label{eq:complement-d2udxdeta-ces}
    \end{align}
    If~$\eta\le\max\{-\rho,0\}$, then~\eqref{eq:complement-d2udxdeta-ces} is always positive;
    if~$\eta>\max\{-\rho,0\}$, then~\eqref{eq:complement-d2udxdeta-ces} is positive if and only if
    \begin{equation}
        \label{eq:complement-fb-intensity-ces-condition-1}
        s<\left(1+\frac{\rho}{\eta}\right)^{1/\rho}qx^*.
    \end{equation}
    Now the first-order condition~\eqref{eq:complement-br-effort-ces-foc} implies~$x^*\to0$ as~$\eta\to0$.
    Therefore, the continuity of~$x^*$ in~$\eta$ extends from~$(0,1]$ to the compact interval~$[0,1]$.
    Continuity and compactness then imply~$x^*$ attains its maximum over~$\eta\in[0,1]$.
    But its maximizer(s) must be positive since~$x^*\to0$ as~$\eta\to0$ while~$x^*>0$ for all~$\eta>0$.
    Therefore, there exists~$\eta^*\in(0,1]$ such that~$x^*$ attains its maximum over~$\eta\in(0,1]$ when~$\eta=\eta^*$.

    Assume~$\eta^*$ is an interior maximizer:~$\eta^*<1$.
    Then
    \begin{equation}
        \label{eq:complement-fb-intensity-ces-foc-interior-1}
        s=\left(1+\frac{\rho}{\eta^*}\right)^{1/\rho}qx^*,
    \end{equation}
    which requires~$\eta^*>\max\{-\rho,0\}$.
    Substituting~\eqref{eq:complement-fb-intensity-ces-foc-interior-1} into the first-order condition~\eqref{eq:complement-br-effort-ces-foc} and rearranging the result gives
    %
    \begin{equation}
        \label{eq:complement-fb-intensity-foc-interior-2}
        \frac{s}{\pi q^2}=G(\eta^*,\rho),
    \end{equation}
    where~$G$ is the function defined in \cref{lem:complement-fb-intensity-ces-function}.
    But~$G(\eta,\rho)$ falls continuously from~$\infty$ to~$(1+\rho)^{1/\rho}$ as~$\eta$ rises from~$\max\{-\rho,0\}$ to one.
    So there are two cases to consider:
    \begin{enumerate}

        \item[(i)]
        Suppose~$s\le\pi(1+\rho)^{1/\rho}q^2$.
        Then~$s/\pi q^2<G(\eta,\rho)$ for all~$\eta<1$, contradicting~\eqref{eq:complement-fb-intensity-foc-interior-2}.
        So the assumption that~$\eta^*<1$ was false, implying~$\eta^*=1$.

        \item[(ii)]
        Now suppose~$s>\pi(1+\rho)^{1/\rho}q^2$.
        Then~~$\eta^*<1$ satisfies the implicit equation~\eqref{eq:complement-fb-intensity-foc-interior-2} uniquely and maximizes the best-response effort~$x^*(s,\eta;q)$ over~$\eta\in(0,1]$.%
        \footnote{
        To see why~$\eta^*<1$ in case~(ii), assume towards a contradiction that~$\eta^*=1$.
        Then~$x^*=\pi q$ and so~$s>(1+\rho)^{1/\rho}qx^*$.
        But then~\eqref{eq:complement-d2udxdeta-ces} is negative and hence~$\partial x^*/\partial\eta<0$, contradicting the maximality of~$x^*(s,\eta^*;q)$.
        }
        As~$s$ rises or~$q$ falls, the left-hand side of~\eqref{eq:complement-fb-intensity-foc-interior-2} rises, and so~$\eta^*$ must fall because~$G(\eta,\rho)$ is decreasing in~$\eta$.
        Likewise, as~$\rho$ rises, the maximizer~$\eta^*$ must fall because~$G(\eta,\rho)$ is decreasing in~$\eta$ and~$\rho$ (by \cref{lem:complement-fb-intensity-ces-function}) but~$s/\pi q^2$ does not change.
        It follows that~$\eta^*$ is decreasing in~$s$ and~$\rho$, and increasing in~$q$.

    \end{enumerate}
    Letting~$\eta\FB(s;q)\equiv\eta^*$ yields parts~(i)--(iii) of the result.
    It remains to show~$\eta\FB(s;q)$ is continuous in~$s$ and~$q$, and converges to~$\max\{-\rho,0\}$ as~$s\to\infty$.

    \cref{lem:complement-fb-intensity-ces-function} implies that for all non-zero~$\rho>-1$, the mapping~$\eta\mapsto G(\eta,\rho)$ is continuously differentiable on~$(\max\{-\rho,0\},1]$ and has~$\partial G(\eta,\rho)/\partial\eta\not=0$.
    The implicit function theorem then implies~$\eta\FB(s;q)$ is (differentiable and thus) continuous in both~$s$ and~$q$ when~$s>\pi(1+\rho)^{1/\rho}q^2$.
    We also have~$\eta\FB(s;q)\to1$ as~$s\to\pi(1+\rho)^{1/\rho}q^2$ from above, from which it follows that~$\eta\FB(s;q)$ is globally continuous in~$s$ and~$q$.

    Finally, the right-hand side of~\eqref{eq:complement-fb-intensity-ces-foc} grows without bound as~$s\to\infty$, while the left-hand side diverges to~$\infty$ only as~$\eta\FB(s;q)\to\max\{-\rho,0\}$.
    Thus~$\eta\FB(s;q)\to\max\{-\rho,0\}$ as~$s\to\infty$.
\end{proof}
\fi

For all skill levels~$s>0$, the first-best effort intensity~$\eta\FB(s)$ is constant in the complementarity parameter~$\rho$ when~$s\le\pi(1+\rho)^{1/\rho}$ and is decreasing in~$\rho$ when~$s>\pi(1+\rho)^{1/\rho}$.
The intuition is as follows.
The teacher designs her first-best course to maximize the student's best-response effort.
He chooses this effort by equating its marginal productivity and cost.
As~$\rho$ rises, effort and skill become more complementary: production relies more on using both inputs, not just one.
As a result, the student's productivity-cost trade-off becomes more sensitive to his skill level~$s$.
The teacher internalizes this increase in sensitivity by making the first-best effort intensity~$\eta\FB(s)$ more sensitive to~$s$.

\begin{figure}
    \centering
    \includegraphics[width=0.55\linewidth]{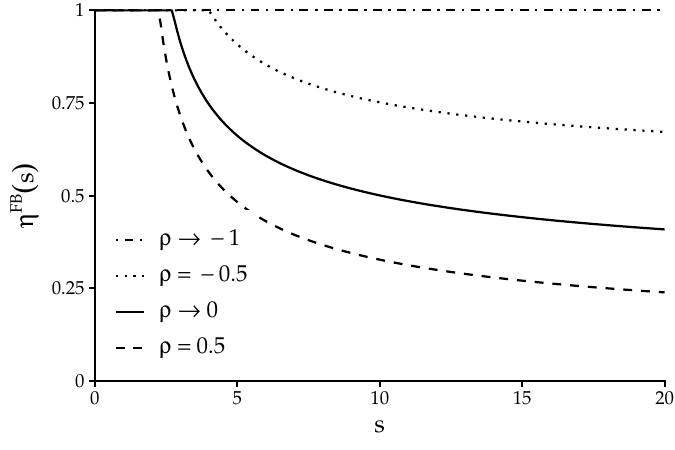}
    \caption{
    First-best effort intensities~$\eta\FB(s)$ with CES production and~$\pi=1$.
    }
    \label{fig:fb-intensity-ces}
\end{figure}
\cref{fig:fb-intensity-ces} compares the cases with~$\rho>0$ and~$\rho\in(-1,0)$ to the Cobb-Douglas case with~$\rho\to0$.
If~$\rho>0$, then effort and skill are \emph{more} complementary than under Cobb-Douglas production, and the first-best course design is \emph{more} sensitive to the student's skill level.
In contrast, if~$\rho\in(-1,0)$, then effort and skill are \emph{less} complementary than under Cobb-Douglas production, and the first-best design is \emph{less} sensitive to the student's skill level.
In the limit as~$\rho\to-1$ (i.e., when effort and skill are perfect substitutes), the student's best-response effort~$x^*(s,\eta)\to\pi\eta$ is independent of his skill level~$s$ and the teacher maximizes~$x^*(s,\eta)$ by choosing~$\eta=1$ regardless of~$s$.

Thus, the constant-then-decreasing shape of~$\eta\FB(s)$ stems from the complementarity between effort and skill.
Assuming complementarity is appropriate for my setting: a student with higher skill allocates effort more efficiently.
Changing the degree of complementarity (as measured by the parameter~$\rho$) changes the student's and teacher's choices, but does not change my qualitative result that first-best courses comprise tasks with non-increasing effort intensities (\cref{thm:fb-solution}).

\subsection{Quadratic costs}
\label{sec:quadratic-costs}

I assume the student exerts effort~$x\ge0$ at quadratic cost~$c(x)\equiv x^2/2$.
This allows me to derive an explicit, closed-form expression for the student's best-response effort (see \cref{lem:br-effort}).
This would not be possible if the cost function~$c:[0,\infty)\to\R$ were arbitrary.
Assuming~$c$ is convex ensures the student's objective~$u(x,s,\eta)\equiv\pi x^\eta s^{1-\eta}-c(x)$ is concave in~$x$ for all skill levels~$s>0$ and effort intensities~$\eta\in(0,1]$.
As a result, there is always a unique and finite maximizer that the student chooses as his best-response effort.%
\footnote{
In particular, suppose the teacher designs a task with full effort intensity~$\eta=1$.
Then the student's valuation~$\pi x$ of his output is linear in his effort~$x$.
If the cost of exerting effort is also linear, then the student optimally exerts zero or infinite effort, depending on whether the constant marginal cost of effort exceeds~$\pi$.
}
Thus, convex costs ensure the teacher's problem---maximize best-response effort subject to an incentive constraint---is well-defined.
Quadratic costs are a convenient normalization.

Convex costs are also necessary for my result that first-best courses comprise tasks with gradually decreasing effort intensities (\cref{thm:fb-solution}).
For example, suppose the student exerts effort~$x\ge0$ at cost~$c(x)\equiv x^{1+\gamma}/(1+\gamma)$, where the parameter
\[ \gamma=\frac{c''(x)x}{c'(x)} \]
is non-negative and is higher when costs are more convex.
Following the proof of \cref{lem:br-effort}, the student exerts best-response effort
\[ x^*(s,\eta)=s^{(2-\eta)/(\gamma+1-\eta)}\left(\frac{\pi\eta}{s}\right)^{1/(\gamma+1-\eta)}, \]
which is increasing in~$\eta$ if
\begin{equation}
    \label{eq:br-effort-intensity-condition-gamma}
    \eta\exp\left(\frac{\gamma+1}{\eta}-1\right)>\frac{s^\gamma}{\pi}
\end{equation}
and non-increasing otherwise.
If~$\gamma=0$ (i.e., effort costs are linear), then the condition~\eqref{eq:br-effort-intensity-condition-gamma} is independent of the skill level~$s$ and so the effort intensity~$\eta$ that maximizes~$x^*(s,\eta)$ is independent of~$s$.
In contrast, if~$\gamma>0$ (i.e., effort costs are convex), then the right-hand side of~\eqref{eq:br-effort-intensity-condition-gamma} is increasing in~$s$ and so the maximizer falls as~$s$ rises.
Thus, first-best courses comprise tasks with gradually decreasing effort intensities if and only if~$\gamma>0$.

Intuitively, as the student completes tasks and builds skill, he becomes more willing to exert effort because his higher skill makes effort more productive.
But if he exerts more effort, then the marginal cost of effort rises relative to skill, since effort costs are convex but skill is costless.
To recover the optimal trade-off between productivity and cost, the teacher steers the student toward the less-marginally-expensive input by making tasks less effort-intensive and more skill-intensive.

\subsection{AI-generated output is constant}
\label{sec:constant-ybar}

I assume that when the student delegates a task~$t$, the AI-generated output~$\ybar$ does not depend on his skill level~$s_t$ or the task's effort intensity~$\eta_t$.
Independence from~$s_t$ follows from my definition of ``delegation'': the student hands off the entire task to AI.
So when the student delegates, the output he obtains reflects AI's ability rather than his own.
In contrast, when the student works and uses AI as a complement to effort (see \cref{sec:complement}), both his and AI's abilities determine output.
I use separate parameters ($\ybar$ and~$q$) to quantify AI's ability as a delegate and complement to keep the two modes of AI assistance separate.

Independence from~$\eta_t$ is a more substantive assumption.
It is a middle ground between two natural alternatives: letting~$\ybar$ increase or decrease with~$\eta_t$.
I consider these alternatives below.

When the teacher designs second-best courses (see \cref{sec:sb}), she satisfies the incentive constraint~\eqref{eq:ic-constraint} by lowering~$\eta_t$ on early tasks~$t$.
This makes the tasks more skill-intensive and raises the student's payoff~$u^*(s_t,\eta_t)$ from working on them.
If~$\ybar$ were increasing in~$\eta_t$, then lowering~$\eta_t$ would also loosen~\eqref{eq:ic-constraint} because it would lower the payoff~$\ubar\equiv\pi\ybar$ from delegating the task.
So the teacher would not have to distort tasks as far toward skill intensity as \cref{thm:sb-solution} describes.

In contrast, if~$\ybar$ were decreasing in~$\eta_t$, then lowering~$\eta_t$ may \emph{tighten}~\eqref{eq:ic-constraint} because the delegation payoff could rise by more than the payoff from working.%
\footnote{
In particular, lowering~$\eta_t$ would tighten~\eqref{eq:ic-constraint} if
\[ -\pi\parfrac{\ybar}{\eta}\bigg\vert_{\eta=\eta_t}>-\parfrac{u^*(s_t,\eta)}{\eta}\bigg\vert_{\eta=\eta_t}. \]
}
Then the teacher would need to distort tasks \emph{away} from skill intensity, leading to qualitatively different courses than my model predicts.

\subsection{Skill grows with effort}
\label{sec:accumulation-spec}

I assume the student builds skill proportional to the effort he exerts on a task.
Proportionality is a convenient normalization and allows me to analyze the comparative statics of skill development by varying a single parameter: the skill accumulation rate~$r$ (see \cref{sec:sb-gains,sec:complement-sb-gains}).
However, proportionality is not necessary for my characterization of first- and second-best courses: so long as the skill increment~$(s_{t+1}-s_t)$ is increasing in the student's best-response effort~$x^*(s_t,\eta_t)$, the teacher cannot do better than choose the effort intensity~$\eta_t\in(0,1]$ that maximizes~$x^*(s_t,\eta_t)$.

Thus, my substantive assumption is that skill grows with effort.
This is consistent with cognitive scientific research on deliberate practice \citep{Ericsson-etal-1993-PsychRev} and ``desirable difficulty'' \citep{Bjork-Bjork-2011-Psychologyandtherealworld:Essaysillustratingfundamentalcontributionstosociety}, which emphasizes that learning requires effortful engagement.
The assumption also aligns with \citeapos{Arrow-1962-REStud} model of learning-by-doing, which assumes productivity grows with cumulative investment in a productive input.

An alternative is to assume the student's skill grows proportionally to the output
\begin{equation}
    \label{eq:br-output}
    y^*(s,\eta)\equiv\left(x^*(s,\eta)\right)^\eta s^{1-\eta}
\end{equation}
he produces from exerting effort~$x^*(s,\eta)$.
This would align my model with \citeapos{Rosen-1972-QJE} model of learning-by-doing, which assumes productivity grows with cumulative output.
\cref{lem:br-output} characterizes the ``best-response output''~$y^*(s,\eta)$ and its dependence on the model parameters.
\begin{lemma}[Best-response outputs]
    \label{lem:br-output}
    Given a skill level~$s>0$ and effort intensity~$\eta\in(0,1]$, the student's best-response effort~$x^*(s,\eta)$ produces output
    \[ y^*(s,\eta)=\frac{1}{\pi\eta}\left(x^*(s,\eta)\right)^2. \]
    This output is
    (i)~positive,
    (ii)~increasing in the motivation parameter~$\pi$,
    (iii)~increasing in~$s$ when~$\eta<1$ and constant in~$s$ when~$\eta=1$, and
    (iv)~decreasing in~$\eta$ when
    \begin{equation}
        \label{eq:br-output-intensity-condition}
        \eta\exp\left(1-\frac{\eta}{2}\right)<\frac{s}{\pi}
    \end{equation}
    and non-decreasing otherwise.
\end{lemma}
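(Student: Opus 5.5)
The plan is to mirror the proofs of \cref{lem:br-effort,lem:br-payoff}, substituting the first-order condition and then working with logarithms. First I would derive the closed form. The first-order condition~\eqref{eq:br-effort-foc} with~$q=1$ gives~$\pi\eta (x^*)^{\eta-1}s^{1-\eta}=x^*$. Multiplying both sides by~$x^*$ yields~$\pi\eta\,(x^*)^\eta s^{1-\eta}=(x^*)^2$. Hence~$y^*(s,\eta)=(x^*(s,\eta))^2/(\pi\eta)$, which is the displayed formula.

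Next I would substitute~\eqref{eq:br-effort} and take logs:
\[ \log y^*(s,\eta)=\frac{2}{2-\eta}\log(\pi\eta)+\frac{2(1-\eta)}{2-\eta}\log(s)-\log(\pi\eta)=\frac{\eta}{2-\eta}\log(\pi\eta)+\frac{2(1-\eta)}{2-\eta}\log(s). \]
Equivalently, $y^*(s,\eta)=s(\pi\eta/s)^{\eta/(2-\eta)}$. Parts~(i)--(iii) then follow by inspection:
\begin{itemize}
\item[(i)] Positivity is immediate from the formula.
\item[(ii)] The derivative of~$\log y^*$ with respect to~$\pi$ is~$\eta/(\pi(2-\eta))>0$.
\item[(iii)] The derivative of~$\log y^*$ with respect to~$s$ is~$2(1-\eta)/(s(2-\eta))$, which is positive when~$\eta<1$ and zero when~$\eta=1$.
\end{itemize}

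For part~(iv), I would differentiate~$\log y^*=\frac{\eta}{2-\eta}\log(\pi\eta/s)+\log s$ with respect to~$\eta$. Since~$\frac{d}{d\eta}\frac{\eta}{2-\eta}=\frac{2}{(2-\eta)^2}$, this gives
\[ \parfrac{\log y^*}{\eta}=\frac{2}{(2-\eta)^2}\log\left(\frac{\pi\eta}{s}\right)+\frac{1}{2-\eta}=\frac{2}{(2-\eta)^2}\left(\log\left(\frac{\pi\eta}{s}\right)+1-\frac{\eta}{2}\right). \]
This derivative is negative if and only if~$\log(\pi\eta/s)+1-\eta/2<0$, that is, if and only if~$\eta\exp(1-\eta/2)<s/\pi$, which is condition~\eqref{eq:br-output-intensity-condition}. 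Otherwise the derivative is non-negative.

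The only step that needs care is the statement "non-decreasing otherwise." This is a pointwise local claim, in the same form as \cref{lem:br-effort}(iv). I would also note that~$\eta\exp(1-\eta/2)$ is increasing on~$(0,1]$, since its log-derivative is~$1/\eta-1/2>0$. So condition~\eqref{eq:br-output-intensity-condition} holds exactly on an initial interval of~$\eta$. This means the local sign statement also describes a global decreasing-then-increasing shape in~$\eta$. I expect no real obstacle beyond keeping the algebra of the exponents straight.
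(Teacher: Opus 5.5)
Your proposal is correct and follows essentially the paper's proof. The paper obtains $y^*=(x^*)^2/(\pi\eta)$ from the payoff identity in Lemma~\ref{lem:br-payoff}, which is equivalent to your use of the first-order condition, and then signs the derivatives in $\pi$ and $\eta$ by the chain rule through $x^*$, arriving at the same factor $1-\eta/2-\log(s/\pi\eta)$; your direct log-differentiation of the closed form $y^*=s(\pi\eta/s)^{\eta/(2-\eta)}$ is just a slightly tidier way of doing the same computation.
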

\ifbodyproofs\subsection{Proof of \texorpdfstring{\cref{lem:br-output}}{Lemma~\ref{lem:br-output}}}

\cref{lem:br-output} is the special case of \cref{lem:complement-br-output} with~$q=1$.

\begin{lemma}[Best-response outputs with complementary AI]
    \label{lem:complement-br-output}
    Given a skill level~$s>0$ and effort intensity~$\eta\in(0,1]$, the student's best-response effort~$x^*(s,\eta;q)$ produces output
    \[ y^*(s,\eta;q)=\frac{1}{\pi\eta}\left(x^*(s,\eta;q)\right)^2. \]
    This output is
    (i)~positive,
    (ii)~increasing in the motivation parameter~$\pi$,
    (iii)~increasing in~$s$ when~$\eta<1$ and constant in~$s$ when~$\eta=1$,
    (iv)~decreasing in~$\eta$ when
    \begin{equation}
        \label{eq:complement-br-output-intensity-condition}
        \eta\exp\left(1-\frac{\eta}{2}\right)<\frac{s}{\pi q^2}
    \end{equation}
    and non-decreasing otherwise, and
    (v)~increasing in the AI quality parameter~$q$.
\end{lemma}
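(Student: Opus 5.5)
The plan is to reuse the closed form for $x^*(s,\eta;q)$ from \cref{lem:complement-br-effort} together with its first-order condition~\eqref{eq:br-effort-foc}. The closed form for $y^*$ comes straight from that condition, and parts (i), (iii) and (v) then reduce to properties of $x^*$ already established. Parts (ii) and (iv) each need one short computation.

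\textbf{Closed form.} With complementary AI the output is $y^*(s,\eta;q)=(qx^*)^{\eta}s^{1-\eta}$. Multiply~\eqref{eq:br-effort-foc} by $x^*$ to get $\pi\eta(qx^*)^\eta s^{1-\eta}=(x^*)^2$. This gives $y^*=(x^*)^2/(\pi\eta)$ immediately.

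\textbf{Parts (i), (iii) and (v).} Positivity (i) follows from positivity of $x^*$. For fixed $\eta$ and $\pi$, $y^*$ is an increasing function of $x^*$. So (iii) and (v) follow from parts (iii) and (v) of \cref{lem:complement-br-effort}, including constancy in $s$ when $\eta=1$.

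\textbf{Part (ii).} Here the factor $1/\pi$ works against us, so I will use the explicit expression. Since $(x^*)^2\propto\pi^{2/(2-\eta)}$, we have $y^*\propto\pi^{2/(2-\eta)-1}=\pi^{\eta/(2-\eta)}$. The exponent is positive for $\eta\in(0,1]$, so $y^*$ is increasing in $\pi$.

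\textbf{Part (iv).} This is the only step needing care; the risk is a sign or algebra error in the derivative. I will differentiate $\log y^*=2\log x^*-\log\eta-\log\pi$ in $\eta$, writing $h\equiv\log\bigl(s/(\pi\eta q^2)\bigr)$. The proof of \cref{lem:complement-br-effort} gives
\[ \parfrac{\log x^*}{\eta}=\frac{1}{(2-\eta)^2}\left(\frac{2-\eta}{\eta}-h\right). \]
Hence
\[ \parfrac{\log y^*}{\eta}=\frac{2}{\eta(2-\eta)}-\frac{1}{\eta}-\frac{2h}{(2-\eta)^2}=\frac{(2-\eta)-2h}{(2-\eta)^2}. \]
This is negative if and only if $h>1-\eta/2$, that is, $s/(\pi\eta q^2)>\exp(1-\eta/2)$. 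Rearranging gives exactly~\eqref{eq:complement-br-output-intensity-condition}. Otherwise the derivative is non-negative, so $y^*$ is non-decreasing in $\eta$.

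Setting $q=1$ then yields \cref{lem:br-output}.
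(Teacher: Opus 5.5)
Your proposal is correct and follows essentially the same route as the paper. You obtain $y^*=(x^*)^2/(\pi\eta)$ from the first-order condition; the paper goes via $u^*+(x^*)^2/2$, which rests on the same substitution. You then read off (i), (iii) and (v) from \cref{lem:complement-br-effort}, and your explicit exponent $\eta/(2-\eta)$ for (ii) and log-derivative $\bigl((2-\eta)-2h\bigr)/(2-\eta)^2$ for (iv) are cosmetic variants of the paper's direct differentiation that give the same signs.
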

\begin{proof}
    Let~$x^*\equiv x^*(s,\eta;q)$ and~$y^*\equiv y^*(s,\eta;q)$ for convenience.
    Then
    \[ y^*
        =\frac{1}{\pi}\left(u^*(s,\eta;q)+\frac{(x^*)^2}{2}\right)
        =\frac{(x^*)^2}{\pi\eta} \]
    by \cref{lem:complement-br-payoff}.
    Parts~(i), (iii), and~(v) follow immediately from \cref{lem:complement-br-effort}.
    For part~(ii), we have
    \begin{align*}
        \parfrac{x^*}{\pi}
        &= x^*\cdot\parfrac{\log(x^*)}{\pi} \\
        &= \frac{x^*}{\pi(2-\eta)}
    \end{align*}
    from the proof of \cref{lem:complement-br-effort} and therefore
    \begin{align*}
        \parfrac{y^*}{\pi}
        &= -\frac{(x^*)^2}{\pi^2\eta}+\frac{2x^*}{\pi\eta}\cdot\parfrac{x^*}{\pi} \\
        &= \frac{(x^*)^2}{\pi^2(2-\eta)},
    \end{align*}
    which is positive because~$x^*>0$ and~$\pi^2(2-\eta)>0$.
    For part~(iv), we have
    \begin{align*}
        \parfrac{x^*}{\eta}
        &= x^*\cdot\parfrac{\log(x^*)}{\eta} \\
        &= \frac{x^*}{(2-\eta)^2}\left(\frac{2-\eta}{\eta}-\log\left(\frac{s}{\pi\eta q^2}\right)\right)
    \end{align*}
    from the proof of \cref{lem:complement-br-effort} and therefore
    \begin{align*}
        \parfrac{y^*}{\eta}
        &= -\frac{(x^*)^2}{\pi\eta^2}+\frac{2x^*}{\pi\eta}\cdot\parfrac{x^*}{\eta} \\
        &= -\frac{(x^*)^2}{\pi\eta^2}+\frac{2(x^*)^2}{\pi\eta(2-\eta)^2}\left(\frac{2-\eta}{\eta}-\log\left(\frac{s}{\pi\eta q^2}\right)\right) \\
        &= \frac{2(x^*)^2}{\pi\eta(2-\eta)^2}\left(-\frac{(2-\eta)^2}{2\eta}+\frac{2-\eta}{\eta}-\log\left(\frac{s}{\pi\eta q^2}\right)\right) \\
        &= \frac{2(x^*)^2}{\pi\eta(2-\eta)^2}\left(1-\frac{\eta}{2}-\log\left(\frac{s}{\pi\eta q^2}\right)\right),
    \end{align*}
    which is negative if and only if~\eqref{eq:complement-br-output-intensity-condition} holds.
\end{proof}
\fi

When skill grows proportionally to \emph{effort}, the mapping from skill levels to first-best effort intensities is continuous (see \cref{lem:fb-intensity}).
In contrast, when skill grows proportionally to \emph{output}, the mapping is discontinuous.
To see why, let~$\ybar=0$ so that the student works on each task~$t\in\Tcal$ regardless of its effort intensity~$\eta_t$.
Suppose his skill grows proportionally to output:
\[ s_{t+1}-s_t=ry^*(s_t,\eta_t). \]
\cref{lem:br-output} implies~$y^*(s_t,\eta_t)$ is non-decreasing in~$s_t$.
So, analogously to \cref{sec:solution}, for each task~$t$ the teacher chooses~$\eta_t\in(0,1]$ to maximize~$y^*(s_t,\eta_t)$.
But the left-hand side of~\eqref{eq:br-output-intensity-condition} is increasing in~$\eta$, which implies~$y^*(s_t,\eta)$ is either U-shaped or decreasing in~$\eta\in(0,1]$.%
\footnote{
The left-hand side of~\eqref{eq:br-output-intensity-condition} rises continuously from zero to~$\sqrt{e}$ as~$\eta$ rises from zero to one.
So there are two cases:
\begin{enumerate}

    \item
    Suppose~$s_t<\pi\sqrt{e}$.
    Then, by \cref{lem:br-output} and the intermediate value theorem, there is a unique effort intensity~$\eta_{\text{min-}y}(s_t)\in(0,1)$ that satisfies
    \[ \eta_{\text{min-}y}(s_t)\exp\left(1-\frac{\eta_{\text{min-}y}(s_t)}{2}\right)=\frac{s_t}{\pi} \]
    and minimizes~$y^*(s_t,\eta)$ over~$\eta\in(0,1]$.
    The output~$y^*(s_t,\eta_t)$ falls from
    \[ \lim_{\eta\to0}y^*(s_t,\eta)=s_t \]
    to its minimum~$y^*(s_t,\eta_{\text{min-}y}(s_t))$ as~$\eta$ rises from zero to~$\eta_{\text{min-}y}(s_t)$, then rises to~$y^*(s_t,1)=\pi$ as~$\eta$ rises to one.
    Thus, the output~$y^*(s_t,\eta)$ is U-shaped in~$\eta$.

    \item
    Suppose~$s_t\ge\pi\sqrt{e}$.
    Then~\eqref{eq:br-output-intensity-condition} holds for all~$\eta\in(0,1)$ and \cref{lem:br-output} implies~$y^*(s_t,\eta)$ is decreasing in~$\eta$.

\end{enumerate}
}
So if the teacher wants to maximize~$y^*(s_t,\eta_t)$, then she prefers effort intensities at the boundaries of~$(0,1]$.
Choosing~$\eta_t=1$ induces output~$y^*(s_t,1)=\pi$, while letting~$\eta_t\to0$ induces output
\[ \lim_{\eta\to0}y^*(s_t,\eta)=s_t. \]
Her preference shifts from choosing~$\eta_t=1$ to letting~$\eta_t\to0$ as~$s_t$ passes through~$\pi$ from below.
Thus, if skill grows with output, then the teacher wants to design qualitatively different courses than those she designs when skill grows with effort: her first-best courses comprise tasks that are fully effort- or skill-intensive, rather than tasks that become more skill-intensive gradually.

However, letting~$\eta_t\to0$ is not feasible because I assume~$\eta_t\in(0,1]$ for each task~$t\in\Tcal$.
This assumption is without loss of generality when skill grows with effort: if~$\eta_t=0$, then the student produces~$s_t$ independently of how much effort he exerts, so he exerts none and builds no skill---an outcome the teacher never wants.
This logic breaks down when skill grows with output: if~$\eta_t=0$, then the student produces~$s_t$ and his skill grows by a factor of~$(1+r)$, so the teacher could raise the student's skill exponentially while requiring no effort from him.
Thus, assuming skill grows with output does more than change the courses the teacher designs; it also severs the link between effort and learning that motivates my model.

\section{Conclusion}
\label{sec:conclusion}

This paper introduces a model of learning-by-doing and course design, and analyzes the impact of AI on optimal course design and a student's skill development.
The model delivers four main results.
First, if the student is never tempted to delegate to AI, then the teacher optimally designs courses that become gradually less effort-intensive and more skill-intensive (\cref{thm:fb-solution}).
Second, if the student can be tempted to delegate, then the teacher must distort the course to be more skill-intensive (\cref{thm:sb-solution}).
Third, the student's overall skill gain rises when he starts with more skill (i.e., ``skill begets skill'') and falls when delegation is more tempting (\cref{thm:sb-overall-gain}).
Fourth, if AI complements effort, then improvements in AI quality make high-skill students build skill faster but low-skill students build skill slower (\cref{thm:complement-sb-effort}).

To make the model tractable, I make several assumptions that could be relaxed in future research.
One is that the student is myopic.
My model maps to a setting where a student cares only about current performance, and not future performance or skill.
Yet many real-world students pursue education precisely because they want to gain skill.
A skill-seeking student would internalize the future benefit of building skill, making them more willing to exert effort.
The teacher would capitalize on this willingness by making tasks more effort-intensive.

Another assumption is that the teacher designs her course for one student with a known initial skill level.
Yet real-world teachers design courses for many students with unknown initial skill levels.
In my model, students with more initial skill gain more skill overall (see \cref{thm:sb-overall-gain}).
So if the teacher designs tasks for students with a distribution of initial skill levels, then she may prefer to forgo low-skill students' development for high-skill students'.
In particular, she may ``leave low-skill students behind'' by allowing them to delegate.
Doing so would loosen the incentive constraints on high-skill students and allow the teacher to optimize the course for such students.
Thus, my model speaks to a debate among real-world teachers: should they teach to the best, median, or worst student?

A third assumption is that the student's motivation~$\pi$ is constant across tasks.
This assumption is not realistic: in real-world courses, some tasks carry more grade weight than others, some tasks are more interesting than others, and students tire as the course progresses.
Letting motivation vary exogenously across tasks would change when the incentive constraint~\eqref{eq:ic-constraint} binds.
It currently binds on early tasks and becomes slack as the student accumulates skill.
If the student's motivation fell across tasks, then he might be tempted to delegate late rather than early.

More interesting is to let motivation vary \emph{endogenously} across tasks: to let the teacher choose the sequence of effort intensities~$(\eta_t)_{t\in\Tcal}$ \emph{and} the sequence of motivation parameters~$(\pi_t)_{t\in\Tcal}$.
Then she can induce effort by designing tasks \emph{or} by rewarding their completion.
Since the student's second-best effort is increasing in~$\pi$ (by \cref{lem:complement-sb-effort}), the teacher would raise~$\pi_t$ without bound unless she faced constraints.
A natural constraint is a fixed budget~$\sum_{t\in\Tcal}\pi_t$ (e.g., representing total grade weight) to allocate across tasks.
Two forces push in opposite directions.
Raising~$\pi_t$ on earlier tasks lets the teacher satisfy the incentive constraint with less course distortion and less reduction in skill development.
However, raising~$\pi_t$ on later tasks lets her capitalize on the student's higher willingness to exert skill-building effort.
The relative sizes of these two forces determine whether the teacher should front- or back-load rewards.
\cite{Shaikh-2025-} shows that front-loading is optimal when course content is cumulative.
Content is cumulative in my model too, in the sense that skill compounds across tasks.
But \citeauthor{Shaikh-2025-}'s teacher does not design tasks and \citeauthor{Shaikh-2025-}'s student cannot delegate to AI.
Whether front-loading is optimal when design and delegation are possible remains an open question.

\clearpage
{%
\small
\raggedright
\bibliographystyle{apalike}
\bibliography{references}
}

\appendix
\crefalias{section}{appendix}
\crefalias{subsection}{appsec}

\makeatletter
\renewcommand\theequation{\thesection.\@arabic\c@equation}
\renewcommand\thefigure{\thesection.\@arabic\c@figure}
\renewcommand\thelemma{\thesection.\@arabic\c@lemma}
\renewcommand\theproposition{\thesection.\@arabic\c@proposition}
\renewcommand\thetheorem{\thesection.\@arabic\c@theorem}
\makeatother

\clearpage
\section{Proofs}
\label{app:proofs}

\setcounter{equation}{0}
\setcounter{figure}{0}
\setcounter{lemma}{0}
\setcounter{proposition}{0}
\setcounter{theorem}{0}

\renewcommand{\theHequation}{app.\thesection.\arabic{equation}}
\renewcommand{\theHfigure}{app.\thesection.\arabic{figure}}
\renewcommand{\theHlemma}{app.\thesection.\arabic{lemma}}
\renewcommand{\theHproposition}{app.\thesection.\arabic{proposition}}
\renewcommand{\theHtheorem}{app.\thesection.\arabic{theorem}}

For each result presented in \crefrange{sec:model}{sec:solution}, I present general versions that allow for the AI quality parameter~$q$ (introduced in \cref{sec:complement}) to exceed one.

\ifbodyproofs\else\fi
\ifbodyproofs\else\fi

\ifbodyproofs\else\fi
\ifbodyproofs\else\fi
\ifbodyproofs\else\fi

\ifbodyproofs\else\fi
\ifbodyproofs\else\fi
\ifbodyproofs\else\fi
\ifbodyproofs\else\fi
\ifbodyproofs\else\fi
\ifbodyproofs\else\fi
\ifbodyproofs\else\fi

\ifbodyproofs\else\fi
\ifbodyproofs\else\fi
\ifbodyproofs\else\fi
\ifbodyproofs\else\fi

\ifbodyproofs\else\fi
\ifbodyproofs\else\fi

\end{document}